\documentclass[letterpaper]{article}
\usepackage[preprint]{aaai2027}
\nocopyright
\usepackage[hyphens]{url}  \usepackage{graphicx}    \usepackage{natbib}  \usepackage{caption} 
\usepackage{amsmath,amssymb,amsthm}
\usepackage{booktabs}
\usepackage{algorithm}
\usepackage{algpseudocode}

\newcommand{\codeavailabilityblock}{
\paragraph{Code.}
Code and independent verification tests are available at
\url{https://github.com/jiaxing-guo/safe-observation-capacity}; the repository also
provides configurations, local experiment entry points, and reporting tools.
}

\graphicspath{{figures/}}

\newtheorem{theorem}{Theorem}
\newtheorem{proposition}{Proposition}
\newtheorem{lemma}{Lemma}
\newtheorem{corollary}{Corollary}
\theoremstyle{definition}
\newtheorem{definition}{Definition}
\newtheorem{example}{Example}

\newcommand{\Xpoly}{\mathcal{X}}
\newcommand{\Ypoly}{\mathcal{Y}}
\newcommand{\Amat}{A}
\newcommand{\Obs}{\mathcal{O}}
\newcommand{\ystar}{y^\star}
\newcommand{\Wval}{W}
\newcommand{\vref}{v_{\mathrm{ref}}}

\newcommand{\rok}{\rho}                       \newcommand{\Sset}{\mathcal{S}(\rok)}          \newcommand{\Cpub}{C^{\mathrm{pub}}}           \newcommand{\Cid}{C^{\mathrm{id}}}              \newcommand{\Nres}{\mathcal{N}_{\mathrm{res}}} \newcommand{\Dpub}{D_{\mathrm{pub}}}           \newcommand{\De}{D_{\mathrm{e}}}               \newcommand{\kcap}{\kappa_{\rok}}              \newcommand{\lrate}{\lambda_{\rok}}            \newcommand{\Gactivepop}{G^{\infty}_{\mathrm{active}}}         \newcommand{\Goracle}{G_{\mathrm{oracle}}}

\ifdefined\AAAIReviewMain
  \newcommand{\suppappendix}[2]{Supplementary App.~#2}
  \newcommand{\suppapp}[2]{Supplementary App.~#2}
  \newcommand{\suppappendices}[4]{Supplementary Apps.~#2 and~#4}
\else
  \newcommand{\suppappendix}[2]{Appendix~\ref{#1}}
  \newcommand{\suppapp}[2]{App.~\ref{#1}}
  \newcommand{\suppappendices}[4]{Appendices~\ref{#1} and~\ref{#3}}
\fi

\title{Safe Observation Capacity for Opponent Exploitation under Showdown Censoring}
\author{
    Jiaxing Guo
}
\affiliations{
    Imperial College London\\
    \texttt{henry.guo23@imperial.ac.uk}
}

\begin{document}
\maketitle

\begin{abstract}
In poker-like games, folds hide private cards, so showdown data are missing not
at random: per-card estimates converge to behavior conditional on reveal, and
shrinking confidence sets can lose coverage.  A floor-safe probe carries a
line to showdown; sequence-form flow then recovers censored fold mass on
reveal-certified histories.  We price acquisition through \emph{safe observation
capacity}, the largest target reach attainable by a floor-safe plan at a given value
slack.  Its frontier is concave and piecewise linear, with initial slope given by the
floor's shadow price.  When that reach converts fully to reveal and parent flow is
non-bottleneck, matching local bounds make the hands required for
conditional-probability half-width $\varepsilon$ inversely proportional, up to
logarithms, to capacity, opponent continuation mass, and $\varepsilon^2$.
Safe Active De-censoring (SAD) combines public screening, an independent reveal batch,
and robust deployment; a max--min safe audit gives positive joint reveal rate to
every coordinate in a finite library-covered target set, including public-null
deviations.
With $10^6$ hands, SAD raises the river over-fold certified gain from $0.485$ to
$0.692$ and improves both gains over public-only collection on all three deviations
(Holm-adjusted paired $p\le0.012$), while selecting no control target.
On a fixed-board public twin, a disjoint audit--refit--deploy loop detects all $30$
simulation seeds and no control seed, certifying absolute value $0.655$ (95\%
confidence-interval half-width $0.008$).
Every floor-constrained probe and response passes a floor audit.
\end{abstract}

\section{Introduction}
\label{sec:intro}

Folds create a structural identification problem in poker-like games.  A fold ends the
hand before private cards are revealed, so passive showdown data cannot generally
separate type-specific fold behavior.  Weak hands may call more often and strong hands
fold more often while producing the same public action frequencies.  Two such
\emph{public twins} are indistinguishable under the collection policy yet can require
different responses; additional passive hands only estimate their shared aggregate more
precisely.

This ambiguity limits safe opponent exploitation.  An equilibrium blueprint supplies a
worst-case value guarantee, and safe refinement preserves that guarantee while adapting
to predictable play
\citep{zinkevich2007regret,bowling2015heads,moravcik2017deepstack,
brown2017safe}.  However, response optimization cannot recover an
unobserved private coordinate.  The usual per-card showdown estimator converges
to behavior conditional on reveal, not to the uncensored behavior; its confidence sets
can therefore lose coverage as the sample grows.

Sequence-form constraints can correct this bias when data come from a uniformly
random non-fold probe \citep{davis2019solving}.  We study the additional requirement
that collection itself preserve worst-case value.  A floor-safe probe carries a
selected line to showdown and exposes every non-fold continuation; sequence-form flow
then recovers censored fold mass without weakening the floor.

Identification leaves a quantitative question: how often can a safe policy produce
the needed observation?  At permitted value slack $\rok$, safe observation capacity
$\kcap(I)$ is the maximum controlled reach of target information set $I$ among
floor-safe plans.  If maximal reach can be made revealing, the informative-event rate
is capacity times residual opponent reach and continuation; otherwise it follows from
the full observation law.  The capacity frontier is concave and piecewise linear,
with initial slope given by the floor's shadow price.

Hands needed for conditional-action half-width $\varepsilon$ scale, up to logarithms,
inversely with this rate and $\varepsilon^2$; a local lower bound matches when parent
flow is learned no slower.

\begin{figure*}[t]
      \centering
      \includegraphics[width=\textwidth]{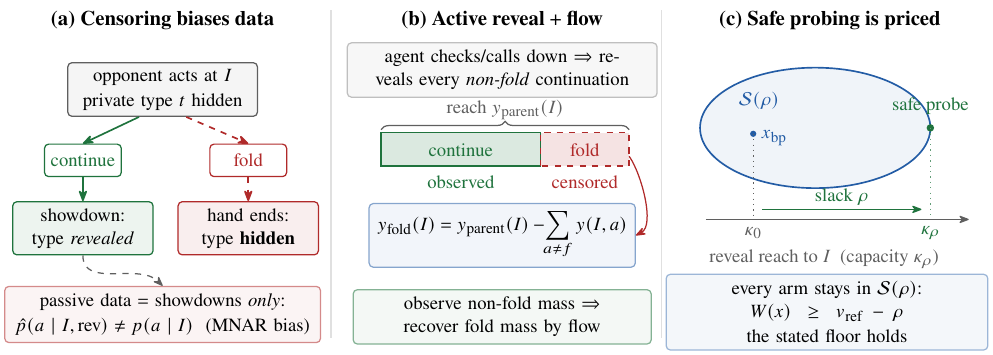}
      \caption{Safe Active De-censoring.
      (a)~A fold makes passive showdown samples missing not at random (MNAR).
      (b)~A reveal probe exposes non-fold continuation masses; the displayed flow
      identity recovers fold mass, with $f$ denoting fold.
      (c)~Every probe lies in
      $\Sset=\{x:\Wval(x)\ge\vref-\rok\}$, where $\Wval(x)$ is worst-case value,
      $\vref$ the blueprint guarantee, and $\rok$ permitted slack.  More slack can
      increase capacity above its zero-slack value $\kappa_0$.}
      \label{fig:mechanism}
\end{figure*}

Safe Active De-censoring (SAD) screens public anomalies, collects an independent
floor-safe reveal batch, and deploys a robust response.  A max--min universal audit
assigns positive rate to every coordinate when its certified library covers the finite
target set, including targets with no public anomaly.

Matched-budget experiments test robust deployment; a fixed-board public twin tests
active discovery; controlled instances test capacity frontiers, and simulated
showdown counts calibrate their predicted reveal rates.  Every floor-constrained probe
and response is audited against the floor.

\paragraph{Contributions.}
\begin{itemize}
  \item We characterize the usual passive showdown estimator and give target-level
        conditions under which floor-safe, reveal-certified flow identifies
        realization mass and positive-support behavior, complementing prior
        payoff-equivalence constraints
        (Thms.~\ref{thm:passive_mnar}--\ref{thm:active_id}).
  \item We introduce safe observation capacity, characterize its floor-priced frontier,
        and derive matching local rates under explicit factorization and parent-flow
        conditions (Thms.~\ref{thm:capacity_law} and~\ref{thm:lower_bound}).
  \item We combine public screening, floor-safe probes, and robust deployment in SAD;
        a universal audit gives positive coverage to predeclared public-null targets.
        Matched-budget and
        fixed-board studies evaluate the resulting certificates
        (Thm.~\ref{thm:portfolio}; \S\ref{sec:exp}).
\end{itemize}

\section{Related Work}
\label{sec:related}

\paragraph{Safe response and model acquisition.}
Equilibrium and safe subgame solving supply modern poker blueprints and responses
\citep{zinkevich2007regret,tammelin2014solving,bowling2015heads,
moravcik2017deepstack,brown2017safe,brown2019superhuman}.
Safe exploitation, including SES and OX-Search, refines responses to estimated models
\citep{mccracken2004safe,johanson2007robust,johanson2009data,
ganzfried2015safe,jeary2023safe,liu2022safe,ge2024safe}.
\citet{hoehn2005effective} select among equal-value poker strategies for faster
learning; \citet{davis2019solving} use a random non-fold probe for unbiased sequence
and showdown constraints.  We instead enforce a worst-case floor for every probe,
optimize target-specific reveal rate, and derive local sample-cost bounds;
response-refinement methods are complementary at deployment, not acquisition baselines
\citep{wang2011balancing}.

\paragraph{Identification under selective observation.}
Opponent models provide posteriors and response portfolios
\citep{southey2005bayes,ganzfried2011game,bard2013online}.  Recent consistency
guarantees assume identifiability and visitation \citep{ganzfried2025consistent};
passive models cannot remove missing-not-at-random selection without additional
structure \citep{little2019statistical}; our probes safely create revealing visitation.
Observation fibers connect this failure to partial
identification and monitoring \citep{manski2003partial,bartok2014partial}.
Selective-label methods account for decisions governing visibility
\citep{lakkaraju2017selective,wei2021decision,yang2022debiasing}.  Here interventions
must also preserve worst-case game value, and sequence-form flow recovers censored mass
on reveal-certified histories.

\paragraph{Pricing information under a value floor.}
Conservative learning, budgeted control, and safe policy-evaluation collection constrain
acquisition cost
\citep{wu2016conservative,yang2022reduction,wan2022safe,
boutilier2016budget,mukherjee2024saver}; pure exploration and controlled sensing price
information \citep{garivier2016optimal,camilleri2022active,
nitinawarat2015controlled,chernoff1959sequential}.  Our constraint is a global value
floor over a sequence-form polytope.  Under directional factorization,
reveal-certified flow and a local lower bound match the capacity upper rate when
parent flow is non-bottleneck.  Neither MNAR
conditioning nor parametric linear-program sensitivity alone links the floor to per-target sample
cost.

\section{Preliminaries and Problem Setup}
\label{sec:prelim}

We study repeated play in a finite two-player zero-sum extensive-form game with
perfect recall.  The agent starts from a certified blueprint and may adapt within a
fixed worst-case value floor.

\subsection{Sequence Form and the Safety Floor}
The sequence form \citep{vonstengel1996efficient} assigns a realization probability
to each action sequence.  Let $x\in\Xpoly$ and $y\in\Ypoly$ be the agent and opponent
realization plans in their sequence-form polytopes, and let $\Amat$ be the payoff
matrix, so $U(x,y)=x^\top\Amat y$.
The blueprint $x_{\mathrm{bp}}$ guarantees
$\vref=\min_{y\in\Ypoly}x_{\mathrm{bp}}^\top\Amat y$.  For safety budget
$\rok\ge0$ (per-policy value slack), define
$\Wval(x)=\min_{y\in\Ypoly}x^\top\Amat y$ and
\begin{equation}
\Sset=\{x\in\Xpoly: \Wval(x)\ge \vref-\rok\}.
\label{eq:safeset}
\end{equation}
Sequence-form duality makes $\Sset$ a polytope (\suppappendix{app:proofs}{A}).
Every deployed plan, including every probe, lies in $\Sset$.  Thus model error and
exploration may reduce exploitation value, but every selected plan retains worst-case
expected value at least $\vref-\rok$.

\subsection{Observation Protocol and Fiber}
Let $\Obs_x(y)$ denote the observation law under agent plan $x$ and opponent $y$.
The collection plan $x$ determines which opponent decisions the agent can reach.
Write $H$ for a public history and
$\mathcal I(H)=\{I_t=(H,t)\}$ for its type-specific opponent information sets.
Thus $\operatorname{pub}(I_t)=H$; throughout, $I$ includes the private type.
Full monitoring reveals $(I,a)$ at every opponent decision.  Under \emph{showdown
censoring}, the public history is observed, but a private type is revealed only at
showdown; a fold ends the hand and hides that type (Fig.~\ref{fig:mechanism}).

The \emph{public stream} records all public actions as reach-weighted frequencies.
The \emph{showdown stream} records private types only in hands that reach showdown.
Because the chosen action affects whether showdown occurs, normalizing the second
stream conditions on a selected event.  A passive fixed policy therefore need not
recover the uncensored behavior (\S\ref{sec:thy:passive}).
An active agent may instead use a plan in $\Sset$ to carry a selected line to
showdown.  Such a deviation is a \emph{floor-safe reveal probe}.

\begin{example}[running example]
\label{ex:run}
At public history $H$, the opponent's private type is $t\in\{H,L\}$ with prior
$\tfrac12$, giving information sets $I_H$ and $I_L$.  Facing a bet, it folds
($f$), hiding $t$, or calls ($c$), revealing $t$ at showdown.  Passive showdown
data therefore contain no folds.
\end{example}

\paragraph{Observation fiber.}
For a fixed protocol and collection plan $x$, define
$\mathcal F_x(y)=\{y'\in\Ypoly:\Obs_x(y')=\Obs_x(y)\}$.  Opponents in one fiber are
observationally indistinguishable and must receive the same deployed response.
The floor in \eqref{eq:safeset}, by contrast, quantifies over all of $\Ypoly$ and is
independent of the observation protocol.

\paragraph{Assumptions.}
The following conditions apply.
\par\noindent\textbf{(A1) Stationary sampling.} The opponent is fixed and nonadaptive.
Within each frozen collection batch, chance and behavioral randomization are
independent across hands.
\par\noindent\textbf{(A2) Known chance.} Chance probabilities and type priors are known,
as in a solved abstraction.
\par\noindent\textbf{(A3) Safe reveal-controllability.} A floor-safe suffix preserves the
target's maximal controlled reach and carries every non-fold continuation to a
type-revealing terminal, with no later non-revealing exit.  Last-decision and
all-in lines often satisfy this condition; other lines require a target-level
audit (Def.~\ref{def:class}).
\par\noindent\textbf{(A4) Residual payoff-nullity.} Every flow-preserving perturbation
$d$ invisible to the selected public-plus-active operator satisfies
$x^\top\Amat d=0$ for all $x\in\Sset$.  For fold-terminal residuals, it suffices that
fold payoff is type-independent and the operator fixes every safe plan's aggregate
fold mass.

Table~\ref{tab:guarantee_scope} summarizes the conditions for each guarantee.

\begin{table}[t]
\centering
\small
\begin{tabular}{@{}p{.25\columnwidth}p{.68\columnwidth}@{}}
\toprule
Claim & Conditions \\
\midrule
Floor & None beyond deployment in $\Sset$ (Thm.~\ref{thm:safety}). \\
Identification & (A2)--(A3) on prefix-closed certified targets; pooling also needs (A1)
(Thm.~\ref{thm:active_id}). \\
Per-target rate & Certified target, full-plan directional factorization, a
local public-preserving direction, and non-bottleneck parent flow
(Thms.~\ref{thm:capacity} and~\ref{thm:lower_bound}). \\
Universal audit & Certified rates with positive joint coverage
(Thm.~\ref{thm:portfolio}). \\
Value recovery & Nested exact-pin coverage at total hand budget $N_{\rm tot}$;
oracle comparison additionally needs payoff-relevant coverage or payoff-null residuals
(Thm.~\ref{thm:value_recovery}). \\
\bottomrule
\end{tabular}
\caption{Guarantees and their conditions.}
\label{tab:guarantee_scope}
\end{table}

\subsection{Public and Active Confidence Sets}
The agent maintains two nested confidence sets over $\Ypoly$.  The public set $\Cpub$
uses intervals for reach-weighted public action frequencies.
Because folds are public, $\Cpub$ constrains each aggregate frequency.  It cannot determine
how censored fold mass splits among the hidden types behind a public state.
A reveal probe aimed at $H$ pins non-fold mass only at those
$I\in\mathcal I(H)$ reached with positive reveal probability.  Adding these constraints yields
$\Cid\subseteq\Cpub$.  After $N$ hands, finite samples replace each exact realization-mass pin by a
simultaneous interval around its unbiased estimate.  For fixed opponent $\ystar$ and
failure probability $\delta$, $\Pr\{\ystar\in\Cid(N)\}\ge1-\delta$; the exact-pin
population set is
the zero-width limit.  Under a frozen probe, each reveal row is a Bernoulli mean over
all $N_R$ reveal hands, with expectation equal to known agent reach times opponent
realization mass.  Bonferroni empirical-Bernstein intervals over all rows give joint
coverage conditional on the independent pilot
(\suppappendix{app:pubconf}{B}).

\subsection{Problem Statement}
The agent repeatedly faces $\ystar$ and deploys $x_t\in\Sset$ each
round.  Any reveal probe also lies in $\Sset$.
The objective is to maximize $x_t^\top\Amat\ystar-\vref$ while certifying the value
recovered from active evidence.

\section{Theory}
\label{sec:theory}

The value floor is independent of the data-driven selector.  The statistical analysis
links passive bias, active identification, per-target observation cost, and universal
auditing.  Full proofs are in \suppappendix{app:proofs}{A}.

\subsection{Safety Is Selector-Independent}
\label{sec:thy:floor}
\begin{theorem}[Floor preservation]
\label{thm:safety}
Let $\mathcal A\subseteq\Sset$.  Any measurable selector that chooses $x\in\mathcal A$
using any model or observed data satisfies
\[
x^\top\Amat y\ge\vref-\rok\qquad\text{for every }y\in\Ypoly.
\]
The same guarantee holds in conditional and cumulative expectation against adaptive,
nonanticipating opponents, without stationarity or an opponent model.
\end{theorem}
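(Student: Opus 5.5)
The plan is to prove the pointwise inequality first, deterministically for each realized plan, and then lift it to expectations. For any $x\in\mathcal A\subseteq\Sset$, the definition \eqref{eq:safeset} gives $\Wval(x)=\min_{y'\in\Ypoly}x^\top\Amat y'\ge\vref-\rok$. Hence $x^\top\Amat y\ge\Wval(x)\ge\vref-\rok$ for every $y\in\Ypoly$, since a minimum over $\Ypoly$ lower-bounds each member. This step uses nothing about how $x$ was produced. The selector, model, and data enter only through which point of $\mathcal A$ is realized, so the bound holds for every outcome $\omega$ and every realization $x(\omega)$.

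For the stochastic statement, I would formalize play as a filtration $(\mathcal H_t)$ generated by past observations and internal randomization. The agent's choice $x_t$ is $\mathcal H_{t-1}$-measurable and takes values in $\mathcal A$. A nonanticipating opponent chooses $y_t\in\Ypoly$ possibly depending on $\mathcal H_{t-1}$, but not on the agent's current private randomization beyond what is already fixed in $x_t$. Because $\Ypoly$ is convex, I would identify any behavioral or mixed opponent choice at round $t$ with an effective realization plan $y_t\in\Ypoly$. The expected per-round payoff given the history is then $x_t^\top\Amat y_t$. By the first step, this is at least $\vref-\rok$ almost surely. Taking conditional expectations gives $\mathbb E[x_t^\top\Amat y_t\mid\mathcal H_{t-1}]\ge\vref-\rok$. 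The tower property then gives the unconditional bound. Summing over $t=1,\dots,T$ (or averaging) gives the cumulative statement $\mathbb E\sum_t x_t^\top\Amat y_t\ge T(\vref-\rok)$. No stationarity or opponent model is used, because the bound is uniform in $y$.

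The main care point is the adaptive-opponent step. The opponent's round-$t$ mixed strategy must be a point of $\Ypoly$ that is fixed conditionally on $\mathcal H_{t-1}$. If the opponent could condition on the agent's within-round realized pure plan, the bilinear expectation would not factor. Nonanticipation rules this out.

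There is also a subtler point. Within a hand, the opponent's later decisions may depend on the agent's earlier public actions in that hand. That dependence is already encoded in sequence form, since $y$ prescribes behavior at every opponent information set. So conditional on the history, the expected hand payoff is exactly $x_t^\top\Amat y_t$ for the induced $y_t$. I would check this by writing the conditional law of the opponent's behavioral strategy and invoking perfect recall and Kuhn's theorem to map it to a realization plan in $\Ypoly$.

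Measurability of the selector is needed only so that these conditional expectations are well defined.
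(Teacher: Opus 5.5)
Your proposal is correct and follows essentially the same route as the paper: the pointwise bound $x^\top\Amat y\ge\Wval(x)\ge\vref-\rok$ for any $x\in\mathcal A$, then a history-conditional expected payoff equal to $x_t^\top\Amat y_t$ (the opponent may depend on $x_t$ but not on the within-hand pure realization), followed by the tower property and summation over rounds. The only cosmetic difference is that you average a randomized opponent choice into a single effective plan in $\Ypoly$ by convexity, whereas the paper conditions on $y_t$ first and integrates afterwards; the two are equivalent.
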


Every reveal probe in $\Sset$ therefore preserves the floor.  For such probes, we
characterize the recoverable information and its observation rate.

\subsection{Passive Censoring Is Inconsistent}
\label{sec:thy:passive}
Fix a type-specific opponent information set $I=(H,t)$ and action $a$, with
$P(R\mid I)>0$, where $R$ denotes showdown.

\begin{theorem}[Passive showdown estimation is biased]
\label{thm:passive_mnar}
Under \textnormal{(A1)} and a fixed collection policy, assume the number of retained
observations at $I$ diverges.  The estimator that normalizes only over hands reaching
showdown converges almost surely to
\[
\begin{gathered}
\hat p_N(a\mid I)\longrightarrow p(a\mid I,R),\\[2pt]
p(a\mid I,R)=
\frac{P(R\mid I,a)}{P(R\mid I)}p(a\mid I).
\end{gathered}
\]
It recovers the uncensored action distribution on its positive support if and only if
$P(R\mid I,a)$ is constant over that support.  Any coordinate on which the selection
ratio $P(R\mid I,a)/P(R\mid I)$ differs from one and $p(a\mid I)>0$ has nonzero
asymptotic bias; a
two-sided interval centered on this estimator, whose random width converges to zero in
probability, then has coverage of the true coordinate tending to zero.
\end{theorem}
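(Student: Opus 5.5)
The plan is to write the normalized showdown estimator as a ratio of two empirical frequencies and apply the strong law of large numbers to each. Under a fixed collection policy and (A1), hands are i.i.d. For each hand $n$, let $Z_n^I$ indicate that $I$ is reached and showdown $R$ occurs. Let $Z_n^{I,a}$ indicate that, in addition, action $a$ was taken at $I$. Then
\[
\hat p_N(a\mid I)=\frac{\sum_{n\le N}Z_n^{I,a}}{\sum_{n\le N}Z_n^{I}}.
\]
Dividing numerator and denominator by $N$, the SLLN gives almost-sure limits $P(\text{reach }I,a,R)$ and $P(\text{reach }I,R)$. The denominator limit is positive because $P(R\mid I)>0$ and the retained count diverges; in the i.i.d. setting, divergence forces positive reach probability. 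Continuity of the ratio then yields almost-sure convergence to
\[
\frac{P(\text{reach }I)\,p(a\mid I)\,P(R\mid I,a)}{P(\text{reach }I)\,P(R\mid I)},
\]
which is $p(a\mid I,R)$ by Bayes' rule. Here we use $P(R\mid I)=\sum_{a'}p(a'\mid I)P(R\mid I,a')$, and the reach factor cancels. If the retained count is only assumed to diverge along a random subsequence, the same argument can be run with the hand index restricted to hands that reach $I$. Those hands are i.i.d. given reach, so we can apply the SLLN to the retained sequence directly.

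For the ``if and only if'' part, note that $P(R\mid I)$ is a $p(\cdot\mid I)$-weighted average of $P(R\mid I,a')$ over the positive support. If $P(R\mid I,a)\equiv c$ on the support, then $c=P(R\mid I)$, so every ratio equals one and the limit equals $p(a\mid I)$. Conversely, suppose the limit equals $p(a\mid I)$ for every $a$ in the support. Since each such $p(a\mid I)>0$, the ratio must equal one there, so $P(R\mid I,a)=P(R\mid I)$ is constant. This same display gives the bias statement: the asymptotic bias at $a$ is
\[
p(a\mid I)\left(\frac{P(R\mid I,a)}{P(R\mid I)}-1\right),
\]
which is nonzero exactly when $p(a\mid I)>0$ and the selection ratio differs from one.

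For coverage, let $\beta\ne0$ be the bias and let $w_N\to0$ in probability be the random half-width. The true coordinate $p:=p(a\mid I)$ is covered only if $|\hat p_N-p|\le w_N$. Because $\hat p_N\to p+\beta$ almost surely, we have $|\hat p_N-p|\to|\beta|$ almost surely. The event $\{|\hat p_N-p|\le w_N\}$ is contained in
\[
\{\,|\hat p_N-p-\beta|>|\beta|/2\,\}\cup\{\,w_N\ge|\beta|/2\,\},
\]
and both events have probability tending to zero. If the interval is asymmetric but two-sided and centered on $\hat p_N$, we bound it by its larger half-width and the argument is unchanged.

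The main obstacle is interpretive rather than technical: making precise ``retained observations diverge'' and the i.i.d. structure of retained samples under a fixed policy. The cleanest route is the unconditional hand-level SLLN above, with positive reach probability implied by divergence (Borel--Cantelli). One subtlety also needs noting: actions $a$ with $P(R\mid I,a)=0$, such as folds, have limit zero. This is consistent with the formula and produces bias $-p(a\mid I)$ when $p(a\mid I)>0$.
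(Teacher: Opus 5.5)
Your proposal is correct and follows the paper's argument: Bayes' rule gives the selection ratio, the strong law is applied to the showdown-retained observations, and a vanishing-width interval centered at a limit distinct from the truth eventually excludes the true coordinate. Your ratio-of-hand-level-averages formulation, the weighted-average argument for the `if and only if', and the two-event containment for coverage fill in details that the paper states in one line each.
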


\emph{Proof.}  Bayes' rule gives the displayed selection ratio.  The strong law applies
to the retained observations; a shrinking interval centered at a distinct limit
eventually excludes the true coordinate.  \qed

More passive data only concentrate around the selected estimand.  In
Example~\ref{ex:run}, $R$ is the call, so
$\hat p(f\mid I_H),\hat p(f\mid I_L)\to0$, regardless of the true fold rates.

\subsection{Active Reveal Identifies Payoff-Relevant Behavior}
\label{sec:thy:active}
Active identification requires a target-level certificate.

\begin{definition}[Reveal certificate and certified class]
\label{def:class}
A target $I$ has a \emph{reveal certificate} at budget $\rok$ if a maximal-reach
floor-safe plan has a floor-safe extension preserving that reach and carrying every
non-fold continuation to a type-revealing terminal, with no later non-revealing exit.
The \emph{certified class} $\mathcal C_\rok$ contains these targets.  With (A2), a
safe-reachable member identifies non-fold children; prefix closure supplies the parent
and flow the fold mass.  Positive parent mass is needed only for normalization.
Pooling also requires (A1), and the lower bound a censored-fiber direction
(Def.~\ref{def:harddir}).
\end{definition}

For an opponent information set $I$, let $c_I\ge0$ denote the vector of known chance
weights for agent sequences leading to $I$, with opponent factors omitted.  The
\emph{controlled reach}
$\omega_x(I):=c_I^\top x$ is a linear reach weight determined by the agent and chance,
rather than the joint visitation probability.  Its safe observation capacity is
\[
\kcap(I)=\max_{x\in\Sset}\omega_x(I).
\]
Let $H_Ix=0$ denote whole-information-set flow equalities that force every
non-fold continuation at $I$ to a revealing terminal, and define
\[
\kappa^{\mathrm{rev}}_\rok(I)
=\max\{\omega_x(I):x\in\Sset,\ H_Ix=0\}.
\]
We set $\kappa^{\mathrm{rev}}_\rok(I)=0$ if this feasible set is empty.
For $I\in\mathcal C_\rok$, the certificate gives
$\kappa^{\mathrm{rev}}_\rok(I)=\kcap(I)$; otherwise $\kcap(I)$ is only
controlled-reach capacity.

Let $E_I$ be the event that the type is revealed after a non-fold continuation at $I$, and
$r_I(x,y)=\Pr_{x,y}(E_I)$.  \emph{Reveal factorization} (RF) means that some
$\pi_y(I)\in[0,1]$, independent of $x$, satisfies
$r_I(x,y)\le\omega_x(I)\pi_y(I)$ for every $x\in\Sset$, with equality for
reveal-forcing plans.  Here $\pi_y(I)$ is residual opponent reach and non-fold
mass.  Set $\lrate(I;y)=\max_{x\in\Sset}r_I(x,y)$.  Then
$\kappa^{\mathrm{rev}}_\rok(I)\pi_y(I)
\le\lrate(I;y)\le\kcap(I)\pi_y(I)$, with equality throughout for
$I\in\mathcal C_\rok$.

A reveal-certified target is \emph{safe-reachable} when $\kcap(I)>0$.  The identified
region $\mathcal R_\rok$ is the largest prefix-closed subset of
$\{I\in\mathcal C_\rok:\kcap(I)>0\}$.

\begin{lemma}[Flow recovers a censored action]
\label{lem:flow}
Let $\sigma(I)$ denote the opponent sequence reaching $I$, with child
$\sigma(I)a$.  Suppose the parent mass $y_{\sigma(I)}$ is identified and every non-fold mass
$y_{\sigma(I)a}$ is observed.  Then
\[
y_{\sigma(I)f}=y_{\sigma(I)}-\!\sum_{a\neq f}y_{\sigma(I)a}.
\]
If $y_{\sigma(I)}>0$, division by it recovers the behavior probabilities at $I$.
\end{lemma}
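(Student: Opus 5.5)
The plan is to read the identity straight off the sequence-form constraints that define $\Ypoly$ and then divide. By perfect recall, each opponent information set $I$ has one sequence $\sigma(I)$ leading to it. The opponent's realization plan satisfies the flow equality
\[
y_{\sigma(I)}=\sum_{a\in A(I)}y_{\sigma(I)a}
\]
for every $y\in\Ypoly$, where $A(I)$ is the action set at $I$. This is exactly the condition from \citet{vonstengel1996efficient} that makes $\Ypoly$ a polytope of realization plans. I would first note that fold $f$ belongs to $A(I)$; otherwise the fold mass is zero and the claim is trivial. Since exactly one child is $f$, isolating it gives
\[
y_{\sigma(I)f}=y_{\sigma(I)}-\sum_{a\neq f}y_{\sigma(I)a}.
\]

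The second step checks that every quantity on the right is known. The parent mass $y_{\sigma(I)}$ is identified by hypothesis. In the intended application it comes from prefix closure of $\mathcal R_\rok$: the parent target is itself certified, so its own flow identity, or the root normalization $y_\emptyset=1$, supplies it. Each non-fold mass $y_{\sigma(I)a}$ is observed by hypothesis. Under (A2)--(A3), a reveal-forcing plan with $\omega_x(I)>0$ has observable reveal frequency for $(I,a)$ equal to the known controlled reach $\omega_x(I)$ times the opponent realization mass. Dividing by that known reach gives $y_{\sigma(I)a}$. This justifies the word ``observed'' but is not needed inside the lemma. So $y_{\sigma(I)f}$ is a known linear function of identified quantities, and therefore identified.

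For behavior probabilities, I would use the standard correspondence $y_{\sigma(I)a}=y_{\sigma(I)}\,p(a\mid I)$. When $y_{\sigma(I)}>0$, this gives $p(a\mid I)=y_{\sigma(I)a}/y_{\sigma(I)}$ for every $a$, including $f$. When $y_{\sigma(I)}=0$, every child mass is zero, which is why only the masses are identified there and positivity is required only for normalization.

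No step is a real obstacle. The one point that needs care is stating that the flow constraint is a structural property of every realization plan, not something learned, so it holds for the true $\ystar$ under any collection plan. Because of that, the censoring of folds in the showdown stream does not affect this deduction.
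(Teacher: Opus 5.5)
Your proof is correct and follows the paper's argument: rearrange the sequence-form flow equality $y_{\sigma(I)}=\sum_{a\in A(I)}y_{\sigma(I)a}$ to isolate the fold child, then normalize by a positive parent mass to get the behavior probabilities. Your extra remarks on where the parent and non-fold masses come from (prefix closure, and division by known controlled reach) match how the paper uses the lemma in the proof of Theorem~\ref{thm:active_id}, but the lemma itself does not need them.
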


In Example~\ref{ex:run}, the probe reveals each call mass; flow supplies each fold.

\begin{theorem}[Active identification by flow]
\label{thm:active_id}
Under \textnormal{(A2)--(A3)}, a population collection mixture assigning positive mass
to a reveal certificate for every $I\in\mathcal R_\rok$ identifies any fixed opponent's
realization masses there, and its behavior probabilities on positive-parent support.
Pooling rounds additionally needs (A1).
\end{theorem}

\emph{Proof sketch.}  Let $d$ be the difference of two observationally equivalent
realization plans.  Each certified non-fold coordinate is a positive-coefficient row of
the active operator, hence vanishes in $d$.  Known root mass starts an induction:
prefix closure supplies each parent, and Lemma~\ref{lem:flow} forces its fold child.
Thus $d=0$ throughout $\mathcal R_\rok$.  \qed

Full identification at $H$ requires every relevant $I\in\mathcal I(H)$ in
$\mathcal R_\rok$; public rows pool over $H$, whereas active pins remain type-specific.
Outside this region, (A4) makes residuals payoff-null
(\suppappendices{app:necessity}{D.2}{app:residual}{D.3}).

\subsection{The Safe Observation Capacity Frontier}
\label{sec:thy:capacity}
For a reveal-certified target, capacity is computed by a single linear program (LP)
over $\Sset$: the certified suffix makes every reached non-fold continuation revealing
while preserving maximal controlled reach and the value floor.

\begin{theorem}[Safe per-target reveal rate]
\label{thm:capacity}
If $I\in\mathcal C_\rok$, \textnormal{(A1)--(A2)} and \textnormal{(RF)} hold, then
against an opponent with residual reach-and-continuation mass
$\pi_{\ystar}(I)$, the maximal reveal rate is
$\lrate(I;\ystar)=\kcap(I)\pi_{\ystar}(I)$.  On a local class with positive parent margin,
let $\lambda_{\rm par}(I)$ be any rate for which
a prefix certificate estimates $y_{\sigma(I)}$ to half-width
$O(\sqrt{\log(1/\delta)/(N\lambda_{\rm par}(I))})$, taking
$\lambda_{\rm par}(I)=\infty$ when it is known, and set
$\lambda_{\rm joint}(I)=\min\{\lrate(I;\ystar),\lambda_{\rm par}(I)\}$.  Conditional-probability
half-width $\varepsilon$ then holds with probability $1-\delta$ after
$N=O(\log(1/\delta)/(\lambda_{\rm joint}(I)\varepsilon^2))$ hands, where the constants
depend on the support margin.  In particular, the rate is governed by
$\lrate(I;\ystar)$ when
the parent certificate is no slower.
\end{theorem}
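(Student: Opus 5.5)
The proof has two parts: an exact rate identity, and a sample-complexity bound obtained from concentration and propagated through the flow ratio of Lemma~\ref{lem:flow}.

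\emph{Rate identity.} Under (RF), every $x\in\Sset$ satisfies $r_I(x,\ystar)\le\omega_x(I)\pi_{\ystar}(I)\le\kcap(I)\pi_{\ystar}(I)$. Taking the maximum over $x$ gives the upper bound. For the matching lower bound, I use the reveal certificate of $I\in\mathcal C_\rok$. It supplies a floor-safe plan $x^\star$ attaining $\omega_{x^\star}(I)=\kcap(I)$ that also satisfies $H_Ix^\star=0$, so $\kappa^{\mathrm{rev}}_\rok(I)=\kcap(I)$. Because $x^\star$ is reveal-forcing, (RF) holds with equality and gives $r_I(x^\star,\ystar)=\kcap(I)\pi_{\ystar}(I)$. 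Hence $\lrate(I;\ystar)=\kcap(I)\pi_{\ystar}(I)$, which is the sandwich stated before Lemma~\ref{lem:flow}. By (A2), $\kcap(I)$ is a known LP value, so this rate is computable from the model.

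\emph{Sample complexity.} I freeze $x^\star$ as the reveal probe. By (A1), each hand independently yields the Bernoulli indicator of the event ``type revealed after non-fold action $a$ at $I$.'' Its mean is $\omega_{x^\star}(I)\,y^\star_{\sigma(I)a}$, and by (A2) the factor $\omega_{x^\star}(I)=\kcap(I)$ is known. Dividing the empirical frequency by $\kcap(I)$ gives an unbiased estimate of $y^\star_{\sigma(I)a}$. The total revealing mean is at most $\lrate(I;\ystar)$, so the variance of each row is $O(\lrate(I;\ystar))$ before rescaling.

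I then apply Bernstein's inequality with a union bound over the finitely many actions at $I$ (constants absorbed). This gives half-width $O(\sqrt{\pi_{\ystar}(I)\log(1/\delta)/(N\kcap(I))})$ on the non-fold masses, plus a lower-order $\log(1/\delta)/N$ term. I normalize by the parent mass, which is of order $\pi_{\ystar}(I)$ on the positive-margin local class. Then the error in each non-fold conditional probability is $O(\sqrt{\log(1/\delta)/(N\lrate(I;\ystar))})$.

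The parent certificate supplies $y_{\sigma(I)}$ to half-width $O(\sqrt{\log(1/\delta)/(N\lambda_{\rm par}(I))})$. When the parent is known, $\lambda_{\rm par}(I)=\infty$ and that term vanishes. Applying Lemma~\ref{lem:flow}, the fold mass is the parent mass minus the non-fold sum, so its error is the sum of the two errors.

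For the ratio $\hat y_{\sigma(I)a}/\hat y_{\sigma(I)}$, I write $\hat p-p=(\hat y_a-p\hat y_{\sigma})/\hat y_{\sigma}$. On the event that $\hat y_{\sigma(I)}\ge y_{\sigma(I)}/2$, which is guaranteed once $N$ exceeds a margin-dependent threshold, the error is at most a constant times the sum of both half-widths. That sum is $O(\sqrt{\log(1/\delta)/(N\lambda_{\rm joint}(I))})$. Setting this equal to $\varepsilon$, splitting $\delta$ between the two events, and solving gives $N=O(\log(1/\delta)/(\lambda_{\rm joint}(I)\varepsilon^2))$. If $\lambda_{\rm par}(I)\ge\lrate(I;\ystar)$, then $\lambda_{\rm joint}(I)=\lrate(I;\ystar)$, which gives the final claim.

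\emph{Main obstacle.} The delicate step is the ratio and normalization bookkeeping. The natural scale of the non-fold mass error is $\sqrt{\pi_{\ystar}(I)/(N\kcap(I))}$. Turning this into a $1/\lrate(I;\ystar)$ rate for conditional probabilities needs the parent mass to be comparable to $\pi_{\ystar}(I)$, and the Bernstein variance must be used rather than Hoeffding's inequality. I would handle this by making the support margin explicit: assume $y_{\sigma(I)}\ge c\,\pi_{\ystar}(I)$ for a fixed $c>0$ and fold $c$ into the constants, as the statement permits. The burn-in condition $N\gtrsim\log(1/\delta)/(\lambda_{\rm joint}(I)\cdot\text{margin}^2)$ is likewise absorbed into those constants.
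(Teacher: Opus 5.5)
Your proof is correct and follows essentially the paper's own route. The rate identity comes from the (RF) upper bound combined with the reveal certificate. The sample bound $N=O(\log(1/\delta)/(\lambda_{\rm joint}(I)\varepsilon^2))$ then follows from concentrating the frozen capacity-achieving probe's reveal indicators, dividing by the known reach, applying the Lipschitz ratio bound under the parent margin, and taking a union bound.

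One simplification: the comparability $y_{\sigma(I)}\ge c\,\pi_{\ystar}(I)$ that you flag as the main obstacle holds automatically with $c=1$. Sequence-form flow gives $\pi_{\ystar}(I)=\sum_{a\ne f}y^\star_{\sigma(I)a}\le y^\star_{\sigma(I)}$, so your normalization step needs no extra assumption. The paper instead absorbs the label shares and the margin $m_I$ into the local constant.
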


Every reveal-certified target with positive unconstrained reach is safe-reachable at
positive budget; a constructive mixture bound appears in
\suppapp{app:capacity_rate}{A.4}.

\begin{theorem}[Safe controlled-reach frontier]
\label{thm:capacity_law}
The map $\rok\mapsto\kcap(I)$ is non-decreasing, concave, and piecewise linear.  At
every budget, each optimal floor multiplier is a global supergradient.  The frontier
starts at $\kappa_0(I)=\max_{x\in\mathcal S(0)}\omega_x(I)$ and saturates at the
unconstrained reach
$\omega_{\max}(I):=\max_{x\in\Xpoly}\omega_x(I)$.  With
$\mu_I=\partial_+\kcap(I)|_{\rok=0}$,
\[
\kcap(I)\le
\min\{\omega_{\max}(I),\;\kappa_0(I)+\mu_I\rok\}.
\]
Equality holds on the maximal initial affine interval $[0,\rok_1]$, where
$\rok_1>0$ (and may be unbounded if the frontier is already saturated); later slopes
are non-increasing
(\suppapp{app:capacity_law}{A.6}).
\end{theorem}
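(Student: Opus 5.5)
The plan is to write $\kcap(I)$ as the optimal value of a parametric linear program whose right-hand side depends affinely on $\rok$, and then apply standard parametric LP duality. By sequence-form duality (as in \suppappendix{app:proofs}{A}), $\Wval(x)\ge\vref-\rok$ is equivalent to the existence of an opponent-side dual vector $z$ satisfying finitely many linear inequalities $E^\top z\le \Amat^\top x$ (with $E$ the opponent's sequence-form constraint matrix, and the right-hand side vector $e$ below), together with $e^\top z\ge \vref-\rok$. So
\[
\kcap(I)=\max\{c_I^\top x:\ x\in\Xpoly,\ E^\top z\le \Amat^\top x,\ e^\top z\ge\vref-\rok\},
\]
a feasible, bounded LP for every $\rok\ge0$. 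Feasibility holds because $x_{\mathrm{bp}}$ together with its best-response dual attains $\rok=0$. Boundedness holds because $\Xpoly$ is compact and $c_I\ge0$. Only the right-hand side of the single floor row depends on $\rok$, and it does so affinely.

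Monotonicity follows from $\mathcal S(\rok)\subseteq\mathcal S(\rok')$ for $\rok\le\rok'$. For concavity, take optimizers at $\rok_a$ and $\rok_b$. Since $\Wval$ is concave (a minimum of linear functions), their convex combination lies in $\mathcal S(\theta\rok_a+(1-\theta)\rok_b)$, and $\omega_x(I)$ is linear, which gives concavity. For piecewise linearity, use LP duality. The dual of the LP above has a feasible region independent of $\rok$, and its objective is affine in $\rok$ through the floor multiplier $\mu\ge0$. Hence $\kcap(I)=\min_{k}(\alpha_k+\mu_k\rok)$ over the finitely many dual vertices, which is valid because the dual is feasible and bounded by strong duality. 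A minimum of finitely many affine functions is concave and piecewise linear, with finitely many breakpoints. The same representation gives the supergradient claim: if $(\alpha^\star,\mu^\star)$ is dual-optimal at $\rok_0$, then for every $\rok\ge0$ we have $\kcap(I)\le\alpha^\star+\mu^\star\rok=\kcap[\rok_0](I)+\mu^\star(\rok-\rok_0)$, because that vertex remains dual-feasible.

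For the endpoints, $\kappa_0(I)$ is the definition evaluated at $\rok=0$. For saturation, $\Sset\subseteq\Xpoly$ gives $\kcap(I)\le\omega_{\max}(I)$. Conversely, $\Wval$ is bounded below on the compact $\Xpoly$, so taking $\rok\ge\vref-\min_{x\in\Xpoly}\Wval(x)$ makes $\Sset=\Xpoly$, and equality follows. For the displayed bound, concavity implies that the right derivative $\mu_I$ at $0$ exists and dominates all later slopes, so $\kcap(I)\le\kappa_0(I)+\mu_I\rok$; combining with the saturation bound gives the $\min$. Because the function has finitely many affine pieces, the first piece occupies an interval $[0,\rok_1]$ with $\rok_1>0$ (unbounded if there are no breakpoints). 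On that interval $\kcap(I)=\kappa_0(I)+\mu_I\rok$, and this is $\le\omega_{\max}(I)$, so it equals the $\min$. Later slopes are non-increasing by concavity.

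The step I expect to need the most care is the duality reformulation. I must verify that the LP dual's feasible set is genuinely independent of $\rok$ and has finitely many vertices, and that the dual attains its optimum for every $\rok\ge0$ (primal feasibility and boundedness handle this). I must also check that the right derivative at $0$ is finite and equals a multiplier. I would argue the latter as follows: finitely many pieces give $\mu_I=\min$ over optimal dual vertices at $0$ of $\mu_k$, and this is finite because every dual vertex is finite.
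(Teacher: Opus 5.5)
Your proposal is correct and follows essentially the paper's argument: feasible-set inclusion gives monotonicity, mixing optimal plans with concavity of $\Wval$ gives concavity, and parametric LP duality gives piecewise linearity, the global supergradient property, and first-face equality with $\mu_I$ the smallest optimal origin multiplier. The duality step works because the dual feasible region does not depend on $\rok$ and the objective is affine in $\rok$ through the floor multiplier. Your finite dual-vertex representation $\kcap(I)=\min_k(\alpha_k+\mu_k\rok)$ simply makes explicit the basis-change fact the paper cites, and your threshold $\rok\ge\vref-\min_{x\in\Xpoly}\Wval(x)$ sharpens the paper's ``eventually reaches $\omega_{\max}(I)$''.
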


Mixing feasible plans together with their budgets proves concavity.  Finitely many LP
bases give piecewise linearity, and each active floor dual supplies the supporting
slope of its face.
Pricing information across an interval requires the maximal reveal-forced capacity
$\kappa_\rho^{\mathrm{rev}}(I)$ to equal $\kcap(I)$ throughout that interval.

On the first face, a \emph{wall target} has $\kappa_0(I)=0$ and
$\kcap(I)=\mu_I\rok$.  Under (RF), a non-bottleneck parent, and capacity equality,
the floor dual determines both marginal reach and per-target information cost.
In Example~\ref{ex:run}, the blueprint does not reach $I$, so halving $\rok$ doubles
the required probes on the first face.

\subsection{Local Directional Observation Cost}
\label{sec:thy:lower}
\refstepcounter{definition}
\label{def:harddir}
\paragraph{Definition \thedefinition\ (Censored-fiber direction).}
Fix a public history $H$ and a base $\bar y$ with parent mass at least
$2\tau>0$ at each
affected $I\in\mathcal I(H)$.  A \emph{censored-fiber direction} is a feasible,
nonzero tangent $h$ on continue/fold children.  It fixes each parent flow, preserves
all public transcripts and prior-weighted continuation mass, and has positive
interior support on affected revealed labels.  Let $E_h$ be the event that an
affected non-fold label is observed.  For all sufficiently small $\varepsilon>0$, the
one-hand laws under $\bar y\pm\varepsilon h$ agree off $E_h$ for every
$x\in\Sset$.  Normalize
$\max_{\sigma(I)a\in\operatorname{supp}(h)}
|h_{\sigma(I)a}|/\bar y_{\sigma(I)}=1$.

Let $\Ypoly_\tau(H)\subseteq\Ypoly$ contain opponents whose affected parent masses
are at least $\tau$.  For $y\in\Ypoly_\tau(H)$, write
$p_y(a\mid I)=y_{\sigma(I)a}/y_{\sigma(I)}$ and let $p_H(y)$ collect the affected
conditional action probabilities.  Define
$d_H(y,y')=\lVert p_H(y)-p_H(y')\rVert_\infty$.  For
$C\subseteq\Ypoly_\tau(H)$, set
$\operatorname{wid}_H(C)=\tfrac12\sup_{y,y'\in C}d_H(y,y')$, and choose
$\varepsilon_0>0$ inside the stated support margin.  Thus $\varepsilon$ is a
conditional-action-probability half-width and
$d_H(\bar y+\varepsilon h,\bar y-\varepsilon h)=2\varepsilon$.

Define the direction's maximal safe informative-event rate
\[
\Lambda_\rok(h;y):=\max_{x\in\Sset}\Pr_{x,y}(E_h).
\]
The full-operator rate accommodates distinct reach weights across constituent
information sets.

\begin{theorem}[Local censored-fiber lower bound]
\label{thm:lower_bound}
For a censored-fiber direction $h$ at $H$, the alternatives satisfy
$d_H(\bar y+\varepsilon h,\bar y-\varepsilon h)=2\varepsilon$.  For
$0<\varepsilon\le\varepsilon_0$ and $0<\delta\le1/8$, any adaptive floor-safe observer
returning $C\subseteq\Ypoly_\tau(H)$ and satisfying
$\Pr_y\{y\in C,\operatorname{wid}_H(C)<\varepsilon\}\ge1-\delta$ at both alternatives
requires
$N=\Omega(\log(1/\delta)/(\Lambda_\rok(h;\bar y)\varepsilon^2))$.
If $\Pr_{x,y}(E_h)=\omega_x(I_0)\pi_y(I_0)$ for all $x\in\Sset$, then
$\Lambda_\rok(h;y)=\kcap(I_0)\pi_y(I_0)$, also equaling $\lrate(I_0;y)$
when $E_h$ and $E_{I_0}$ have identical rates.  If the parent flow is known
or certified at no slower rate, and $I_0$ is reveal-certified with
$\kappa^{\mathrm{rev}}_\rok(I_0)=\kcap(I_0)$, the bounds match in capacity,
accuracy, and confidence up to simultaneous-coverage and local-support factors,
which are not uniform at the support boundary.
\end{theorem}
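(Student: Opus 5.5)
The proof is a standard two-point (Le Cam) argument combined with a change-of-measure bound for adaptive sampling, in the style of bandit lower bounds. Write $y_\pm=\bar y\pm\varepsilon h$. Both are feasible for $\varepsilon\le\varepsilon_0$ and lie in $\Ypoly_\tau(H)$, because the parent mass is at least $2\tau$ and $h$ fixes parent flow. Since $h$ fixes each parent $y_{\sigma(I)}$, we have $p_{y_\pm}(a\mid I)=p_{\bar y}(a\mid I)\pm\varepsilon h_{\sigma(I)a}/\bar y_{\sigma(I)}$. The normalization $\max|h_{\sigma(I)a}|/\bar y_{\sigma(I)}=1$ then gives $d_H(y_+,y_-)=2\varepsilon$ directly.

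Next I would reduce the confidence requirement to a test. If $C$ contains the true opponent and $\operatorname{wid}_H(C)<\varepsilon$, then $C$ cannot contain both $y_+$ and $y_-$, since that would force $\operatorname{wid}_H(C)\ge\varepsilon$. Define the test $\psi=\mathbf 1\{y_+\in C\}$. It errs with probability at most $\delta$ under each alternative. The Bretagnolle--Huber inequality then gives
\[
\mathrm{KL}(P_+^N\,\|\,P_-^N)\ge\log\frac{1}{4\delta}=\Omega(\log(1/\delta)),
\]
using $\delta\le 1/8$.

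The main step is bounding the KL divergence for an adaptive observer. Let $x_n\in\Sset$ be the plan chosen at hand $n$, a function of the past. The divergence decomposition (chain rule) gives
\[
\mathrm{KL}(P_+^N\|P_-^N)=\sum_{n}\mathbb E_+\big[\mathrm{KL}(\Obs_{x_n}(y_+)\|\Obs_{x_n}(y_-))\big].
\]
The key fact is that the one-hand laws agree off $E_h$ for every $x\in\Sset$. Hence the per-hand KL depends only on outcomes inside $E_h$. Within $E_h$, the observed affected non-fold labels have probabilities of the form $q(1\pm\varepsilon\eta)$ with $|\eta|\le1$ on the support. The fold complement is not in $E_h$, and its mass is matched because public transcripts and prior-weighted continuation mass are preserved. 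Using the interior-support margin, a $\chi^2$ bound then gives per-hand KL at most $c\,\varepsilon^2\Pr_{x_n,y_+}(E_h)$. The constant $c$ depends on the support margin, which is why the bound is not uniform at the boundary.

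Replacing $y_+$ by $\bar y$ in the rate costs a factor $1+O(\varepsilon)$, and each term is at most $\Lambda_\rok(h;\bar y)$ because $x_n\in\Sset$. Combining, $N c\varepsilon^2\Lambda_\rok(h;\bar y)\gtrsim\log(1/\delta)$, which is the claimed bound. I expect the hard part to be making the per-hand KL bound uniform over all $x\in\Sset$. This requires that the ratio of outcome probabilities inside $E_h$ stays bounded independently of $x$. The ratio depends only on opponent factors, since agent and chance weights cancel, and that is what makes the bound uniform.

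For the remaining claims, the identity $\Pr_{x,y}(E_h)=\omega_x(I_0)\pi_y(I_0)$ lets me maximize over $\Sset$. Because $\pi_y(I_0)$ does not depend on $x$, this gives $\Lambda_\rok=\kcap(I_0)\pi_y(I_0)$. Equality with $\lrate(I_0;y)$ follows from (RF) and the certificate bound $\kappa^{\mathrm{rev}}_\rok\pi\le\lrate\le\kcap\pi$ once the two rates coincide. For matching, the condition $\kappa^{\mathrm{rev}}_\rok(I_0)=\kcap(I_0)$ places $I_0$ in the equality case. Theorem~\ref{thm:capacity} then supplies the upper bound $O(\log(1/\delta)/(\lrate\varepsilon^2))$ when $\lambda_{\rm par}\ge\lrate$. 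Comparing the two bounds shows they agree in $\kcap$, $\varepsilon$ and $\log(1/\delta)$, up to the Bonferroni factor and the support-margin constants.
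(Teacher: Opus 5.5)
Your proposal is correct and follows the paper's proof essentially step for step: the same two-point test $\mathbf 1\{y_+\in C\}$, Bretagnolle--Huber giving $\mathrm{KL}\ge\log(1/(4\delta))$, the per-hand bound from off-event equality plus the interior margin, the KL chain rule over adaptive plans in $\Sset$, and maximization of the factorized rate over $\Sset$ for the capacity identification and the matching comparison. Two small refinements. First, the per-label relative perturbation is bounded by a margin-dependent constant $K_h$, not by $1$, because the normalization is relative to the parent mass. Second, no $1+O(\varepsilon)$ correction is needed: off-event equality and linearity in $y$ give $\Pr_{x,\bar y\pm\varepsilon h}(E_h)=\Pr_{x,\bar y}(E_h)$ exactly.
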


\emph{Proof sketch.}  A valid set of half-width below $\varepsilon$ induces a test
between $\bar y+\varepsilon h$ and $\bar y-\varepsilon h$ with both errors at most
$\delta$.  Bretagnolle--Huber therefore requires Kullback--Leibler (KL) divergence at least
$\log(1/(4\delta))$.

For every safe one-hand plan, the laws agree off $E_h$ and, for a finite local
constant $C_h$, have KL at most $C_h\varepsilon^2\Pr(E_h)$.  Adaptive probes remain
in $\Sset$, so the KL chain rule gives total KL at most
$C_hN\Lambda_\rok(h;\bar y)\varepsilon^2$.  Combining the bounds proves
the rate.  The alternatives share their parent masses; hence the lower and upper rates
match when the parent certificate is no slower
(\suppapp{app:lowerbound}{A.5}).  \qed

Theorems~\ref{thm:capacity} and~\ref{thm:lower_bound} isolate the same bottleneck:
the rate at which a floor-safe plan generates informative reveals.  Under the
directional factorization in Theorem~\ref{thm:lower_bound} and a non-bottleneck
parent certificate, this rate is
$\kcap(I)\pi_{\ystar}(I)$; otherwise the full observation operator determines it
through $\Lambda_\rok(h;\ystar)$.

\subsection{Certified Value Recovery}
\label{sec:thy:value}
Identification becomes operational when a smaller opponent set improves the robust
response.  The next result separates the population value recovered by active
evidence from its finite-sample uncertainty.

\begin{theorem}[Certified value recovery]
\label{thm:value_recovery}
Let $C_\infty^{\mathrm{id}}$ be the exact-pin population active set containing
$\ystar$, and let
$\Gactivepop=\max_{x\in\Sset}\min_{y\in C_\infty^{\mathrm{id}}}x^\top\Amat y-\vref$.
Define
$V_N=\max_{x\in\Sset}\min_{y\in\Cid(N_{\rm tot})}x^\top\Amat y$, and let $x_N$
attain it.
If $C_\infty^{\mathrm{id}}\subseteq\Cid(N_{\rm tot})$ with probability at least
$1-\delta$, define
\[
\eta=\max_{x\in\Sset}\max_{y,y'\in\Cid(N_{\rm tot})}x^\top\Amat(y-y').
\]
The response $x_N$ is floor-safe; its certified gain $V_N-\vref$ lies in
$[\Gactivepop-\eta,\Gactivepop]$ on this nesting event, and its true-opponent
expected gain is at least
$\Gactivepop-\eta$.  If exact pins cover every payoff-relevant direction, or every
residual direction is payoff-null on $\Sset$ (in particular under (A4)), then
$\Gactivepop=\max_{x\in\Sset}x^\top\Amat\ystar-\vref$.
\end{theorem}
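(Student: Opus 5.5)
The proof has four parts: floor safety, the two-sided bound on the certified gain, the true-opponent guarantee, and the oracle identity. Throughout, write $g(x,C)=\min_{y\in C}x^\top\Amat y$.

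\emph{Floor safety.} Since $x_N\in\Sset$, Theorem~\ref{thm:safety} applies to the data-driven selector $N_{\rm tot}\mapsto x_N$. This gives $x_N^\top\Amat y\ge\vref-\rok$ for every $y\in\Ypoly$, with no nesting event needed.

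\emph{Upper bound $V_N-\vref\le\Gactivepop$.} I will work on the nesting event $C_\infty^{\mathrm{id}}\subseteq\Cid(N_{\rm tot})$. Because $g(x,\cdot)$ is antitone in the set, $g(x,\Cid(N_{\rm tot}))\le g(x,C_\infty^{\mathrm{id}})$ for every $x\in\Sset$. Maximizing both sides over $\Sset$ gives the bound.

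\emph{Lower bound $V_N-\vref\ge\Gactivepop-\eta$.} Let $x^\ast$ attain $\Gactivepop$; it exists by compactness of the polytope $\Sset$ and continuity. Fix any $y\in\Cid(N_{\rm tot})$ and pick $y'\in C_\infty^{\mathrm{id}}\subseteq\Cid(N_{\rm tot})$. Then
\[
x^{\ast\top}\Amat y = x^{\ast\top}\Amat y' - x^{\ast\top}\Amat(y'-y)\ge g(x^\ast,C_\infty^{\mathrm{id}})-\eta,
\]
since both $y$ and $y'$ lie in $\Cid(N_{\rm tot})$ and $x^\ast\in\Sset$. Minimizing over $y$ gives $V_N\ge g(x^\ast,\Cid)\ge\Gactivepop+\vref-\eta$.

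\emph{True-opponent guarantee.} Still on the nesting event, $\ystar\in C_\infty^{\mathrm{id}}\subseteq\Cid(N_{\rm tot})$. Hence $x_N^\top\Amat\ystar\ge g(x_N,\Cid)=V_N\ge\vref+\Gactivepop-\eta$.

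\emph{Oracle identity.} Write $C=C_\infty^{\mathrm{id}}$. It suffices to show that $x^\top\Amat y=x^\top\Amat\ystar$ for every $x\in\Sset$ and $y\in C$. Then $g(x,C)=x^\top\Amat\ystar$, and maximizing over $\Sset$ gives $\Gactivepop+\vref=\max_{x\in\Sset}x^\top\Amat\ystar$.

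Take $y\in C$ and set $d=y-\ystar$. Because exact pins are equalities of the public-plus-active operator evaluated at $\ystar$, $d$ lies in its kernel. Both points are realization plans, so $d$ also satisfies the homogeneous sequence-form flow constraints, i.e.\ $d$ is flow-preserving. Two cases then remain. If the pins cover every payoff-relevant direction, then $d$ has no payoff-relevant component, which by definition means $x^\top\Amat d=0$ on $\Sset$. If instead every residual direction is payoff-null, (A4) applies verbatim to $d$.

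\emph{Main obstacle.} The delicate step is this last one: making precise that membership in the exact-pin population set forces $d$ into the flow-preserving, operator-invisible subspace that (A4) refers to. This requires $C_\infty^{\mathrm{id}}$ to be defined exactly as $\{y\in\Ypoly:\text{pinned rows of }y\text{ equal those of }\ystar\}$, so I would fix that definition (citing \suppappendix{app:pubconf}{B}) before arguing. The earlier parts are routine min--max manipulations.
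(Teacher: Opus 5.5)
Your proposal is correct and follows the paper's proof essentially step for step. You use set inclusion for $V_N\le V_\infty$, swap through a point of $C_\infty^{\mathrm{id}}\subseteq\Cid(N_{\rm tot})$ to pay at most $\eta$ for the lower bound, use coverage of $\ystar$ for the true-opponent bound, and invoke Theorem~\ref{thm:safety} for the floor. Your oracle identity, writing each $y\in C_\infty^{\mathrm{id}}$ as $\ystar+d$ with $d$ in the flow-preserving, operator-invisible residual space and applying payoff-nullity pointwise, is the argument of Prop.~\ref{prop:payoff_boundary}(i); the paper handles the full-coverage case through Theorem~\ref{thm:active_id} rather than by definition.
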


\subsection{Universal Auditing}
\label{sec:thy:portfolio}
Fix $m$ certified coordinates and $K$ floor-safe probes.  Let $R_{ij}\ge0$ be a
certified lower bound on the informative-event rate for coordinate $i$ under probe
$j$, including controlled reach, continuation, and reveal.
For $d\ge1$, let
$\Delta_d$ denote the probability simplex over $d$ coordinates.

\begin{theorem}[Safe universal audit]
\label{thm:portfolio}
For a probe mixture $\alpha\in\Delta_K$, the universal audit rate satisfies
\[
q_{\rm aud}:=\max_{\alpha\in\Delta_K}\min_i(R\alpha)_i
=\min_{u\in\Delta_m}\max_j u^\top R_{\cdot j}.
\]
Here $u\in\Delta_m$ weights coordinates, and the maximizing mixture is floor-safe.
If the audit receives fraction $\beta\in(0,1]$ of $N_{\rm tot}$ hands,
$q_{\rm aud}>0$, and simultaneous $(1-\delta)$-coverage intervals have radii at most
$C\sqrt{\log(2m/\delta)/(\beta N_{\rm tot}q_{\rm aud})}$ for some $C>0$, the audit
discovers every coordinate differing from its reference by at least a specified
separation $\Delta_{\rm sig}>0$ with probability $1-\delta$ once
$N_{\rm tot}>4C^2\log(2m/\delta)/(\beta q_{\rm aud}\Delta_{\rm sig}^2)$.
\end{theorem}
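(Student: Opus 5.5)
The plan has three parts: a minimax identity, a floor-safety check, and a concentration argument for discovery.

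\emph{Minimax identity.} The function $(\alpha,u)\mapsto u^\top R\alpha$ is bilinear on the product of the compact convex simplices $\Delta_K\times\Delta_m$. Von Neumann's minimax theorem therefore gives $\max_\alpha\min_u u^\top R\alpha=\min_u\max_\alpha u^\top R\alpha$. The two inner optimizations are linear over a simplex, so each is attained at a vertex:
\[
\min_u u^\top R\alpha=\min_i (R\alpha)_i,
\qquad
\max_\alpha u^\top R\alpha=\max_j u^\top R_{\cdot j}.
\]
This yields the displayed equality, and compactness guarantees that a maximizer $\alpha^\star$ exists.

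\emph{Floor safety.} Interpret the mixture as the realization plan $x_{\alpha}=\sum_j\alpha_j x_j$, or equivalently as hand-level randomization over the probes. Since $\Sset$ is a polytope, hence convex, $x_\alpha\in\Sset$. Alternatively, $W$ is concave as a minimum of linear functions, so
\[
W(x_\alpha)\ge\sum_j\alpha_j W(x_j)\ge\vref-\rok .
\]
Theorem~\ref{thm:safety} then extends the guarantee to any data-driven choice of $\alpha$.

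We also need the rate to be linear in the mixture. Informative events are determined by realization weights, and certified lower bounds are preserved under mixing: selecting probe $j$ with probability $\alpha_j$ per hand gives an event probability $\sum_j\alpha_j r_{ij}\ge(R\alpha)_i$. Hence every coordinate has rate at least $q_{\rm aud}$ under $\alpha^\star$.

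\emph{Discovery.} The audit runs with $\beta N_{\rm tot}$ hands and per-coordinate rate at least $q_{\rm aud}$. By hypothesis the simultaneous intervals have radius $r\le C\sqrt{\log(2m/\delta)/(\beta N_{\rm tot}q_{\rm aud})}$ with joint probability $1-\delta$. The stated condition
\[
N_{\rm tot}>\frac{4C^2\log(2m/\delta)}{\beta q_{\rm aud}\Delta_{\rm sig}^2}
\]
is exactly equivalent to $r<\Delta_{\rm sig}/2$.

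On the coverage event, consider a coordinate whose true value differs from its reference by at least $\Delta_{\rm sig}$. Its interval contains the true value and has total width $2r<\Delta_{\rm sig}$. It therefore cannot contain the reference, and the coordinate is flagged. This holds for all such coordinates simultaneously.

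The main obstacle is not the minimax step but justifying that the hypothesized radius applies at rate $q_{\rm aud}$. The count of informative events per coordinate is random, so we must argue that the certified intervals are stated in terms of the guaranteed rate lower bound. Because the theorem takes the radius bound as a hypothesis, I will invoke it directly and only note the monotonicity in rate: a larger true rate gives intervals no wider than the bound.
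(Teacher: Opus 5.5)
Your proposal is correct and follows the paper's proof essentially step for step. Floor safety comes from convexity of $\Sset$, and linearity of root mixing gives $p_i(x_{\rm aud})\ge(R\alpha^\star)_i\ge q_{\rm aud}$; the sample-size condition then forces the hypothesized radius below $\Delta_{\rm sig}/2$, so on the simultaneous coverage event every separated coordinate's interval excludes its reference. The only difference is cosmetic: you invoke von Neumann's minimax theorem for the max--min equality, whereas the paper cites LP duality for the same matrix-game LP.
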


Convexity preserves the floor under mixing; LP duality gives the equality.

\section{Safe Active De-censoring}
\label{sec:scope}

Safe Active De-censoring (SAD) batches discovery, floor-safe collection,
confidence-set fitting, and robust deployment.  Public screening targets visible
deviations; a universal audit covers predeclared public-null targets.  All plans
remain in $\Sset$.

\subsection{Screen, Probe, and Respond}

\paragraph{Screening with public data.}
Let $\hat p_H$ be the action frequencies from a blueprint pilot and $p_H^{\rm bp}$ the
known blueprint reference.  With $n_H$ visits, action set $A_H$, $J$ tested action
coordinates, and screening failure budget $\delta_r$, SAD uses the simultaneous lower
bound
\[
\widehat D_{\rm scr}(H)=
\left[\tfrac12\lVert\hat p_H-p_H^{\rm bp}\rVert_1
-\tfrac{|A_H|}{2}\sqrt{\frac{\log(2J/\delta_r)}{2n_H}}\right]_+,
\]
where $[z]_+=\max\{z,0\}$.
Set $\widehat D_{\rm scr}(H)=0$ when $n_H=0$.
Under the reference opponent, this removes noise-only deviations simultaneously with
probability at least $1-\delta_r$.  Because public data cannot rank the type-specific
members of $\mathcal I(H)$, a positive score routes every reveal-certified member.
An audit-enabled instantiation reserves a fixed share for the max--min mixture in
Theorem~\ref{thm:portfolio}, whose active intervals can exclude the reference when all
public scores vanish.

\paragraph{Certified routing.}
For a floor-safe library $\{x_j\}_{j=1}^K$, let $M_{Ij}$ mark reveal certificates,
$\Omega_{Ij}=M_{Ij}\omega_{x_j}(I)$, and $R_{Ij}$ lower-bound the informative rate;
uncertified entries receive zero credit.  With screen weights
$w\ge0$ and library mixture $\alpha\in\Delta_K$, the route maximizes
$w^\top\Omega\alpha$.  Public-null coverage reserves a fixed
max--min $R\alpha$ share (Thm.~\ref{thm:portfolio}); a distinct width allocation is in
\suppapp{app:joint_algorithms}{C.2}.

\begin{algorithm}[b]
\caption{One batch of Safe Active De-censoring.}
\label{alg:sad}
\begin{algorithmic}[1]
\State \textbf{Discover.} Form $\widehat D_{\rm scr}$ from a public pilot; when
public-null coverage is required, use disjoint active reference exclusions from a
fixed max--min audit.
\State \textbf{Route.} Freeze the library, $M$, and $R$; mix the screened route
with any reserved max--min audit share.
\State \textbf{Probe.} Sample the mixture; refit only component-certified rows.
\State \textbf{Refit.} Add simultaneous one-hand reveal-indicator intervals under
the frozen portfolio to $\Cpub$, forming $\Cid$.
\State \textbf{Deploy.} Solve
$\max_{x\in\Sset}\min_{y\in\Cpub\cap\Cid}x^\top\Amat y$, reverting to $\Cpub$
and then the blueprint if necessary.
\end{algorithmic}
\end{algorithm}

Any selection audit is disjoint from the reveal batch constructing $\Cid$, and one
global failure budget is split across rows.  This preserves the coverage condition in
Theorem~\ref{thm:value_recovery} under data-dependent target selection.

\paragraph{Solving the robust response.}
Bucketed instances use one exact LP per confidence set.  On the unbucketed river,
cutting planes call best-response dynamic programs (DPs) without materializing
$\Amat$.  Weak-duality bounds and repair preserve exact floors
(\suppapp{app:fulldeck_solving}{E.1}); solver gaps affect only exploitation
certificates.

\section{Experiments}
\label{sec:exp}

\begin{table*}[t]
\centering
\small
\begin{tabular}{@{}lrr|rr|rrr@{}}
\toprule
& \multicolumn{2}{c}{$\Cpub$} & \multicolumn{2}{c}{Random reveal}
& \multicolumn{2}{c}{SAD} & \\
Opponent & Cert.\ gain & Pop.\ gain & Cert.\ gain & Pop.\ gain & Cert.\ gain & Pop.\ gain & Tgt. \\
\midrule
Equilibrium
& $0.000$ & $0.000$ & $0.000$ & $0.000$ & $0.000$ & $0.000$ & $0$ \\
River over-fold
& $.485\;(.003)$ & $.688\;(<.001)$
& $.588\;(.027)$ & $.765\;(.029)$
& $\mathbf{.692}\;(.002)$ & $.810\;(<.001)$ & $356$ \\
Turn over-fold
& $.045\;(.001)$ & $.080\;(.001)$
& $.071\;(.009)$ & $.138\;(.003)$
& $\mathbf{.093}\;(.001)$ & $.126\;(.002)$ & $90$ \\
Revealed call
& $.232\;(.001)$ & $.310\;(<.001)$
& $.294\;(.024)$ & $.397\;(.098)$
& $\mathbf{.297}\;(.001)$ & $.377\;(.001)$ & $315$ \\
\bottomrule
\end{tabular}
\caption{Matched-budget deployment on the coarse turn--river endgame
($N=10^6$, $\rho=0.5$).  Cert./Pop.\ gains are the robust lower bound/true-opponent
value minus $\vref=-0.0434$
(floor $-0.5434$); parentheses are 95\% Student-$t$ half-widths over 10 matched
seeds, with per-run simultaneous certificate coverage $0.9$.
$\Cpub$ uses $N$ public hands; active arms use a $0.2N/0.8N$ split; \emph{Tgt.}
is SAD's mean retained count of $2160$.  SAD beats $\Cpub$ in all six gain
contrasts (Holm $p\le0.012$) and Random in both certified over-fold contrasts
($p=0.004$); all $164$ cells pass the $10^{-6}$ floor audit.}
\label{tab:sad_e2e}
\end{table*}

\subsection{Experimental Protocol}

The evaluation tests passive-bias correction, sample-split SAD, public-null discovery,
and capacity/reveal calibration in bucketed turn--river endgames, a fixed-board
unbucketed river, and controlled Leduc, bandit, censored-chain, and MDP instances.
The coarse endgame has $5{,}221$ agent and
$9{,}721$ opponent sequences and a sequence-form Nash blueprint solved numerically
to tolerance.  Values are chips with ante $1$; \emph{certified}/\emph{population}
denote the robust-bound/true-opponent gain over $\vref$, except
\S\ref{sec:exp:g1} reports absolute subgame values.

Every arm receives $N$ hands.  Public-only uses $N$ blueprint hands; Random and SAD
split $0.2N$ for a public pilot and $0.8N$ for an independent reveal batch.  SAD
freezes its screened route before reveal; Random uses the same split with a
data-independent route over all $2160$ targets before the same safe weighted-reach
solve.  Arm-local confidence sets use failure budget $\delta=0.1$.  Active rows retain only showdown-committing
continuations, while public rows bound their parent flow.  This is the public-screened
branch of Algorithm~\ref{alg:sad}, with no audit batch;
\S\ref{sec:exp:g1} evaluates the disjoint audit branch.  An independent binary64
best-response audit uses tolerance $10^{-6}$; its minimum signed slack over $574$
controller and ablation records is $-6.53{\times}10^{-8}$.  Unless stated otherwise,
results report two-sided 95\% Student-$t$ confidence intervals (CIs) over 10 matched seeds; paired tests use
exact Wilcoxon signed ranks with Holm correction within each comparison family.
Solver and opponent details are in
\suppappendices{app:fulldeck}{E}{app:setup}{F}.

\subsection{End-to-End Safe Active De-censoring}
\label{sec:exp:deploy}

Table~\ref{tab:sad_e2e} reports the sample-split controller at $N=10^6$.
On the equilibrium control, the simultaneous screen retains no target and SAD returns
the blueprint.  On the river over-fold, it retains about $356$ of $2160$ type-specific
targets and raises the public-only certificate by $0.207$; its population gain rises by
$0.122$ and reaches $0.810$ versus the population safe oracle's $0.828$.  The turn
over-fold and revealed-call certificates rise by $0.048$ and $0.065$.

On both censored over-fold families, SAD's certificate exceeds Random on every matched
seed; both reveal arms improve on the already exposed revealed-call family.  All
$164$ primary and $270$ sweep/ablation cells are feasible, complete, and satisfy the
stated $10^{-6}$ floor-audit tolerance.  At $N=3{\times}10^5$, SAD's river
certificate exceeds $\Cpub$/Random by $0.111/0.134$ (Holm-adjusted paired
$p=0.004$) and its turn certificate exceeds $\Cpub$ by $0.027$.
At the tighter $\rho=0.1$ and $N=10^6$, SAD certifies $0.271$ versus
$0.174/0.234$ on the river and $0.040$ versus $0.021/0.021$ on the turn
(four adjusted comparisons, $p\le0.008$; \suppapp{app:deploy}{D.7}).

With the same split, unscreened capacity-only weights all $2158$ positive-capacity
targets by $\kcap(I)$ and matches Random (Holm-adjusted $p\ge0.316$).  SAD
exceeds it by $0.126\;(.003)/0.093\;(.003)$ at
$N=3{\times}10^5/10^6$ (95\% CI half-widths; adjusted $p=0.006$), isolating
public-anomaly routing.

\subsection{Public-Twin Identification and Discovery}
\label{sec:exp:g1}

\begin{figure}[t]
\centering
\includegraphics[width=\columnwidth]{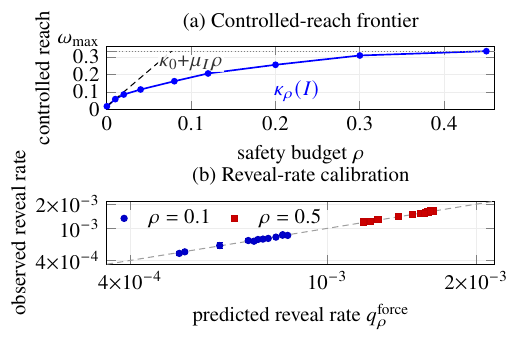}
\caption{Controlled reach and reveal-rate calibration.
(a)~Leduc frontier and origin tangent.
(b)~Five-seed observed and predicted rates for $12$ hold'em suffix targets at
$N=10^6$ (means and 95\% Student-$t$ CIs); dashed is equality.}
\label{fig:price_of_safety}
\end{figure}

The public twin preserves every public aggregate while transferring fold and call
mass between weak and strong hands (\suppapp{app:twin}{E.2}).  This unbucketed
subgame has $1081$ combinations per player and about $1.7\times10^4$ sequences;
a plan in $\Sset$ attains $V=0.815$.

Public evidence certifies only $\vref=-0.065$ at every budget.  Bucketed reveal data
certify $0.684$, $0.756$, and $0.794$ as $N$ grows from $10^5$ to $10^7$, and the
population reveal set recovers at least $96\%$ of the safe-exploitable gap
$V-\vref$.

A prespecified three-program root mixture excludes the twin in $100/100$ seeds
and $0/100$ controls at $N=3{,}000$ (95\% Wilson intervals $[96.3,100]\%$ and
$[0,3.7]\%$; \suppapp{app:public_null_audit}{E.3}).
At $N=10^6$, the same mixture, with a disjoint $0.2N/0.4N/0.4N$
public/audit/refit split and refitting only after exclusion, fires in $30/30$
twin seeds and $0/30$ controls.  It certifies $0.655\;(.008)$ at population
value $0.760\;(.009)$ (95\% CI half-widths), and all $60$
acquisition/response floor audits pass (\suppapp{app:audit_refit}{E.4}).

\subsection{Mechanism Analysis}
\label{sec:exp:passive}
\label{sec:exp:kappa}

\paragraph{Passive bias.}
At $N=10^5$, the showdown-only box conflicts with $\Cpub$ in all $40$ cells,
whereas the reveal-mass set remains feasible in all $120$
(\suppapp{app:coverage}{D.1}), as predicted by Theorem~\ref{thm:passive_mnar}.

\paragraph{Opponent population.}
Across $54$ procedural or undertrained counterfactual-regret-minimization
(CFR\textsuperscript{+}) opponents, public
fibers certify a median $73\%$ of the safe-exploitable gap; solved grouped reveal
fibers give $91\%$ among $51$ opponents (\suppapp{app:zoo}{E.6}).

\paragraph{Capacity and reveal rate.}
The Leduc frontier leaves its origin-dual tangent after the first face
(Fig.~\ref{fig:price_of_safety}a).  Across the $12$ nontrivial-suffix targets in
the predeclared set and $\rho\in\{0.1,0.5\}$, mean rates track
$q_\rho^{\mathrm{force}}:=\kappa^{\mathrm{rev}}_\rho\pi_{\ystar}$: log--log slope
$1.010$, $R^2=0.999$, and median relative error $1.1\%$ at $N=10^6$
(Fig.~\ref{fig:price_of_safety}b).  Simulator-only $\pi_{\ystar}$ is used post hoc,
not by SAD.\@ Across
$N\in\{10^5,3{\times}10^5,10^6\}$, conditional-interval half-width has slope
$-0.501$ against $Nq_\rho^{\mathrm{force}}$ ($R^2=0.959$).

\section{Discussion and Conclusion}
\label{sec:conclusion}

Safe observation capacity quantifies floor-safe access to censored behavior;
reveal-certified flow and SAD turn it into valid evidence and stronger floor-safe
response certificates.

\paragraph{Generative AI Use Disclosure.}
Generative AI assisted manuscript editing and code, figure, and diagnostic refinement.
The authors verified all mathematics, experiments, claims, citations, and conclusions
and retain full responsibility.

\section*{Acknowledgments}
We acknowledge computational resources and support provided by the Imperial College
Research Computing Service (\url{https://doi.org/10.14469/hpc/2232}).

\bibliography{references}

\appendix
\section{Proofs}
\label{app:proofs}

The proofs below concern the sequence-form objects of \S\ref{sec:prelim}; we write
$E_2y=e_2$ for the opponent realization constraints.  Roadmap: A.1 floor
(Thm.~\ref{thm:safety}); A.2 MNAR bias (Thm.~\ref{thm:passive_mnar}); A.3 identification
(Thm.~\ref{thm:active_id}, Prop.~\ref{prop:necessity}, and the A3 boundary); A.4
achievable capacity (Thm.~\ref{thm:capacity}); A.5 the local lower bound and conditional match
(Thm.~\ref{thm:lower_bound}); A.6 the frontier and decoupling
(Thm.~\ref{thm:capacity_law}, Prop.~\ref{prop:decoupling}); and A.7 recovery
(Thm.~\ref{thm:value_recovery}); A.8 proves universal auditing and joint routing
(Thm.~\ref{thm:portfolio}, Prop.~\ref{thm:joint_routing}).

\subsection{Safe Set and Floor Preservation}
The function $\Wval(x)=\min_{y\in\Ypoly}x^\top\Amat y$ is a pointwise minimum of
linear functions of $x$, hence concave and piecewise linear.  The condition
$\Wval(x)\ge\vref-\rok$ is equivalent to the opponent-side minimization LP
\[
\min\{(\Amat^\top x)^\top y:E_2y=e_2,\ y\ge0\}
\]
having value at least $\vref-\rok$.  By LP duality, this holds iff there exists a free
dual vector $\nu$ such that
\[
E_2^\top\nu\le \Amat^\top x,
\qquad
 e_2^\top\nu\ge \vref-\rok .
\]
Thus $\Sset$ is the projection of a polyhedron in $(x,\nu)$ intersected with $\Xpoly$.
The blueprint is feasible at $\rok=0$, so the safe set is nonempty for every budget used
in the paper.

\paragraph{Proof of Theorem~\ref{thm:safety}.}
If $x\in\mathcal A\subseteq\Sset$, then
\[
\min_{y\in\Ypoly}x^\top\Amat y=\Wval(x)\ge\vref-\rok .
\]
Therefore for every opponent plan $y\in\Ypoly$, $x^\top\Amat y\ge\vref-\rok$.  The
selector only chooses a member of $\mathcal A$, so the same bound holds for the deployed
plan.  More generally, let $\mathcal F_{t-1}$ contain the transcript before round $t$,
let $x_t\in\mathcal A$ be any history-measurable selection, and let the opponent choose
any history-dependent $y_t\in\Ypoly$, possibly after observing $x_t$.  If $Z_t$ is the
realized payoff, then
\[
\mathbb E[Z_t\mid\mathcal F_{t-1},x_t,y_t]
=x_t^\top\Amat y_t\ge\vref-\rok .
\]
Conditioning again if $y_t$ is randomized, then summing and applying the tower
property gives
$\mathbb E[\sum_{t=1}^T Z_t]\ge T(\vref-\rok)$.  The argument uses neither an
opponent model nor stationarity. \qed

\subsection{Passive MNAR Estimation}
\paragraph{Proof of Theorem~\ref{thm:passive_mnar}.}
Let $R$ be the reveal event.  By the law of total probability and Bayes' rule,
\[
P(a\mid I,R)=\frac{P(R\mid I,a)P(a\mid I)}{P(R\mid I)}.
\]
The passive showdown estimator averages actions only over samples with $R=1$, so by
the law of large numbers it converges to $P(a\mid I,R)$.  This equals $P(a\mid I)$ on
the positive action support exactly when $P(R\mid I,a)$ is action-independent there.
Otherwise the limit is
biased.  Any confidence interval centered at the estimator whose radius converges to
zero concentrates around the biased limit, so it eventually excludes the true value
$P(a\mid I)$ on the affected coordinate with probability tending to one. \qed

\subsection{Active Identification}
\paragraph{Proof of Lemma~\ref{lem:flow}.}
Sequence-form flow at $I$ gives
\[
y_{\sigma(I)}=\sum_{a\in A(I)}y_{\sigma(I)a}.
\]
If the parent mass is already identified and every non-fold child realization
$y_{\sigma(I)a}$ is observed, the fold mass is the remaining child mass,
\[
y_{\sigma(I)f}=y_{\sigma(I)}-\sum_{a\ne f}y_{\sigma(I)a}.
\]
When $y_{\sigma(I)}>0$, behavioral probabilities follow by normalizing by the parent. \qed

\paragraph{Proof of Theorem~\ref{thm:active_id}.}
Let $y\in\mathbb{R}^{|\Sigma_2|}$ be an opponent realization plan with sequence-form
flow $Ey=e$, $y\ge0$.  The identified region $\mathcal R_\rok$ is the prefix-closed set
defined in \S\ref{sec:thy:active}.
The argument is pointwise in this fixed population plan and uses (A2)--(A3);
(A1) is needed only to estimate the same plan from pooled rounds.
For each non-fold sequence $\sigma(I)a$ with $I\in\mathcal R_\rok$, membership in
$\mathcal C_\rok$ supplies a floor-safe suffix that carries the entire mass
$y_{\sigma(I)a}$ to a type-revealing terminal.  Safe reachability supplies a
collection plan with known coefficient $\omega_x^{\mathrm{rev}}(I)>0$, so the
corresponding event probability is
$\omega_x^{\mathrm{rev}}(I)y_{\sigma(I)a}$.  Hence that realization coordinate is
pinned by a row of the active observation operator $M$.

For two observationally indistinguishable opponents, let
$d$ be the difference of their realization plans.  Then $Ed=0$ and $Md=0$.  Induct on
opponent-sequence depth within the prefix-closed region.  Chance and type priors fix the root.
If $d$ vanishes on the parent sequence $\sigma(I)$, every non-fold child is a row of
$Md$ and hence zero.  Flow then gives
\[
d_{\sigma(I)f}=d_{\sigma(I)}-\sum_{a\ne f}d_{\sigma(I)a}=0.
\]
Thus $d$ vanishes on every realization coordinate in $\mathcal R_\rok$.  When the true
parent mass is zero, nonnegativity and flow pin all children as well.  This proves
identification on the stated region.  Outside it, (A4) makes every global residual
direction payoff-null.  On fold-terminal support, a type-independent conditional
fold payoff together with reach-weighted aggregate invariance for every
$x\in\Sset$ is sufficient.  A public
history $H$ is fully identified only if the induction covers every relevant
$I\in\mathcal I(H)$. \qed

\begin{proposition}[Reveal-controllability cannot be omitted for point identification]
\label{prop:necessity}
There is a finite censored game without reveal-controllability in which two distinct
opponent realization plans induce the same law of public actions, showdown labels,
and realized payoff under every agent plan.
\end{proposition}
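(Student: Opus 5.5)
We will prove Proposition~\ref{prop:necessity} by exhibiting an explicit counterexample: a small game whose structure blocks every agent plan from revealing the opponent's type after a continuation at the target. The natural candidate is a modification of Example~\ref{ex:run}. At public history $H$ the opponent has type $t\in\{H,L\}$ with prior $\tfrac12$ and chooses fold ($f$) or call ($c$) facing a bet. After a call, the hand does not reach a revealing showdown. Instead, it enters a forced non-revealing exit, for example a public chance move or a forced terminal that ends without showing cards. We also make the call-terminal payoff independent of the type. This violates (A3) and the reveal certificate of Def.~\ref{def:class}: no suffix carries the non-fold continuation to a type-revealing terminal. Fold likewise hides the type, and its payoff is type-independent as in the game's convention.

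The next step defines the two opponent plans as public twins. Take $y$ with $p(f\mid I_H)=1$, $p(f\mid I_L)=0$, and $y'$ with $p(f\mid I_H)=0$, $p(f\mid I_L)=1$. Both have aggregate fold mass $\tfrac12$ at $H$, but their realization plans differ ($y_{\sigma(I_H)f}=\tfrac12\ne0=y'_{\sigma(I_H)f}$). We then verify that, for every agent plan $x\in\Xpoly$, the joint law of (public actions, showdown labels, realized payoff) is the same under both plans. The agent's plan influences only whether $H$ is reached and whether the bet is made. Conditional on reaching $H$ and betting, the public action distribution is fold with probability $\tfrac12$ under both plans. No showdown label is ever produced at this node, and payoffs depend only on the public action and on chance, not on the type. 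Hence the one-hand laws coincide, and so do the $N$-hand laws under (A1) or any adaptive agent.

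The main obstacle is ensuring that payoff itself does not leak the type. If terminal payoffs after a call depended on $t$, observed chip outcomes would separate $y$ from $y'$. We handle this by making both the fold and post-call terminals type-independent, for instance by having them split the pot or award a fixed amount. The type then matters only for the structure of the model and leaves the observation law unaffected. We will also check that the game is a legitimate finite perfect-recall extensive-form game, so that it fits the setup of \S\ref{sec:prelim}. This shows that omitting reveal-controllability can destroy point identification even though a flow identity like Lemma~\ref{lem:flow} still holds: the non-fold children are never pinned, so the parent-minus-children argument of Theorem~\ref{thm:active_id} cannot start.
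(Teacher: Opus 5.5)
Your construction is correct and is essentially the paper's: the paper uses a two-type opponent with prior $\tfrac12$ that emits a public signal $b\in\{A,B\}$ and then terminates at payoff zero with no reveal, and compares the type-swapped plans $H\mapsto A,\ L\mapsto B$ and $H\mapsto B,\ L\mapsto A$, just as you swap fold and call between the two types and keep payoffs type-independent. Your fold/call version, with an agent bet decision, is a poker-framed instance of the same argument (uniform public law, no showdown label, payoff a function of the public transcript only), so nothing is missing.
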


\begin{proof}[Proof of Proposition~\ref{prop:necessity}]
Let the opponent type be $t\in\{H,L\}$ with prior $1/2$.  It emits a public signal
$b\in\{A,B\}$ and then terminates at payoff zero without revealing $t$; the agent
cannot force a reveal.

Define $y^A$ by $H\mapsto A,L\mapsto B$ and $y^B$ by
$H\mapsto B,L\mapsto A$.  Under either opponent the public signal is uniform and
there is no showdown label; payoff is also identically zero.  Hence the complete
monitored laws coincide for every agent plan although the two type-conditioned
realization plans differ.  A type-revealing suffix would distinguish the pairs
$(A,H),(B,L)$ from $(A,L),(B,H)$, which is the role of (A3).  Payoff recovery can
require less than point identification precisely when the remaining fiber is
payoff-null, as formalized below.
\end{proof}

\begin{proposition}[Public-history scores rank states, not types]
\label{prop:public_rank}
Any score computed solely from the local public counts at $H$ is $H$-measurable,
hence equal for targets sharing that public state but holding different hidden types.
Let $V(I)$ be any fixed target-level exploitation score and let $P$ be any distribution
over the finite target set.  Then
\[
\begin{aligned}
\operatorname{Var}_P[V(I)]
&=\operatorname{Var}_P\!\big(\mathbb E_P[V(I)\mid H]\big)\\
&\quad+\mathbb E_P\!\big[\operatorname{Var}_P(V(I)\mid H)\big].
\end{aligned}
\]
The first term is between-state variation accessible to an $H$-measurable ranking;
the second is within-state variation that no such ranking can order.
\end{proposition}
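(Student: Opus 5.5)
The proof has two parts: the measurability claim, then the variance identity with its interpretation.

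\emph{Measurability.} Any score computed from the local public counts at $H$ has the form $s=g(n_H(a))_{a\in A_H}$. Under showdown censoring these counts record only public actions, aggregated over every $I_t=(H,t)\in\mathcal I(H)$. The counts are therefore one and the same random vector for every target in $\mathcal I(H)$. So $s(I_t)=s(I_{t'})$ for all types $t,t'$, which means $s$ factors through the map $I\mapsto\operatorname{pub}(I)=H$ and is $H$-measurable. This is immediate from the definition of the public stream; no probabilistic argument is needed. The pair $y^A,y^B$ from Proposition~\ref{prop:necessity} already shows that public counts need not separate types.

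\emph{Variance identity.} Fix a distribution $P$ on the finite target set. Treat $V(I)$ as a bounded random variable and $H=\operatorname{pub}(I)$ as the coarser variable, which is a function of $I$. The law of total variance then gives the displayed decomposition. To verify it, write $m(H)=\mathbb E_P[V\mid H]$ and expand
\[
\mathbb E_P[(V-\mathbb E_PV)^2]=\mathbb E_P[(V-m(H))^2]+\mathbb E_P[(m(H)-\mathbb E_PV)^2].
\]
The cross term vanishes because $\mathbb E_P[V-m(H)\mid H]=0$. Conditional expectations are finite sums, so they are well defined once we restrict to public states $H$ with $P(H)>0$.

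\emph{Interpretation.} Let $r$ be any $H$-measurable ranking or score. Within each fiber $\{I:\operatorname{pub}(I)=H\}$, $r$ is constant, so it assigns the same rank to every target in that fiber. It can order targets only through their $H$-level, i.e.\ through the between-state term $\operatorname{Var}_P(m(H))$. One way to make this precise is that the best $H$-measurable predictor of $V$ is $m(H)$. Its residual mean square is exactly $\mathbb E_P[\operatorname{Var}_P(V\mid H)]$, and no $H$-measurable $r$ achieves a smaller one.

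The only delicate step is phrasing ``cannot order'' rigorously. I will state it as this $L^2$-projection optimality together with the constancy of $r$ on fibers, and not as a formal claim about rankings.
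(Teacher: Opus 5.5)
Your proposal is correct and follows the paper's own proof. Both argue $H$-measurability from the fact that the local public counts at $H$ aggregate over every hidden type, and both get the displayed identity from the law of total variance. Your explicit cross-term check and your $L^2$-projection reading of ``cannot order'' spell out what the paper states in a single line.
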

\begin{proof}[Proof of Proposition~\ref{prop:public_rank}]
The local public counts at a state sum over its hidden types, so any statistic
restricted to those counts is measurable with respect to $H$.  The displayed
identity is the law of total variance.  Conditional variation is invisible to
any $H$-measurable ranking.
\end{proof}

\subsection{Safe Observation Capacity}
\label{app:capacity_rate}
\begin{proposition}[Constructive safe-reach bound]
\label{prop:capacity_mix}
Fix $I$ and $q\in\Xpoly$.  Set $L_I=(\vref-\Wval(q))_+$ and let
$\alpha_I(\rok)=1$ if $L_I=0$, otherwise
$\alpha_I(\rok)=\min\{1,\rok/L_I\}$.  Then
\[
\begin{aligned}
\kcap(I)&\ge \kappa_0(I)
 +\alpha_I(\rok)\bigl(\omega_q(I)-\kappa_0(I)\bigr)_+,\\
\kappa_0(I)&=\max_{x\in\mathcal S(0)}\omega_x(I).
\end{aligned}
\]
\end{proposition}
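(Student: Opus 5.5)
The plan is to mix an optimal zero-budget plan with $q$ and check that the mixture is both safe and reach-improving. Let $x_0\in\mathcal S(0)$ attain $\kappa_0(I)$; this maximizer exists because $\mathcal S(0)$ is a nonempty polytope (it contains $x_{\mathrm{bp}}$) and $\omega_x(I)=c_I^\top x$ is linear. If $\omega_q(I)\le\kappa_0(I)$, the positive part vanishes. The claim then reduces to $\kcap(I)\ge\kappa_0(I)$, which follows from monotonicity $\mathcal S(0)\subseteq\Sset$. So assume $\omega_q(I)>\kappa_0(I)$ and consider the convex combination
\[
x_\alpha=(1-\alpha)x_0+\alpha q,\qquad \alpha=\alpha_I(\rok)\in[0,1].
\]
Since $\Xpoly$ is convex, $x_\alpha\in\Xpoly$.

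The key step is the floor check. Because $\Wval$ is a pointwise minimum of linear functions, it is concave, so
\[
\Wval(x_\alpha)\ge(1-\alpha)\Wval(x_0)+\alpha\Wval(q).
\]
Substituting $\Wval(x_0)\ge\vref$ and $\Wval(q)\ge\vref-L_I$ (by the definition of $L_I$) gives $\Wval(x_\alpha)\ge\vref-\alpha L_I$. Two cases remain. If $L_I=0$, then $\alpha=1$ and $q$ is itself in $\mathcal S(0)\subseteq\Sset$. Otherwise $\alpha\le\rok/L_I$, so $\alpha L_I\le\rok$. In both cases $x_\alpha\in\Sset$.

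Finally, linearity of $\omega$ gives
\[
\omega_{x_\alpha}(I)=\kappa_0(I)+\alpha\bigl(\omega_q(I)-\kappa_0(I)\bigr),
\]
and $\kcap(I)\ge\omega_{x_\alpha}(I)$ by the definition of capacity as a maximum over $\Sset$. This yields the stated bound.

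I do not expect a real obstacle here. The only care needed is the case split: on $\omega_q(I)\le\kappa_0(I)$ the mixture would lower reach, so that case must take $\alpha=0$ (that is, use $x_0$ directly) and rely on the positive part vanishing. The $L_I=0$ convention also has to be consistent with concavity, which it is, since then $q$ is already zero-slack safe.
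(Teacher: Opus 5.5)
Your proof is correct and follows the paper's argument essentially step for step. Like the paper, you take a zero-budget reach maximizer $x_0$, handle $\omega_q(I)\le\kappa_0(I)$ by deploying $x_0$, and otherwise use concavity of $\Wval$ to get $\Wval\bigl((1-\alpha)x_0+\alpha q\bigr)\ge\vref-\alpha L_I$ and linearity of $\omega$ to read off the reach at $\alpha=\alpha_I(\rok)$. (Your $L_I=0$ subcase is actually vacuous once $\omega_q(I)>\kappa_0(I)$, since $q\in\mathcal S(0)$ would contradict the maximality of $\kappa_0(I)$, but handling it explicitly does no harm.)
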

\begin{proof}
Let $x_0\in\arg\max_{x\in\mathcal S(0)}\omega_x(I)$, so
$\omega_{x_0}(I)=\kappa_0(I)$.  If $\omega_q(I)\le\kappa_0(I)$, deploying $x_0$
already proves the bound.  Otherwise, since $\Wval$ is concave,
\[
\begin{aligned}
\Wval((1-\alpha)x_0+\alpha q)
&\ge (1-\alpha)\Wval(x_0)+\alpha\Wval(q)\\
&\ge \vref-\alpha L_I.
\end{aligned}
\]
For $\alpha\le\alpha_I(\rok)$ the mixture lies in $\Sset$ (when $L_I=0$, the whole
segment is floor-safe).  Linearity of reach gives
\[
\omega_{(1-\alpha)x_0+\alpha q}(I)
=\kappa_0(I)+\alpha(\omega_q(I)-\kappa_0(I)).
\]
Taking $\alpha=\alpha_I(\rok)$ proves the stated bound.  If $\kappa_0(I)=0$ and
$q$ has positive reach, this is positive for every $\rok>0$; if
$\kappa_0(I)>0$, safe reachability is immediate.
\end{proof}

\paragraph{Proof of Theorem~\ref{thm:capacity}.}
Assume $I\in\mathcal C_\rok$ and
deploy the certified extension of a capacity-achieving safe probe for $N$ hands.
Under (A1)--(A2) and (RF), a hand produces a revealed non-fold label at $I$ with
probability $\lrate(I)=\kcap(I)\pi_{\ystar}(I)$.  Thus the successful-reveal count
$S_N$ has mean $N\lrate(I)$, and a multiplicative Chernoff bound gives
$S_N=\Theta(N\lrate(I))$ with probability at least $1-\delta$ once
$N\lrate(I)\gtrsim\log(1/\delta)$.  Concentrating the action-specific reveal
indicators and dividing by the known controlled reach estimates each revealed child
realization mass with half-width
$O(\sqrt{\log(1/\delta)/(N\lrate(I))})$ on the positive-support neighborhood of
Definition~\ref{def:harddir}; support margins absorb fixed label shares.
Indeed, if $Z_{n,a}$ is the reveal-row indicator for non-fold child $a$ and
$b_{I,a}(x)>0$ is its known agent--chance coefficient, then
\[
\begin{aligned}
\mathbb E Z_{n,a}&=b_{I,a}(x)y_{\sigma(I)a},\\
\widehat y_{\sigma(I)a}&=\bar Z_a/b_{I,a}(x),\\
\left|\widehat y_{\sigma(I)a}-y_{\sigma(I)a}\right|
&=O\!\left(\sqrt{\frac{\log(1/\delta)}{N\lrate(I)}}\right),
\end{aligned}
\]
where the second relation uses the fixed interior label share
$\mathbb E Z_{n,a}=\Theta(\lrate(I))$ and empirical-Bernstein concentration.

Conditional probabilities also divide by $y_{\sigma(I)}$, which the target reveal
events alone need not identify.  Define $\lambda_{\rm par}(I)$ by requiring a
prefix certificate for this parent mass with half-width
$O(\sqrt{\log(1/\delta)/(N\lambda_{\rm par}(I))})$, and take
$\lambda_{\rm par}(I)=\infty$ when the parent is known.  On a neighborhood where the
parent mass is bounded away from zero, the ratio map
$p_y(a\mid I)=y_{\sigma(I)a}/y_{\sigma(I)}$ is Lipschitz: if
$y_{\sigma(I)}\ge m_I>0$ and $|\Delta_{\rm par}|<m_I$, then
\[
|\widehat p(a\mid I)-p(a\mid I)|
\le \frac{|\Delta_a|+p(a\mid I)|\Delta_{\rm par}|}
{m_I-|\Delta_{\rm par}|}.
\]
Thus fixed positive label shares and $m_I$ enter only the local constant.  A union
bound gives
\[
\operatorname{wid}_I(C)
=O\!\left(\sqrt{\frac{\log(1/\delta)}
 {N\min\{\lrate(I),\lambda_{\rm par}(I)\}}}\right).
\]
Hence conditional half-width $\varepsilon$ requires
\[
\begin{aligned}
N&=\tilde O\!\left(\frac{1}{\lambda_{\rm joint}(I)\varepsilon^2}\right),\\
\lambda_{\rm joint}(I)&=\min\{\lrate(I),\lambda_{\rm par}(I)\}.
\end{aligned}
\]
The simpler
$\lrate(I)$ rate holds exactly when the parent is known or its prefix certificate is
non-bottleneck. \qed

\subsection{Safe De-censoring Rate Lower Bound}
\label{app:lowerbound}
We prove Theorem~\ref{thm:lower_bound}: no floor-safe observer---however it allocates
its probes---can certify a censored-fiber direction faster than its maximal
informative-event rate $\Lambda_\rok(h;\bar y)$.  Under directional reveal
factorization this becomes the scalar rate $\lrate(I)$.  We make the fiber conditions precise, prove the bound, add
the exploitation corollary, and verify the conditions for unequal priors, several non-fold
actions, and deep public-null continuations.

\paragraph{Operational and global certificates.}
The certificate of Definition~\ref{def:class} is realized \emph{operationally} at a
capacity-achieving plan: fix $x_I^\star\in\arg\max_{x\in\Sset}\omega_x(I)$; the target carries an
\emph{operational reveal certificate} if $x_I^\star$ extends by a suffix to a floor-safe plan of
equal reach ($\omega(I)$ unchanged, worst-case value still $\ge\vref-\rok$) whose every non-fold
continuation at $I$ is forced to a type-revealing terminal, with no later opponent move to a
non-revealing terminal and known chance/type weights along the suffix (A2).  A stronger \emph{global
certificate}---a suffix operator $T_I:\Xpoly\to\Xpoly$ with the same forcing property and
$\omega_{T_Ix}(I)=\omega_x(I)$, $\Wval(T_Ix)=\Wval(x)$ for \emph{every} $x\in\Sset$---is a
sufficient condition we state but do not require.
The operational certificate makes $x_I^\star$ feasible for the reveal-forcing
program at the same reach, so
$\kappa^{\mathrm{rev}}_\rok(I)=\kcap(I)$ and the reach LP of
Theorem~\ref{thm:capacity}: reveal-controllability (A3) as a per-instance certificate rather
than a structural claim about poker.  The lower bound uses the full informative-event
rate $\Lambda_\rok$; $\kappa^{\mathrm{rev}}_\rok$ supplies a constructive rate, and the
two match under full-plan factorization and capacity equality.
Membership is verified by the three-part audit of Appendix~\ref{app:certaudit}.  All
$60$ reported targets receive an audited reveal-forcing plan.  For $48$
direct-showdown targets the suffix is empty, so
$\kappa^{\mathrm{rev}}_\rok(I)=\kcap(I)$ and membership in $\mathcal C_\rok$ is
certified.  At the remaining $12$, a non-fold action
can reopen betting; a constrained LP with an outward-rounded upper certificate
over the same fixed-action treeplex forces call at every
chance-compatible player-$0$ information set after that action and computes
$\kappa^{\mathrm{rev}}_\rok(I)$
directly.  Its audit checks the forced-action flow equalities, showdown reach for
every non-fold child, and the value floor; these $12$ probes are not claimed to
preserve unconstrained maximal reach.  The lower-bound hold'em targets below
also meet the global certificate: they are last-decision river nodes whose non-fold
children lead directly to showdown.

\paragraph{How far A3 can be weakened: the payoff-nullity boundary.}
For a complete collection portfolio, let $M_{\mathrm w}$ be its global observation
operator, including public rows and every forced coordinate, and define
$\Nres=\{d:E_2d=0,\ M_{\mathrm w}d=0\}$.  A3 on one target annihilates residual
directions supported on that target's prefix closure; global point identification
requires the portfolio to cover every relevant closure.  A partial portfolio leaves a
larger residual nullspace, but its uncertainty can still be payoff-null.  Proposition
\ref{prop:necessity} shows that point identification can otherwise fail.  Write the
\emph{safe payoff spread} of a direction as
\[
\eta(d)=\max_{x\in\Sset}x^\top\Amat d-\min_{x\in\Sset}x^\top\Amat d,
\]
and, for a fixed opponent $\ystar$, let
$\mathcal F_M(\ystar)=\{y\in\Ypoly:M_{\mathrm w}y=M_{\mathrm w}\ystar\}$ be the observation fiber and
\[
\Delta_M^-(\ystar)=\max_{x\in\Sset}\ \max_{y\in\mathcal F_M(\ystar)}\ x^\top\Amat(\ystar-y)\ \ge\ 0
\]
its \emph{one-sided safe-payoff width}.  Because $\mathcal F_M(\ystar)$ is a bounded polytope the maximum
is attained and finite---unlike $\sup_{d\in\Nres}\eta(d)$, which over a nonzero subspace is $0$ or
$+\infty$ and so cannot bound a residual loss---and
$\Delta_M^-(\ystar)\le\Delta_M:=\max_{x\in\Sset}\max_{y,y'\in\Ypoly:\,M_{\mathrm w}y=M_{\mathrm w}y'}|x^\top\Amat(y-y')|$,
the opponent-independent protocol diameter.

\begin{proposition}[Payoff-nullity identifies the safe-payoff functional]
\label{prop:payoff_boundary}
Fix a global forcing portfolio with observation operator $M_{\mathrm w}$ and residual
nullspace $\Nres$.
\textnormal{(i)} If every $d\in\Nres$ is \emph{payoff-null on the safe set}, $x^\top\Amat d=0$ for all
$x\in\Sset$, then observations determine the safe-payoff functional $x\mapsto x^\top\Amat y$ on $\Sset$,
so the robust response recovers the safe oracle value,
\[
\max_{x\in\Sset}\,\min_{y:\,M_{\mathrm w}y=M_{\mathrm w}\ystar}x^\top\Amat y
=\max_{x\in\Sset}x^\top\Amat\ystar,
\]
even though $\ystar$ is identified only modulo $\Nres$.
\textnormal{(ii)} Conversely, if $d\in\Nres$, $\bar y\in\Ypoly$, and $t>0$ satisfy
$\bar y\pm td\in\Ypoly$ and $x^\top\Amat d\neq0$ for some $x\in\Sset$, then two observationally indistinguishable
opponents give different payoffs to that safe plan, so the observations do not
determine the full safe-payoff functional.  This condition alone need not change the
maximized safe oracle value, because the affected plan may be suboptimal.
Writing $R=\max_{x\in\Sset}\min_{y\in\mathcal F_M(\ystar)}x^\top\Amat y$ for the robust-over-fiber value,
the shortfall is bounded by the one-sided width,
\[
0\le\max_{x\in\Sset}x^\top\Amat\ystar-R\le\Delta_M^-(\ystar),
\]
while the floor $R\ge\vref-\rok$ holds unconditionally.  The exact shortfall is the gap between two
linear programs---an $\Sset$-constrained best response to $\ystar$ and the robust response over
$\mathcal F_M(\ystar)$---so it is computable per target; the uniform diameter $\Delta_M$ is bilinear in
$(x,y)$ and is evaluated by vertex enumeration or as a certified bound.
Full reveal-controllability over every payoff-relevant closure gives
$\Nres=\{0\}$ there and is sufficient but not necessary: a partial global portfolio
also recovers the safe-payoff functional whenever all of its residual directions are
payoff-null.
\end{proposition}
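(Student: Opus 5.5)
Part (i) reduces to showing that, for every $x\in\Sset$, the linear functional $y\mapsto x^\top\Amat y$ is constant on the observation fiber $\mathcal F_M(\ystar)=\{y\in\Ypoly:M_{\mathrm w}y=M_{\mathrm w}\ystar\}$. Take any $y$ in the fiber and set $d=y-\ystar$. Both $y$ and $\ystar$ satisfy the realization constraints $E_2y=e_2$, so $E_2d=0$, and $M_{\mathrm w}d=0$ by construction; hence $d\in\Nres$. Payoff-nullity then gives $x^\top\Amat y=x^\top\Amat\ystar$ for every $x\in\Sset$. The inner minimum over the fiber therefore equals $x^\top\Amat\ystar$ pointwise in $x$, and maximizing over $\Sset$ yields the displayed equality. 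This determines the functional on $\Sset$ even though $\ystar$ is recovered only modulo $\Nres$.

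For the first claim of (ii), I take the two opponents $\bar y\pm td$. They lie in $\Ypoly$ by hypothesis, and they satisfy $M_{\mathrm w}(\bar y+td)-M_{\mathrm w}(\bar y-td)=2tM_{\mathrm w}d=0$, so they are observationally indistinguishable. Their payoffs against the given $x$ differ by $2t\,x^\top\Amat d\neq0$, so the observations do not determine the safe-payoff functional. To show the oracle value need not change, I will point out that the witnessing $x$ can be strictly suboptimal. A one-line example suffices: a residual direction that only affects payoffs of a dominated safe plan.

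For the shortfall bound, the lower inequality $R\le\max_{x\in\Sset}x^\top\Amat\ystar$ holds because $\ystar\in\mathcal F_M(\ystar)$, so $\min_{y\in\mathcal F_M(\ystar)}x^\top\Amat y\le x^\top\Amat\ystar$ for each $x$. For the upper inequality, let $x^\ast$ attain the safe oracle value. Then
\[
\max_{x\in\Sset}x^\top\Amat\ystar-R\le x^{\ast\top}\Amat\ystar-\min_{y\in\mathcal F_M(\ystar)}x^{\ast\top}\Amat y=\max_{y\in\mathcal F_M(\ystar)}x^{\ast\top}\Amat(\ystar-y)\le\Delta_M^-(\ystar).
\]
The unconditional floor follows from $\mathcal F_M(\ystar)\subseteq\Ypoly$: for any $x\in\Sset$, $\min_{\mathcal F_M}x^\top\Amat y\ge\Wval(x)\ge\vref-\rok$, which is Theorem~\ref{thm:safety}.

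The computability remarks are direct. For fixed $\ystar$, the oracle value is an LP over the polytope $\Sset$. The robust value $R$ becomes a single LP after dualizing the inner minimization over the fiber polytope, using the same duality step that showed $\Sset$ is a polytope. $\Delta_M$ is bilinear, so its maximum is attained at vertices. Finally, full reveal-controllability forces $\Nres=\{0\}$ on the covered closures by the induction in Theorem~\ref{thm:active_id}, so it satisfies (i) trivially; (i) itself shows this is not necessary.

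I expect the only delicate point to be attainment and finiteness of the maxima. Here I would note that $\Sset$ and $\mathcal F_M(\ystar)$ are nonempty compact polytopes: $\ystar$ lies in the fiber, and $\Sset$ contains the blueprint.
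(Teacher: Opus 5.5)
Your proposal is correct and follows the paper's proof step for step. It uses a fiber difference in $\Nres$ for (i), the pair $\bar y\pm td$ for non-identification, the robust objective evaluated at the oracle maximizer $x^\star$ for the shortfall bound, and $\mathcal F_M(\ystar)\subseteq\Ypoly$ for the floor. One small tightening: (i) alone only shows sufficiency, so the ``not necessary'' remark needs a witness with $\Nres\neq\{0\}$ whose residual directions are all payoff-null on $\Sset$, such as the fold-terminal case of (A4) that the paper cites or the zero-payoff construction of Proposition~\ref{prop:necessity}.
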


\begin{proof}
(i) $M_{\mathrm w}y=M_{\mathrm w}\ystar$ iff $y-\ystar\in\Nres$, so payoff-nullity gives
$x^\top\Amat y=x^\top\Amat\ystar$ for every $x\in\Sset$ and every observationally equivalent $y$: the
inner minimum over the fiber equals $x^\top\Amat\ystar$ pointwise, and the outer maximum is the oracle.
(ii) The two feasible plans $\bar y\pm td$ share
$M_{\mathrm w}\bar y$ but their payoffs to $x$ differ by
$2t x^\top\Amat d\neq0$, proving failure of payoff-functional identification.
For the oracle maximizer $x^\star$ of $\ystar$, taking $x=x^\star$ in the outer maximum of $R$,
\[
\begin{aligned}
R&\ge\min_{y\in\mathcal F_M(\ystar)}(x^\star)^\top\Amat y\\
&=(x^\star)^\top\Amat\ystar-\max_{y\in\mathcal F_M(\ystar)}(x^\star)^\top\Amat(\ystar-y)\\
&\ge(x^\star)^\top\Amat\ystar-\Delta_M^-(\ystar)
=\max_{x\in\Sset}x^\top\Amat\ystar-\Delta_M^-(\ystar),
\end{aligned}
\]
and $R\le\max_{x\in\Sset}x^\top\Amat\ystar$ since $\ystar\in\mathcal F_M(\ystar)$; the floor follows because every
response lies in $\Sset$.  A partial portfolio can satisfy the premise, for example
when each unforced hidden split lies in the kernel of
$x\mapsto x^\top\Amat d$ over $\Sset$; the fold-terminal sufficient condition in
(A4) is one such case. \qed
\end{proof}

Two distinctions are essential.  First, payoff-nullity is the \emph{global} safe-payoff condition and is
strictly stronger than the local kink $\gamma$ of Prop.~\ref{prop:kink}: since the base-optimal face
$F_0\subseteq\Sset$, one has $\gamma(d)\le\eta(d)$, with strict gap exactly when a direction is smooth at
the reference opponent ($\gamma=0$) yet moves value elsewhere in $\Sset$ ($\eta>0$).  The kink governs
the \emph{lower bound}---it suffices to make a split hard to learn---whereas payoff-nullity governs
\emph{identification}---it is what licenses leaving a split unforced; conflating them is unsound, since
directions can have $\gamma=0$ at a smooth reference while remaining non-null on
$\Sset$.  Second, the condition $x^\top\Amat d=0$ for every global residual $d$
characterizes the exact payoff-null enlargement.  Because payoff-nullity is a measure-zero linear condition and unforced
value-relevant exits are generic in multi-street poker, this characterization also
explains why (A3) is the practical sufficient condition.
Equality of robust and safe-oracle optima can nevertheless occur without full
payoff-nullity when
residual directions affect only suboptimal plans.  A sufficient condition for uniform
oracle recovery to fail is that one fiber contain two opponents with disjoint
safe-optimal response sets.

\paragraph{Robust response is minimax-optimal given the forced coordinates.}
Whatever a protocol forces, the robust-over-fiber response is the best a floor-safe agent can do with that
evidence.  Fix an exact observation $\sigma=M_{\mathrm w}\ystar$ and its fiber
$\mathcal F_\sigma=\{y\in\Ypoly:M_{\mathrm w}y=\sigma\}$.

\begin{proposition}[Fiber minimax optimality]
\label{thm:fiber_minimax}
Among all floor-safe deployment rules that depend on the data only through $\sigma$---each committing a
single plan $x(\sigma)\in\Sset$ to the whole fiber---the largest guaranteed value over $\mathcal F_\sigma$ is
\[
R_\sigma=\max_{x\in\Sset}\ \min_{y\in\mathcal F_\sigma}\ x^\top\Amat y,
\]
attained by the robust-over-fiber response
$x_\sigma\in\arg\max_{x\in\Sset}\min_{y\in\mathcal F_\sigma}x^\top\Amat y$.  No floor-safe
$\sigma$-measurable rule guarantees more, and every such rule is floor-safe.
\end{proposition}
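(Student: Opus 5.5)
The plan is a direct sup--inf argument over the class of $\sigma$-measurable rules; no statistical tools are needed.

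\emph{Reduction to one plan per fiber.} A deterministic floor-safe $\sigma$-measurable rule commits to a single plan $x(\sigma)\in\Sset$ for every $y\in\mathcal F_\sigma$. Its guaranteed value over the fiber is therefore $g(x(\sigma))$, where $g(x)=\min_{y\in\mathcal F_\sigma}x^\top\Amat y$. If the rule randomizes (measurably, independently of $y$), write $\bar x$ for its mean plan. Bilinearity gives expected payoff $\bar x^\top\Amat y$ against each $y$, and convexity of $\Sset$ puts $\bar x$ in $\Sset$. Randomized rules thus reduce to deterministic ones, and for every rule
\[
\inf_{y\in\mathcal F_\sigma}\mathbb E[x^\top\Amat y]=g(\bar x)\le\sup_{x\in\Sset}g(x).
\]

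\emph{Attainment.} Next I show the supremum is a maximum. The fiber $\mathcal F_\sigma$ is a nonempty polytope, since it contains $\ystar$ and is cut out of the polytope $\Ypoly$ by linear equalities. Hence $g$ is a finite minimum of finitely many linear functions (over the vertices of $\mathcal F_\sigma$), so $g$ is concave, piecewise linear and continuous. The set $\Sset$ is a nonempty compact polytope by the duality description in A.1, and it contains $x_{\mathrm{bp}}$. So $g$ attains its maximum $R_\sigma$ at some $x_\sigma$. The constant rule $x(\sigma)=x_\sigma$ is $\sigma$-measurable and achieves $R_\sigma$, which proves both optimality and attainment. Computationally, LP duality on the inner minimum turns this into a single LP, but the proof only needs existence.

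\emph{Floor safety.} Every such rule takes values in $\Sset$, so Theorem~\ref{thm:safety}, applied with $\mathcal A=\Sset$, gives $x(\sigma)^\top\Amat y\ge\vref-\rok$ for all $y\in\Ypoly$, and in particular on the fiber. As a consequence $R_\sigma\ge\vref-\rok$.

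The only delicate step is the treatment of randomized rules. There I must state that the randomization is independent of $y$ given $\sigma$, and then use bilinearity with convexity of $\Sset$. If the paper intends only deterministic rules, this step disappears.
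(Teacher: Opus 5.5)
Your proof is correct and follows the paper's argument. A $\sigma$-measurable rule commits one plan $x(\sigma)\in\Sset$ to the whole fiber, so its guarantee is $\min_{y\in\mathcal F_\sigma}x(\sigma)^\top\Amat y\le R_\sigma$, with equality at $x_\sigma$, and floor safety follows from Theorem~\ref{thm:safety}. Your compactness argument for attainment and your reduction of randomized rules to their mean plan (via bilinearity and convexity of $\Sset$) are sound additions that the paper leaves implicit, since its statement already restricts to rules that commit a single plan.
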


\begin{proof}
A $\sigma$-measurable rule selects some $x(\sigma)\in\Sset$; since $M_{\mathrm w}y=\sigma$ for every
$y\in\mathcal F_\sigma$, the rule cannot separate members of the fiber, so its worst-case value there is
$\min_{y\in\mathcal F_\sigma}x(\sigma)^\top\Amat y\le\max_{x\in\Sset}\min_{y\in\mathcal F_\sigma}x^\top\Amat y=R_\sigma$,
with equality at $x_\sigma$.  Floor-safety holds because $x(\sigma)\in\Sset$ gives
$x(\sigma)^\top\Amat y\ge\vref-\rok$ for all $y\in\Ypoly\supseteq\mathcal F_\sigma$
(Thm.~\ref{thm:safety}). \qed
\end{proof}

The optimality is \emph{relative to the protocol}: $R_\sigma$ is the ceiling for the observation operator
$M_{\mathrm w}$ in force, and a richer protocol that forces more continuations can only raise it, shrinking
$\mathcal F_\sigma$.  Under A3 the fiber collapses to $\{\ystar\}$ on the closure and
$R_\sigma=\max_{x\in\Sset}x^\top\Amat\ystar$ is the oracle; in general
$R_\sigma\ge\max_{x\in\Sset}x^\top\Amat\ystar-\Delta_M^-(\ystar)$ by Prop.~\ref{prop:payoff_boundary}.
The finite-sample analogue adds the statistical width to the structural one.

\begin{proposition}[Residual-width recovery]
\label{thm:residual_recovery}
Let $C_N$ be a confidence set built from the partial-forcing reveals that contains $\ystar$ with
probability at least $1-\delta$ and whose fiber-robust value
$J_N=\max_{x\in\Sset}\min_{y\in C_N}x^\top\Amat y$ obeys $R_\sigma-J_N\le\zeta_N$, and put
$x_N\in\arg\max_{x\in\Sset}\min_{y\in C_N}x^\top\Amat y$.  Then $x_N$ is floor-safe and, on the coverage
event,
\[
\max_{x\in\Sset}x^\top\Amat\ystar-x_N^\top\Amat\ystar\ \le\ \Delta_M^-(\ystar)+\zeta_N .
\]
This proposition is deterministic conditional on the stated bound for $\zeta_N$.
A statistical rate additionally requires simultaneous contraction of every
value-relevant observation row and a finite reconstruction constant, as in the
proof of Theorem~\ref{thm:value_recovery}.  Under A3 or payoff-nullity
$\Delta_M^-(\ystar)=0$ and the bound reduces to its statistical term; otherwise
the structural residual persists.
\end{proposition}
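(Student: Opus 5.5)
The plan is a direct two-step comparison, using only the definitions and the fiber bound of Proposition~\ref{prop:payoff_boundary}. Floor safety is immediate: $x_N\in\Sset$ by construction, so Theorem~\ref{thm:safety} gives $x_N^\top\Amat y\ge\vref-\rok$ for every $y\in\Ypoly$. For the value bound, work on the coverage event $\ystar\in C_N$ and chain three inequalities:
\[
x_N^\top\Amat\ystar\ \ge\ \min_{y\in C_N}x_N^\top\Amat y\ =\ J_N\ \ge\ R_\sigma-\zeta_N\ \ge\ \max_{x\in\Sset}x^\top\Amat\ystar-\Delta_M^-(\ystar)-\zeta_N .
\]
Rearranging gives the claim.

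The three steps are as follows.
\begin{itemize}
\item The first inequality holds because $\ystar\in C_N$, so the minimum over $C_N$ is at most the value at $\ystar$.
\item The equality holds because $x_N$ attains the max--min defining $J_N$.
\item The middle inequality is the hypothesis $R_\sigma-J_N\le\zeta_N$.
\end{itemize}
For the last inequality, take $\sigma=M_{\mathrm w}\ystar$. Then $\mathcal F_\sigma=\mathcal F_M(\ystar)$, so $R_\sigma$ equals the robust-over-fiber value $R$ of Proposition~\ref{prop:payoff_boundary}(ii), and that part yields $R\ge\max_{x\in\Sset}x^\top\Amat\ystar-\Delta_M^-(\ystar)$.

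For the closing remarks, under (A3) or payoff-nullity on the relevant closure, Proposition~\ref{prop:payoff_boundary}(i) shows that every $y$ in the fiber has $x^\top\Amat(\ystar-y)=0$ for all $x\in\Sset$. Hence $\Delta_M^-(\ystar)=0$ and only $\zeta_N$ remains. The statement that the result is deterministic given $\zeta_N$ is simply the observation that no probabilistic step occurs beyond restricting to the coverage event. Producing an explicit rate for $\zeta_N$ would require the row-contraction argument from the proof of Theorem~\ref{thm:value_recovery}, which the statement explicitly defers.

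The main point needing care, rather than a real obstacle, is identifying $R_\sigma$ with $R$ in Proposition~\ref{prop:payoff_boundary}. Both are defined with the same operator $M_{\mathrm w}$ and the same fiber through $\ystar$. I would also remark that the proof never requires $C_N\subseteq\mathcal F_\sigma$ or the reverse inclusion: the hypothesis on $\zeta_N$ already absorbs the comparison between $J_N$ and $R_\sigma$.
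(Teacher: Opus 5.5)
Your proof is correct and follows the paper's argument. On the coverage event you use the same chain $x_N^\top\Amat\ystar\ge J_N\ge R_\sigma-\zeta_N$, then the lower bound $R_\sigma\ge\max_{x\in\Sset}x^\top\Amat\ystar-\Delta_M^-(\ystar)$ from Proposition~\ref{prop:payoff_boundary}(ii), with floor safety from Theorem~\ref{thm:safety}; the paper also cites Proposition~\ref{thm:fiber_minimax} to identify $R_\sigma$ with the robust-over-fiber value, which you handle directly by setting $\sigma=M_{\mathrm w}\ystar$.
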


\begin{proof}
On coverage $\ystar\in C_N$, so $x_N^\top\Amat\ystar\ge\min_{y\in C_N}x_N^\top\Amat y=J_N\ge R_\sigma-\zeta_N$.
By Proposition~\ref{thm:fiber_minimax} and Prop.~\ref{prop:payoff_boundary}(ii),
$R_\sigma\ge\max_{x\in\Sset}x^\top\Amat\ystar-\Delta_M^-(\ystar)$.  Combining the two inequalities gives the
bound; floor-safety is Theorem~\ref{thm:safety}. \qed
\end{proof}

\paragraph{Structural fragments that satisfy A3.}
The first two structures below need no additional floor slack.  The capped construction
requires enough unused slack for its call-down suffix.

\begin{lemma}[Reveal-controllable fragments]
\label{lem:fragments}
A target $I$ with $\kcap(I)>0$ lies in $\mathcal C_\rok$ if either of the following holds.
\textnormal{(a) Immediate showdown:} the opponent's non-fold action at $I$ itself
reaches a type-revealing terminal, so no agent suffix is required.
\textnormal{(b) Chance-only runout:} after the continuation, only chance acts before a
forced showdown (e.g.\ an all-in runout).
\textnormal{(c) Capped call-down (budgeted):} a maximal-reach plan retains at least $\bar c$ of
unused floor slack, where $\bar c$ bounds the additional worst-case loss of a
never-betting check/call suffix.
\end{lemma}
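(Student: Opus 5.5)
The approach is to verify the definition of $\mathcal C_\rok$ (Def.~\ref{def:class}) directly in each case. Take a maximal-reach plan $x_I^\star\in\arg\max_{x\in\Sset}\omega_x(I)$; it exists because $\Sset$ is a nonempty polytope and $\omega_x(I)=c_I^\top x$ is linear. We exhibit a floor-safe extension that keeps $\omega(I)=\kcap(I)$, carries every non-fold continuation at $I$ to a type-revealing terminal, and leaves no later non-revealing exit. Since the certificate is about the agent's plan, it suffices to change $x_I^\star$ only on agent sequences that extend past the opponent's non-fold actions at $I$. Agent sequences leading to $I$ are prefixes of these, not extensions, so perfect recall and the sequence-form treeplex structure imply $c_I^\top x$ is unchanged. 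This gives reach preservation in all three cases; what remains is revealing termination and the floor.

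For (a) and (b), the suffix is empty (or fixed trivially), so the extension is $x_I^\star$ itself. Its worst-case value is unchanged, so it stays in $\Sset$. In (a), each non-fold action at $I$ enters a showdown terminal directly. In (b), it enters a subtree where only chance moves before a forced showdown. In either case there is no opponent decision afterward, so there is no later non-revealing exit. The chance weights along the runout are known by (A2), so the reveal event has the known coefficient needed for the reveal rows. This is the "no additional slack" claim stated before the lemma.

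For (c), define the modified plan $\tilde x$ by replacing, in every agent subtree following a non-fold continuation at $I$, the agent's behavior with the check/call suffix: never bet or raise, and call any opponent bet. Because the agent never bets, the opponent can never fold after this point. Because the agent always calls, the hand always reaches showdown, so every path is type-revealing and no non-revealing exit remains. For the floor, write $\Wval(\tilde x)=\min_y \tilde x^\top\Amat y$. Then $\tilde x^\top\Amat y$ differs from $(x_I^\star)^\top\Amat y$ only through terms on the modified subtrees, and by definition of $\bar c$ that difference is at least $-\bar c$ uniformly in $y$. Taking minima gives $\Wval(\tilde x)\ge\Wval(x_I^\star)-\bar c$. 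Unused slack of at least $\bar c$ means $\Wval(x_I^\star)\ge\vref-\rok+\bar c$, so $\tilde x\in\Sset$.

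The main obstacle is making the loss bound $\bar c$ precise in (c): the minimizing $y$ for $\tilde x$ differs from that for $x_I^\star$. The argument therefore has to use a uniform-in-$y$ bound, namely $\bar c\ge\max_y (x_I^\star-\tilde x)^\top\Amat y$, rather than a comparison at a fixed best response. I would state this as the interpretation of $\bar c$ and note that it is finite and computable as an opponent best-response LP on the finite game. With $\kcap(I)>0$ the target is also safe-reachable, so in all three cases $I\in\mathcal C_\rok$ and $\kappa^{\mathrm{rev}}_\rok(I)=\kcap(I)$.
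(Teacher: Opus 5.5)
Your proposal is correct and follows the same route as the paper's much terser proof. In (a) and (b) the maximal-reach plan already forces every non-fold continuation to showdown with no opponent exit; in (c) the retained slack of at least $\bar c$ absorbs the loss of the check/call suffix. Your prefix argument that $\omega_x(I)$ is unchanged and your uniform-in-$y$ floor bound spell out what the paper leaves implicit. The uniform bound is a slightly stronger reading of $\bar c$ than necessary: it suffices that $\bar c\ge\Wval(x_I^\star)-\Wval(\tilde x)$, which your bound implies.
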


\begin{proof}
In each case every non-fold continuation reaches a type-revealing terminal with no opponent
exit, placing those coordinates in $M$; (a) and (b) require no additional agent action,
while in (c) the retained slack covers the suffix loss $\bar c$.  The identification
induction of Theorem~\ref{thm:active_id} then applies. \qed
\end{proof}

The complementary \emph{bad set} $\mathcal B_\rok$ collects raw-reachable targets with a non-fold
continuation that has no floor-safe reveal-forcing suffix and leaves a residual direction of nonzero
spread ($\eta>0$).  Theorem~\ref{thm:active_id} covers the prefix-closed certified region;
the residual-width results above handle $\mathcal B_\rok$.  Fragment (c) gives the multi-street implication: in
fixed-limit or capped-bet subgames a larger slice of the tree is reveal-controllable at a given
budget.  In deep no-limit play call-down can cost more than $\rok$, producing a
nonempty $\mathcal B_\rok$ and an explicit residual-width term.

\paragraph{Forcing cost as a budget tradeoff.}
The binary ``A3 holds or not'' coarsens a continuous tradeoff already priced by the reveal capacity
$\kappa^{\mathrm{rev}}_\rok(I)=
\max\{\omega_x(I):x\in\Sset,\ H_Ix=0\}$ of
Prop.~\ref{prop:decoupling}.  The budget needed to reveal-force reach $\alpha$ at $I$ is its inverse,
\[
\begin{aligned}
c_I(\alpha)&=\min\{\rok\ge0:\kappa^{\mathrm{rev}}_\rok(I)\ge\alpha\}\\
&=\left[\vref-\sup_{\substack{x\in\Xpoly,\ H_Ix=0\\
\omega_x(I)\ge\alpha}}\Wval(x)\right]_+,
\end{aligned}
\]
with $c_I(\alpha)=+\infty$ when the constrained set is empty.  The inner problem is
a single LP, and $c_I$ is non-decreasing in $\alpha$.  Full reveal forcing at
reach $\alpha$ is affordable iff $c_I(\alpha)\le\rok$; otherwise this constrained
construction cannot attain $\alpha$, and the full observation operator determines
the remaining ambiguity.  This gives a budget-indexed version of A3 without
excluding alternative partially informative plans.

\paragraph{Censored-fiber directions.}
\begin{definition}[Detailed censored-fiber conditions]
Fix a public history $H$, a base opponent $\bar y\in\Ypoly$ whose affected parent
masses are at least $2\tau>0$, and a nonzero tangent
$h$ such that $\bar y\pm uh\in\Ypoly$ for $0\le u\le u_0$.  Let
$\mathcal I_h(H)\subseteq\mathcal I(H)$ be its affected type-specific information
sets.  Call $h$ a \emph{censored-fiber direction at $H$} if it is supported on their
continue/fold children, preserves every public transcript
($B_Hh=0$, where $B_H$ is the public-transcript operator) and prior-weighted
continuation mass, and has a positive interior margin on
the affected revealed labels.  Let $E_h$ denote observation of an affected non-fold
label.  For every $x\in\Sset$, every sufficiently small $\varepsilon$, and every
one-hand observation $z\notin E_h$, also require
\[
\Obs_x(\bar y+\varepsilon h)(z)=\Obs_x(\bar y-\varepsilon h)(z).
\]
The public constraint $B_Hh=0$ is necessary for this equality but need not imply it
for an arbitrary observation protocol.  For every $I\in\mathcal I_h(H)$ require
$h_{\sigma(I)}=0$ and normalize
\[
\lVert h\rVert_{H,\bar y}
=\max_{\substack{I\in\mathcal I_h(H),\,a:\\
\sigma(I)a\in\mathrm{supp}(h)}}
\frac{|h_{\sigma(I)a}|}{\bar y_{\sigma(I)}}=1.
\]
Restrict the local parameter space and confidence sets to
$\Ypoly_\tau(H)=\{y\in\Ypoly:y_{\sigma(I)}\ge\tau\
\forall I\in\mathcal I_h(H)\}$.  Consequently the conditional-action metric in
Definition~\ref{def:harddir} is well-defined and satisfies
$d_H(\bar y+\varepsilon h,\bar y-\varepsilon h)=2\varepsilon$.  The explicit
off-event equality, interior margin, and finite observation alphabet give, uniformly
over $x\in\Sset$,
\[
\mathrm{KL}\!\left(\Obs_x(\bar y+\varepsilon h)\,\middle\|\,
\Obs_x(\bar y-\varepsilon h)\right)
\le C_h\varepsilon^2\Pr_{x,\bar y}(E_h).
\tag{L}
\]
Indeed, off-event equality forces the two laws to assign the same mass $q_x$ to
$E_h$.  Conditional on $E_h$, their finite label laws are affine perturbations
$p_x^\pm=p_x^0\pm\varepsilon\dot p_x$.  The interior margin and normalization of
$h$ give $|\dot p_x(z)|\le K_h p_x^0(z)$ on every affected label, uniformly in
$x$ (a zero base coefficient also has zero perturbation).  The inequality
$\log(1+u)\le u+u^2$ for sufficiently small $|u|$ gives
$\mathrm{KL}(p_x^+\|p_x^-)\le C_h\varepsilon^2$.
Multiplication by $q_x=\Pr_{x,\bar y}(E_h)$ yields (L); when $q_x=0$ it is
immediate.
The direction is \emph{locally value-relevant} if the safe-response kink diagnostic
$\gamma(h)$ in Proposition~\ref{prop:kink} is positive.  Then, for
$g(y)=\max_{x\in\Sset}x^\top\Amat y-\vref$,
$c_\varepsilon(h)=\tfrac{1}{2\varepsilon}\big(g(\bar y{+}\varepsilon h)+
g(\bar y{-}\varepsilon h)-2g(\bar y)\big)\ge\gamma(h)/2$.
\end{definition}
The body's two-type instance is the simplest case: when the two types have equal
controlled reach, equal and opposite continue/fold perturbations preserve the public
law, and a committing continuation gives a Bernoulli revealed label.

\paragraph{Proof of Theorem~\ref{thm:lower_bound}.}
Set $y^0=\bar y+\varepsilon h$ and $y^1=\bar y-\varepsilon h$.
Their conditional-action vectors are $2\varepsilon$ apart.  Define a test that selects
$y^0$ if $y^0\in C$ and $y^1$ otherwise.  Under $y^0$, joint coverage and width success
make the test correct.  Under $y^1$, the same event makes it correct because a set of
half-diameter below $\varepsilon$ cannot contain both alternatives.  Thus each testing
error is at most $\delta$.  The Bretagnolle--Huber inequality gives
\[
2\delta\ge
P^0_N(\textnormal{error})+P^1_N(\textnormal{error})
\ge\tfrac12\exp\{-\mathrm{KL}(P^0_N\,\|\,P^1_N)\},
\]
and hence $\mathrm{KL}(P^0_N\,\|\,P^1_N)\ge\log(1/(4\delta))$.

For any floor-safe one-hand plan $x$, (L) and the definition of
$\Lambda_\rok(h;\bar y)$ give
\[
\mathrm{KL}\big(\mathrm{law}^0_x\,\big\|\,\mathrm{law}^1_x\big)
\le C_h\Lambda_\rok(h;\bar y)\varepsilon^2.
\]
The probes are adaptive: $x_n$ is measurable in the data transcript
$\mathcal D_{n-1}$ and lies in $\Sset$
almost surely, so the conditional law of hand $n$ obeys the same per-hand bound pointwise, and
the KL chain rule gives
$\mathrm{KL}(P^0_N\,\|\,P^1_N)\le
C_hN\Lambda_\rok(h;\bar y)\varepsilon^2$.

Combining the two KL bounds and using $\delta\le1/8$ gives
$N=\Omega(\log(1/\delta)/(\Lambda_\rok(h;\bar y)\varepsilon^2))$.  Under directional (RF),
$\Pr_{x,y}(E_h)=\omega_x(I_0)\pi_y(I_0)$, maximizing over $x$ gives
$\Lambda_\rok(h;y)=\kcap(I_0)\pi_y(I_0)$.  When $E_h$ and $E_{I_0}$
have identical rates, this also equals $\lrate(I_0;y)$.  Because
$h_{\sigma(I)}=0$, the two lower-bound alternatives share their parent masses; the
bound does not charge for learning those denominators.  It therefore matches the
conditional-probability upper bound when parent flow is known or certified at no
slower rate. \qed

\paragraph{Reach-and-reveal mass.}
For the achievable one-information-set reveal-mass rate, $I=(H,t)$ is type-specific and
under (RF),
$\kappa_\rok^{\mathrm{rev}}(I)\pi_y(I)\le\lrate(I;y)\le\kcap(I)\pi_y(I)$,
with equality for $I\in\mathcal C_\rok$; conditional normalization additionally
uses the prefix rate $\lambda_{\rm par}(I)$.  For a direction spanning several
$I\in\mathcal I(H)$, the governing quantity is instead
$\Lambda_\rok(h;y)$ from the full observation operator.  These quantities coincide
only under the directional factorization stated in Theorem~\ref{thm:lower_bound}.
Public statistics are indexed by $H$; capacities, active pins, and their plug-in
rates remain indexed by the constituent $I$.

\begin{corollary}[Exploitation floor on value-relevant directions]
\label{cor:exploit_floor}
If the censored-fiber direction $h$ is locally value-relevant
($\gamma(h)>0$), then $c_\varepsilon(h)\varepsilon$ is a data-free modulus and
beating it requires the rate of Theorem~\ref{thm:lower_bound}.
\end{corollary}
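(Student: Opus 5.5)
The corollary has two parts, and I will prove them in order: first that $c_\varepsilon(h)\varepsilon$ is a data-free modulus, then that beating it requires the rate of Theorem~\ref{thm:lower_bound}.

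\emph{Step 1: the modulus is data-free.} Write $g(y)=\max_{x\in\Sset}x^\top\Amat y-\vref$. The quantity $c_\varepsilon(h)\varepsilon=\tfrac12\big(g(\bar y+\varepsilon h)+g(\bar y-\varepsilon h)-2g(\bar y)\big)$ depends only on $\Sset$, $\Amat$, $\bar y$ and $h$, and not on any observed transcript. Since $g$ is a maximum of linear functions, it is convex, so this second difference is nonnegative. By the detailed censored-fiber definition, $\gamma(h)>0$ gives $c_\varepsilon(h)\ge\gamma(h)/2$, so the modulus is at least $\gamma(h)\varepsilon/2>0$.

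\emph{Step 2: a rule that beats the modulus at both alternatives.} Take any floor-safe observer and deployment $\hat x\in\Sset$ whose regret $g(y)+\vref-\hat x^\top\Amat y$ is below $c_\varepsilon(h)\varepsilon$ with probability at least $1-\delta$ under both $y^\pm=\bar y\pm\varepsilon h$. By linearity, $\hat x^\top\Amat y^++\hat x^\top\Amat y^-=2\hat x^\top\Amat\bar y\le 2(g(\bar y)+\vref)$. Adding the two regrets therefore gives, for every $\hat x$,
\[
g(y^+)+g(y^-)-2g(\bar y)\le \text{regret}(y^+)+\text{regret}(y^-),
\]
and the left side equals $2c_\varepsilon(h)\varepsilon$. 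Hence no single plan can have both regrets below $c_\varepsilon(h)\varepsilon$.

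\emph{Step 3: reduction to a test.} Define the test that declares $y^+$ when the regret of $\hat x$ against $y^+$ is below $c_\varepsilon(h)\varepsilon$, and $y^-$ otherwise. This test is computable from $\hat x$ alone, because $y^\pm$ are fixed, known alternatives. By Step 2 its two error probabilities are each at most $\delta$.

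\emph{Step 4: apply the lower bound.} From here the proof of Theorem~\ref{thm:lower_bound} applies verbatim. Bretagnolle--Huber gives $\mathrm{KL}\ge\log(1/(4\delta))$. The per-hand bound (L), together with the chain rule over adaptive floor-safe probes, gives $\mathrm{KL}\le C_hN\Lambda_\rok(h;\bar y)\varepsilon^2$. Combining the two yields $N=\Omega(\log(1/\delta)/(\Lambda_\rok(h;\bar y)\varepsilon^2))$.

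\emph{Main obstacle.} The delicate step is the linearity/averaging inequality in Step 2. It holds for deterministic $\hat x$, since $\hat x^\top\Amat y$ is linear in $y$ and $g(\bar y)$ upper-bounds $\hat x^\top\Amat\bar y-\vref$. I would then check that randomized deployments reduce to their mean plan, which lies in $\Sset$ by convexity, so the inequality is unaffected.
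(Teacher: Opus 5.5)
Your proposal is correct and follows essentially the same route as the paper: convexity of $g$ gives $\mathrm{reg}_+(x)+\mathrm{reg}_-(x)\ge 2c_\varepsilon(h)\varepsilon$ for every $x\in\Sset$, so near-optimal deployment against the realized twin yields a test with both errors at most $\delta$, and the Bretagnolle--Huber and KL chain-rule argument of Theorem~\ref{thm:lower_bound} then gives the rate (the paper uses a target $s<c_\varepsilon(h)\varepsilon$ and disjoint good sets, which is equivalent to your strict inequality at $c_\varepsilon(h)\varepsilon$). The step you flag as the main obstacle is not one: the regret inequality holds pointwise for every realized plan in $\Sset$, so randomized rules need no reduction to a mean plan, because independent internal randomization leaves the KL unchanged.
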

\begin{proof}
Write $g(y)=\max_{x\in\Sset}x^\top\Amat y-\vref$ and set the Jensen gap
$2c\varepsilon:=g(y^0)+g(y^1)-2g(\bar y)\ge0$ (convexity), so $c=c_\varepsilon(h)$.  For any
floor-safe $x$ the regrets $\mathrm{reg}_b(x)=g(y^b)-x^\top\Amat y^b+\vref\ge0$ sum to
$\mathrm{reg}_0(x)+\mathrm{reg}_1(x)=g(y^0)+g(y^1)+2\vref-2x^\top\Amat\bar y\ge2c\varepsilon$, since
$x^\top\Amat\bar y\le g(\bar y)+\vref$.  For any target $s<c\varepsilon$ the good sets
$G_b=\{x:\mathrm{reg}_b(x)\le s\}$ are disjoint, so deploying within $s$ of optimal against the
realized twin with probability at least $1-\delta$ tests $y^0$ versus $y^1$ with both
errors at most $\delta$; the Bretagnolle--Huber argument above gives the same
confidence-dependent lower bound.  When $\gamma(h)=0$, $g$ is
locally linear along $h$ (Prop.~\ref{prop:kink}); for sufficiently small
$\varepsilon$, a base-optimal response is optimal against both twins, so no
de-censoring is needed.
\end{proof}

\paragraph{The proof uses only the local observation bound.}
The estimation argument uses only equality of the off-event law and the quadratic
bound (L).  Any censored-fiber direction satisfying these conditions therefore yields
Theorem~\ref{thm:lower_bound} by the identical KL/Le~Cam steps, and
Corollary~\ref{cor:exploit_floor} adds the exploitation floor wherever $\gamma(h)>0$.  We now
verify the censored-fiber conditions for the extensions the body claims.

\paragraph{Extensions (same mechanism).}
The construction touches the binary equal-prior toy only through the local bound (L), so it extends directly,
each case value-relevant where $\gamma(h)>0$ (Cor.~\ref{cor:exploit_floor}).  \emph{Unequal priors}
$p,1-p$: perturb the per-type split by $h_{t_1}=\pm(1-p)$, $h_{t_2}=\mp p$, leaving the public action
mass fixed while the revealed-type law moves by $O(\varepsilon)$, so $\mathrm{KL}\le
C(p,\alpha)\varepsilon^2$ (constants degrading as $p\to\{0,1\}$ or $\alpha\to0$).  \emph{Multiple
non-fold actions}: for $v$ with $\sum_a v_a=0$ and $v\!\restriction\!A^+(I)\neq0$, set
$h_{t_1,a}=(1-p)v_a$, $h_{t_2,a}=-p\,v_a$; the reveal law is categorical rather than Bernoulli but the
same quadratic bound holds.  \emph{Deep public-null directions}: any tangent $h$ in
the reveal-closed subtree with $Eh=0$, preserved public transcript ($B_Hh=0$),
interior base label law, and the stated off-event equality is a censored-fiber
direction.  Each yields Theorem~\ref{thm:lower_bound} by the same KL/Le~Cam steps.
If instead $B_Hh\neq0$, the public channel also carries information and the governing
rate is that of the full observation operator, outside this construction.

\begin{proposition}[LP-verifiable kink]
\label{prop:kink}
Let $G_0=g(\bar y)+\vref$ and let $F_0=\{x\in\Sset:x^\top\Amat\bar y=G_0\}$ be the base optimal
safe face.  Set $\gamma(h)=\max_{x\in F_0}x^\top\Amat h-\min_{x\in F_0}x^\top\Amat h$, two linear
programs.  If $\gamma(h)>0$ then $g$ has a directional kink at $\bar y$ along $h$ and
$c_\varepsilon(h)\ge\gamma(h)/2$ for every feasible $\varepsilon>0$; if $\gamma(h)=0$ the
base-optimal face has a common slope, $g$ is locally linear along $h$, and the bound is
value-vacuous.
\end{proposition}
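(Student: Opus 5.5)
The plan is to study $g(y)=\max_{x\in\Sset}x^\top\Amat y-\vref$ restricted to the segment $u\mapsto\bar y+uh$, $|u|\le u_0$. On this segment $g$ is a maximum of affine functions of $u$, so it is convex and piecewise linear. Its one-sided directional derivatives follow from Danskin's theorem for a maximum over the compact polytope $\Sset$:
\[
g'(\bar y;h)=\max_{x\in F_0}x^\top\Amat h,\qquad
g'(\bar y;-h)=\max_{x\in F_0}\bigl(-x^\top\Amat h\bigr)=-\min_{x\in F_0}x^\top\Amat h .
\]
Here $F_0$ is exactly the argmax face defined in the statement. Adding the two gives $g'(\bar y;h)+g'(\bar y;-h)=\gamma(h)$. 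The two optimizations over $F_0$ are LPs, because $F_0$ is a face of the polytope $\Sset$ cut out by one linear equality; the polytope description comes from Appendix~A.1. This settles the ``two linear programs'' claim. The characterization of a kink is then immediate: $g$ is differentiable along $h$ at $\bar y$ if and only if the two one-sided slopes agree, that is, if and only if $\gamma(h)=0$.

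For the quantitative bound I would not rely on the derivative alone, since the claim is for every feasible $\varepsilon$ and not only small ones. Instead I use supporting lines directly. Let $x^+\in\arg\max_{x\in F_0}x^\top\Amat h$ and $x^-\in\arg\min_{x\in F_0}x^\top\Amat h$. Both lie in $\Sset$, so
\[
g(\bar y+\varepsilon h)\ge (x^+)^\top\Amat(\bar y+\varepsilon h)-\vref=g(\bar y)+\varepsilon (x^+)^\top\Amat h,
\]
and symmetrically
\[
g(\bar y-\varepsilon h)\ge g(\bar y)-\varepsilon (x^-)^\top\Amat h .
\]
Summing the two and dividing by $2\varepsilon$ gives $c_\varepsilon(h)\ge \tfrac12\bigl((x^+)^\top\Amat h-(x^-)^\top\Amat h\bigr)=\gamma(h)/2$ for every $\varepsilon>0$ with $\bar y\pm\varepsilon h\in\Ypoly$. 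No smallness is needed. When $\gamma(h)>0$, the one-sided slopes differ by $\gamma(h)$, so the kink is strict.

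For $\gamma(h)=0$, every $x\in F_0$ has the common slope $s=x^\top\Amat h$. I need local linearity, which the lower bound alone does not provide. Since $\Sset$ is a polytope, $g$ on the segment is the maximum over its finitely many vertices. Vertices outside $F_0$ have value strictly below $G_0$ at $u=0$, and by continuity they stay below the $F_0$ vertices for $|u|$ small. On that neighborhood $g(\bar y+uh)=g(\bar y)+us$, which is linear. Then $c_\varepsilon(h)=0$ for small $\varepsilon$, and any base-optimal response stays optimal against both twins, which is the value-vacuous conclusion used in Corollary~\ref{cor:exploit_floor}.

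The main obstacle is this last step: making rigorous that the active set near $\bar y$ is exactly $F_0$. The plan is to use the finite vertex set of $\Sset$ and the strict gap $G_0-\max_{v\notin F_0}v^\top\Amat\bar y>0$ to give an explicit radius, so that no Danskin-type compactness argument beyond finiteness is required.
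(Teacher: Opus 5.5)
Your proof is correct and follows the paper's proof. The bound $c_\varepsilon(h)\ge\gamma(h)/2$ comes from the same supporting lines at $x^\pm\in F_0$, and the kink/linearity dichotomy comes from the same Danskin identity $g'(\bar y;h)+g'(\bar y;-h)=\gamma(h)$. The only difference is that the paper simply cites the fact that a convex piecewise-linear function differentiable along $h$ is affine nearby, whereas you prove it with the finite-vertex gap $G_0-\max_{v\notin F_0}v^\top\Amat\bar y>0$, which also gives an explicit radius.
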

\begin{proof}
Let $x^+\in F_0$ attain the max and $x^-\in F_0$ the min.  Since $x^\pm\in\Sset$ and
$(x^\pm)^\top\Amat\bar y=G_0$, feasibility in the program defining $g$ gives
$g(\bar y\pm\varepsilon h)+\vref\ge(x^\pm)^\top\Amat(\bar y\pm\varepsilon h)=G_0\pm\varepsilon
(x^\pm)^\top\Amat h$.  Adding the two and subtracting $2g(\bar y)+2\vref=2G_0$ yields
$g(\bar y{+}\varepsilon h)+g(\bar y{-}\varepsilon h)-2g(\bar y)\ge\varepsilon\big((x^+)^\top\Amat h
-(x^-)^\top\Amat h\big)=\varepsilon\gamma(h)$, i.e.\ $c_\varepsilon(h)\ge\gamma(h)/2$.  For the
converse, Danskin's theorem gives the one-sided directional derivatives
$g'(\bar y;\pm h)=\max_{x\in F_0}x^\top\Amat(\pm h)$, whose sum is
$\max_{x\in F_0}x^\top\Amat h-\min_{x\in F_0}x^\top\Amat h=\gamma(h)$.  If $\gamma(h)=0$ the two
one-sided derivatives are negatives, so $g$ is differentiable at $\bar y$ along $h$; being convex
piecewise linear it is then affine on a neighborhood of $\bar y$ along $h$, so $c_\varepsilon(h)=0$
for all small $\varepsilon$.
\end{proof}
Thus $\gamma(h)$ is an $\varepsilon$-independent response-switch diagnostic: $\gamma>0$ certifies
that the base opponent has two safe-best responses valuing the hidden split oppositely, so the
optimal safe play switches as the over-folding type flips.  Both $c_\varepsilon$ and $\gamma$ are
computed by the same safe-response LPs that produce the deployment oracle, so the diagnostic is
available per target; the Kuhn and hold'em instances below realize the full construction end to
end.

\paragraph{Scope of the directional result.}
The theorem establishes a local minimax rate for one censored-fiber direction.  A
uniform rate over all of $\mathcal C_\rok$ has a different geometry: most targets have
$c_\varepsilon=0$ away from the equilibrium base, and deeper subgames may expose the
split through another public channel.  Multi-target routing therefore forms a distinct
portfolio problem, evaluated through SAD's public-screened controller.

\paragraph{Numerical checks.}
On Kuhn poker---a private card, then fold (censoring) or call (showdown)---we perturb the two extreme
cards' per-card fold rate by $\pm\varepsilon$ and verify the proof's structure exactly (population LPs,
no sampling): the public-twin total variation and the \emph{passive} blueprint reveal-mass difference
are $0$ to machine precision for every $\varepsilon$, while the active reveal-mass difference scales as
$2\kcap\varepsilon$.  At $\rok=0.3$ the safe response switches type, giving a linear value gap
($\mathrm{gap}/\varepsilon=0.022$).  By symmetry, the group rate
$\Lambda_\rok$ is twice either constituent's per-target rate, a fixed factor, so
the conditional bounds retain the same
$\Theta((\kcap(I)\varepsilon^2)^{-1})$ dependence.  At $\rok=0.1$ the floor pins
the response and the bound is vacuous
($c=0$).  The kink is not
Kuhn-specific: on the coarse turn--river instance, eight river targets sharing two hands whose call
ends at a showdown have LP-computed $c_\varepsilon$ positive at the equilibrium base ($0.002$ to
$0.038$ over $\rok\in\{0.1,0.3,0.5\}$), each an exact public twin (total variation
$\le2\times10^{-15}$), so the value gap of Theorem~\ref{thm:lower_bound} is non-vacuous on real
hold'em targets.

\paragraph{Prevalence of value-relevant directions.}
Positive kink diagnostics concentrate near response-indifference boundaries.  Computing $\gamma(h)$ across eight public-twin targets and $31$ base
opponents (the equilibrium, $15$ simplex perturbations, $15$ independently sampled leaky opponents) at
each $\rok\in\{0.1,0.3,0.5\}$: at the equilibrium base every feasible target has $\gamma>0$, since
indifference makes the optimal safe face high-dimensional; under perturbation, the fraction of targets
with $\gamma>0$ increases from $0.15$ to $0.38$ as $\rok$ grows, and among strongly exploitable sampled opponents
($g>0.1$) only $1.4\%$ ($5$ of $357$) have $\gamma>0$.  A leaky opponent's exploitable value almost
always lies in publicly visible directions that $\Cpub$ already captures (Appendix~\ref{app:ablation});
the censored split is value-relevant on the structured, near-indifferent, high-budget subset, where
the factorized rate $\lrate$ is tight.  The estimation lower bound
(Theorem~\ref{thm:lower_bound}) needs none of
this---it holds for every censored-fiber direction regardless of $\gamma$.

\subsection{The Safe Observation Frontier}
\label{app:capacity_law}
We prove Theorem~\ref{thm:capacity_law}: the capacity is a concave piecewise-linear frontier,
supported by the floor dual at every budget, with equality on the first face.  Throughout, $\kcap(I)$ is the
value of the parametric linear program
\[
\textnormal{(P$_\rok$)}\qquad
\kcap(I)\;=\;\max_{x\in\Xpoly}\;\big\{\,\omega_x(I)\;:\;\Wval(x)\ge\vref-\rok\,\big\},
\]
where $\omega_x(I)=c_I^\top x$ is linear, $\Xpoly$ is the (compact, convex) realization polytope,
and $\Wval(x)=\min_{y\in\Ypoly}x^\top\Amat y$ is concave and piecewise linear, so the floor
constraint is a finite intersection of half-spaces and (P$_\rok$) is an LP, feasible for every
$\rok\ge0$ (the blueprint $x_{\mathrm{bp}}$, with $\Wval(x_{\mathrm{bp}})=\vref$, is feasible).

\paragraph{Monotone, concave, piecewise linear.}  If $\rok\le\rok'$ the feasible set of (P$_\rok$) is
contained in that of (P$_{\rok'}$), so $\kcap$ is non-decreasing.  For concavity, let $x_1,x_2$ be
optimal at $\rok_1,\rok_2$ and $\theta\in[0,1]$; by concavity of $\Wval$,
$\Wval(\theta x_1+(1-\theta)x_2)\ge\theta\Wval(x_1)+(1-\theta)\Wval(x_2)\ge
\vref-(\theta\rok_1+(1-\theta)\rok_2)$, so $\theta x_1+(1-\theta)x_2$ is feasible at the mixed
budget, and linearity of $\omega$ gives
$\kappa_{\theta\rok_1+(1-\theta)\rok_2}\ge\theta\kappa_{\rok_1}+(1-\theta)\kappa_{\rok_2}$.  Piecewise
linearity is the standard parametric-LP fact: the value of an LP as a function of a right-hand-side
parameter is piecewise linear, with breakpoints where the optimal basis changes and slope equal to the
constraint's dual on each piece.

\paragraph{Upper bound (Lagrangian).}  For any multiplier $\mu\ge0$ and any $x$ feasible in
(P$_\rok$), the slack $\mu(\Wval(x)-\vref+\rok)\ge0$, so
\[
\begin{aligned}
\omega_x(I)&\le\omega_x(I)+\mu(\Wval(x)-\vref+\rok)\\
&\le\max_{x'\in\Xpoly}\big[\omega_{x'}(I)+\mu(\Wval(x')-\vref)\big]+\mu\rok\\
&=:g(\mu)+\mu\rok,
\end{aligned}
\]
with $g(\mu)$ \emph{independent} of $\rok$.  Maximizing the left side gives
$\kappa_\rho(I)\le g(\mu)+\mu\rho$ for every $\mu\ge0$.  At any budget $\rho_0$,
LP strong duality gives
\[
\kappa_{\rho_0}(I)=\min_{\mu\ge0}\{g(\mu)+\mu\rho_0\}.
\]
Thus every optimal floor multiplier $\mu_I(\rho_0)$ obeys, for all $\rho\ge0$,
\[
\kappa_\rho(I)\le g(\mu_I(\rho_0))+\mu_I(\rho_0)\rho
=\kappa_{\rho_0}(I)+\mu_I(\rho_0)(\rho-\rho_0).
\]
This uses only feasibility and boundedness of the underlying LP; no Slater condition is
needed.  At $\rho_0=0$, take the smallest optimal multiplier
$\mu_I=\partial_+\kappa_\rho(I)|_{\rho=0}$ to obtain
\[
\kcap(I)\;\le\;\kappa_0(I)+\mu_I\,\rok\qquad\text{for all }\rok\ge0.
\]
Combined with the trivial bound $\kcap(I)\le\omega_{\max}(I)$, this gives the upper bound of the
theorem.

\paragraph{First-face equality.}  Because the frontier has finitely many affine pieces,
there is a maximal nontrivial initial affine interval
$[0,\rok_1]$, with $\rok_1>0$ and possibly $\rok_1=\infty$ when the frontier is
already saturated.  Its slope is the right derivative
$\mu_I=\partial_+\kcap(I)|_{0}$, equivalently the smallest optimal origin
multiplier when the dual is non-unique.  Hence
$\kcap(I)=\kappa_0(I)+\mu_I\rok$ throughout this interval.  Beyond it, concavity
makes successive value slopes non-increasing, so the frontier remains weakly below
the origin supporting line and eventually reaches $\omega_{\max}(I)$.  Degenerate
basis changes with the same value slope need not end the affine interval.  This proves
Theorem~\ref{thm:capacity_law}. \qed

\begin{lemma}[Blueprint shift]
\label{lem:bp_shift}
For every $I$, $\rok\ge0$, and $\varepsilon\ge0$,
\[
\kappa_{\rok}(I;v^\star-\varepsilon)
=\kappa_{\rok+\varepsilon}(I;v^\star).
\]
\end{lemma}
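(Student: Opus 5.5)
The identity says the capacity depends on the reference value and the budget only through their difference $\vref-\rok$. The plan is to make this explicit by writing the capacity program with the reference value as a separate parameter. Let $\kappa_{\rok}(I;v)$ denote the value of
\[
\max_{x\in\Xpoly}\{\omega_x(I):\Wval(x)\ge v-\rok\},
\]
which is the program (P$_\rok$) of Theorem~\ref{thm:capacity_law} with $\vref$ replaced by a generic reference $v$. The notation $v^\star$ in the statement plays the role of $\vref$. Only the floor constraint involves $v$ or $\rok$: the objective $\omega_x(I)=c_I^\top x$, the polytope $\Xpoly$, and the worst-case map $\Wval$ are all independent of both.

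The key step is to compare the two feasible sets directly. On the left, the feasible set is $\{x\in\Xpoly:\Wval(x)\ge (v^\star-\varepsilon)-\rok\}$. On the right, it is $\{x\in\Xpoly:\Wval(x)\ge v^\star-(\rok+\varepsilon)\}$. The two thresholds are the same real number, so the feasible sets coincide. They maximize the same linear objective, so the two optimal values are equal. This includes the convention used elsewhere for empty feasible sets, although emptiness cannot occur here because $\rok+\varepsilon\ge0$.

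I would also note the interpretation this licenses. Lowering the blueprint guarantee by $\varepsilon$, for instance when the blueprint is only solved to tolerance $\varepsilon$, is the same as granting $\varepsilon$ extra slack. Consequently the whole frontier of Theorem~\ref{thm:capacity_law} is a left shift in $\rok$, and its slopes, including the origin multiplier $\mu_I$, are read off at the shifted budget. There is no real obstacle. The only care needed is to confirm that $\Sset$, and hence the capacity, depends on $\vref$ solely through the floor $\vref-\rok$, and not through the blueprint $x_{\mathrm{bp}}$ itself. The blueprint enters only as a feasibility witness, and that role is not needed for the equality.
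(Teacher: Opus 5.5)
Your proof is correct and is the same argument as the paper's: both sides maximize the same linear reach over the same feasible set, because the floor thresholds $(v^\star-\varepsilon)-\rok$ and $v^\star-(\rok+\varepsilon)$ are the same number. Your side remark on nonemptiness is not needed for the equality; it would also require some plan with $\Wval(x)\ge v^\star$, not merely $\rok+\varepsilon\ge0$. This does not affect the argument.
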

\begin{proof}
Both sides optimize over the same safe set because
$v^\star-\varepsilon-\rok=v^\star-(\rok+\varepsilon)$.
\end{proof}

\begin{corollary}[Safe certification cost at fixed local support]
\label{cor:price_of_safety}
For every censored-fiber direction $h$, Theorem~\ref{thm:lower_bound} gives
\[
N=\Omega\!\left(
\frac{\log(1/\delta)}
{\Lambda_\rok(h;\ystar)\varepsilon^2}
\right).
\]
A reveal-forcing plan at anchor $I$ attains the constructive rate
$\kappa^{\mathrm{rev}}_\rok(I)\pi_{\ystar}(I)$.  If the informative event
satisfies full-plan directional factorization,
$\Pr_{x,\ystar}(E_h)=\omega_x(I)\pi_{\ystar}(I)$ for every $x\in\Sset$, then
$\Lambda_\rok(h;\ystar)=\kcap(I)\pi_{\ystar}(I)$.  When parent flow is known or
certified at no slower rate and
$\kappa^{\mathrm{rev}}_\rok(I)=\kcap(I)$, the bounds match:
\[
N=\tilde\Theta\!\left(
\frac{1}{\kcap(I)\pi_{\ystar}(I)\varepsilon^2}
\right).
\]
All hidden constants here depend on the fixed interior label law; this is not
a uniform scaling statement as $\pi_{\ystar}(I)$ approaches the support boundary.
On a shared first face this replaces $\kcap(I)$ by
$\kappa_0(I)+\mu_I\rok$; for a wall target it gives
$N=\tilde\Theta(1/(\mu_I\rok\pi_{\ystar}(I)\varepsilon^2))$ under the same
parent and fixed-support conditions.
Without capacity equality, $\kappa^{\mathrm{rev}}_\rok\pi_{\ystar}$ remains an
achievable construction, while $\Lambda_\rok$ governs the unavoidable cost.
\end{corollary}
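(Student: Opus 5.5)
The plan is to assemble the corollary from three earlier results: the lower bound of Theorem~\ref{thm:lower_bound}, the achievability argument in the proof of Theorem~\ref{thm:capacity}, and the first-face equality of Theorem~\ref{thm:capacity_law}.

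\emph{Step 1 (lower bound).} The displayed $\Omega(\cdot)$ bound is Theorem~\ref{thm:lower_bound} with $\bar y=\ystar$. Under full-plan directional factorization, $\Pr_{x,\ystar}(E_h)=\omega_x(I)\pi_{\ystar}(I)$ for every $x\in\Sset$. Since $\pi_{\ystar}(I)$ does not depend on $x$, maximizing over $x$ gives
\[
\Lambda_\rok(h;\ystar)=\pi_{\ystar}(I)\max_{x\in\Sset}\omega_x(I)=\kcap(I)\pi_{\ystar}(I).
\]

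\emph{Step 2 (achievability).} Take a plan attaining $\kappa^{\mathrm{rev}}_\rok(I)$. Every non-fold continuation of that plan is forced to a revealing terminal, so (RF) holds with equality. The reveal rate is therefore $\kappa^{\mathrm{rev}}_\rok(I)\pi_{\ystar}(I)$. The Chernoff and empirical-Bernstein steps of Theorem~\ref{thm:capacity} then give child-mass half-width $O(\sqrt{\log(1/\delta)/(N\kappa^{\mathrm{rev}}_\rok\pi_{\ystar})})$. If parent flow is known or certified at no slower rate, then $\lambda_{\rm joint}$ equals this reveal rate. The Lipschitz ratio bound on the $\tau$-margin neighborhood converts the child-mass half-width into conditional half-width $\varepsilon$ after $\tilde O(1/(\kappa^{\mathrm{rev}}_\rok\pi_{\ystar}\varepsilon^2))$ hands.

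\emph{Step 3 (matching).} When $\kappa^{\mathrm{rev}}_\rok(I)=\kcap(I)$, the upper rate from Step 2 equals the lower rate from Step 1. This gives the $\tilde\Theta$ statement, where the tilde absorbs the $\log(1/\delta)$ and simultaneous-coverage factors. The constants $C_h$ in (L) and the margin constants depend on the fixed interior label law; this is why the statement is not uniform as $\pi_{\ystar}(I)\to0$. On the first face $[0,\rok_1]$, Theorem~\ref{thm:capacity_law} gives $\kcap(I)=\kappa_0(I)+\mu_I\rok$, and a wall target ($\kappa_0(I)=0$) yields $\mu_I\rok$ in the denominator. Without capacity equality only the two separate statements survive: achievability via $\kappa^{\mathrm{rev}}_\rok\pi_{\ystar}$ and the unavoidable cost via $\Lambda_\rok$.

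The main obstacle is the consistency of local classes between the two halves. The lower bound is stated for confidence sets inside $\Ypoly_\tau(H)$ with $\varepsilon\le\varepsilon_0$. The upper bound needs a positive parent margin $m_I$ and interior label shares $\mathbb E Z_{n,a}=\Theta(\lrate)$. I would first fix a single neighborhood of $\ystar$ that satisfies both conditions, taking $\tau\le m_I$. I would also check that the anchor $I$ of the factorization is the same information set whose reveal-forcing plan is used. Where the direction spans several $I\in\mathcal I(H)$, the factorization hypothesis is exactly what prevents a mismatch between the group rate and the per-target rate, as in the Kuhn symmetry remark.
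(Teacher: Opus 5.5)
Your proposal is correct and follows essentially the same route as the paper. The paper gives no separate proof block; it assembles the corollary from three pieces: the lower bound and factorization-maximization step in the proof of Theorem~\ref{thm:lower_bound}, the concentration argument of Theorem~\ref{thm:capacity} applied to the reveal-forcing maximizer (exactly as in the proof of Proposition~\ref{prop:decoupling}), and the first-face equality of Theorem~\ref{thm:capacity_law}. Your extra care in fixing one common local neighborhood for both bounds, and in distinguishing the anchor $I$ from a direction spanning several information sets, addresses points the paper simply absorbs into its ``local-support factors.''
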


\paragraph{Domain-generality and its one structural condition.}  The frontier result uses only a
convex policy set, a concave floor, and a linear reach functional; nothing is specific to sequence
form.  The \emph{coupling} to the statistical rate needs (RF) and one more condition:
$\kappa_\rho^{\mathrm{rev}}(I)=\kcap(I)$ on the budget interval, so the same
frontier prices reach and information; conservative-bandit and constrained-MDP instances verify both the
frontier and the coupling numerically (Appendix~\ref{app:nongame}).  Showdown censoring is the
instance where the floor is a worst-case exploitability value and the certificate makes reveal
equal reach.

\paragraph{When reach and reveal decouple.}
Corollary~\ref{cor:price_of_safety} matches the reach and reveal frontiers only
under (RF) and capacity equality; otherwise the reveal frontier prices the explicit
reveal-forcing construction rather than an information-theoretic optimum.  Let the reveal
suffix be encoded by the flow equalities $H_Ix=0$, with safe reveal capacity
$\kappa^{\mathrm{rev}}_\rok=\max\{\omega_x(I):x\in\Sset,\ H_Ix=0\}$ and floor dual
$\mu^{\mathrm{rev}}_I=\left.\partial^+_\rho
\kappa^{\mathrm{rev}}_\rho(I)\right|_{\rho=0}$, alongside the reach program with capacity
$\kcap(I)$ and dual $\mu_I$.

\begin{proposition}[Reveal-forcing frontier]
\label{prop:decoupling}
Under (RF), suppose the reveal-forcing LP is feasible on a neighborhood of
$\rho=0$.  For a reveal-wall target with
$\kappa^{\mathrm{rev}}_0(I)=0$, a supported revealed
realization-mass coordinate can be estimated to half-width $\varepsilon$ on
its first face after
\[
N=\tilde O\!\left(
\frac{1}{\mu^{\mathrm{rev}}_I\rok\pi\varepsilon^2}
\right).
\]
For a conditional-action certificate, this rate applies when the parent is
known or certified no more slowly; otherwise the target--parent joint rate
of Theorem~\ref{thm:capacity} governs.
If $\kappa_\rho^{\mathrm{rev}}(I)=\kcap(I)$ on a neighborhood of zero, then
$\mu^{\mathrm{rev}}_I=\mu_I$; under the full-plan factorization and parent
conditions of Corollary~\ref{cor:price_of_safety}, this upper rate is matching.
If the zero-budget safe set attains the unconstrained maximum reveal-forcing
reach, then $\mu^{\mathrm{rev}}_I=0$.
\end{proposition}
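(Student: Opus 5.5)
The plan is to prove the reveal-forcing frontier the same way as Theorem~\ref{thm:capacity_law}, with the polytope $\Xpoly$ replaced by $\Xpoly\cap\{H_Ix=0\}$. Then I convert that reach into a sample count by the argument of Theorem~\ref{thm:capacity}.

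\textbf{Step 1: frontier shape.} The program
\[
\kappa^{\mathrm{rev}}_\rho(I)=\max\{c_I^\top x:\ x\in\Xpoly,\ H_Ix=0,\ \Wval(x)\ge\vref-\rho\}
\]
is a parametric LP in the right-hand side $\rho$. By the dual lifting of \suppappendix{app:proofs}{A}, it is feasible near $\rho=0$ by hypothesis. The proof of Theorem~\ref{thm:capacity_law} uses only convexity of the feasible set, concavity of $\Wval$, and linearity of reach, and all three survive intersection with $\{H_Ix=0\}$. So $\rho\mapsto\kappa^{\mathrm{rev}}_\rho(I)$ is non-decreasing, concave, and piecewise linear wherever it is feasible. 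It also has a maximal initial affine interval $[0,\rho_1^{\mathrm{rev}}]$, $\rho_1^{\mathrm{rev}}>0$, with slope $\mu^{\mathrm{rev}}_I$. For a reveal-wall target, $\kappa^{\mathrm{rev}}_0(I)=0$, so on this first face $\kappa^{\mathrm{rev}}_\rho(I)=\mu^{\mathrm{rev}}_I\rho$.

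\textbf{Step 2: rate.} Deploy an optimal reveal-forcing plan $x$ at budget $\rho$ on the first face. It is floor-safe, and $H_Ix=0$ forces every non-fold continuation to a revealing terminal. By (RF) with equality for reveal-forcing plans, each hand reveals with probability $\mu^{\mathrm{rev}}_I\rho\,\pi$. For a supported child $a$, the reveal-row indicator $Z_{n,a}$ has mean $b_{I,a}(x)y_{\sigma(I)a}=\Theta(\mu^{\mathrm{rev}}_I\rho\pi)$, where the fixed interior label share is absorbed into constants. Empirical-Bernstein concentration followed by division by the known coefficient $b_{I,a}(x)$, exactly as in the proof of Theorem~\ref{thm:capacity}, gives half-width $O(\sqrt{\log(1/\delta)/(N\mu^{\mathrm{rev}}_I\rho\pi)})$. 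Inverting this yields the displayed $\tilde O$ bound. For conditional probabilities, I reuse the Lipschitz ratio bound from that proof. The rate is then $\min\{\mu^{\mathrm{rev}}_I\rho\pi,\lambda_{\rm par}(I)\}$, which reduces to the displayed rate when the parent is known or certified no more slowly.

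\textbf{Step 3: the comparisons.}
\begin{itemize}
\item \emph{Equal capacities.} If $\kappa^{\mathrm{rev}}_\rho=\kcap$ on $[0,\rho')$, the two concave functions coincide near zero, so their right derivatives at $0$ agree and $\mu^{\mathrm{rev}}_I=\mu_I$.
\item \emph{Matching.} Under this equality, Corollary~\ref{cor:price_of_safety} supplies the lower bound $\Omega(\log(1/\delta)/(\kcap\pi\varepsilon^2))$ with $\kcap=\mu_I\rho$. This matches the upper rate.
\item \emph{Zero slope.} If $\kappa^{\mathrm{rev}}_0(I)$ already equals the unconstrained maximum $\max\{c_I^\top x:x\in\Xpoly,H_Ix=0\}$, then the function is monotone and bounded above by that value. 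It is therefore constant in $\rho$, so $\mu^{\mathrm{rev}}_I=0$.
\end{itemize}

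\textbf{Expected obstacle.} The main difficulty is the claim that the first face has positive length $\rho_1^{\mathrm{rev}}>0$ with slope $\mu^{\mathrm{rev}}_I$. This needs finitely many affine pieces on a neighborhood of $0$, including at $\rho=0$ itself, so that the right derivative is realized linearly rather than only as a supergradient. I would handle it by noting that feasibility on a neighborhood makes the LP value piecewise linear on $[0,\bar\rho]$, with finitely many bases. Degenerate dual non-uniqueness at $0$ would be resolved by taking the right derivative, equivalently the smallest optimal multiplier, as in the proof of Theorem~\ref{thm:capacity_law}.
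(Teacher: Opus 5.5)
Your proposal is correct and follows essentially the same route as the paper's proof. The paper also treats the reveal program as a parametric LP over $\Sset\cap\{x:H_Ix=0\}$ whose only $\rok$-dependent right-hand side is the floor. It converts the first-face reach $\mu^{\mathrm{rev}}_I\rok$ into the sample bound via the concentration argument of Theorem~\ref{thm:capacity}, and it gets $\mu^{\mathrm{rev}}_I=\mu_I$ from equal value functions near zero (hence equal right derivatives). The matching lower bound comes from Corollary~\ref{cor:price_of_safety}; when you invoke it, state its full-plan factorization hypothesis explicitly. Your monotone-and-bounded argument for $\mu^{\mathrm{rev}}_I=0$ and your check that the first face has positive length spell out steps the paper leaves implicit.
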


\begin{proof}[Proof of Proposition~\ref{prop:decoupling}]
On the first face,
$\kappa^{\mathrm{rev}}_\rok=\mu^{\mathrm{rev}}_I\rok$.  Its maximizing
reveal-forcing plan produces informative events at rate
$\kappa^{\mathrm{rev}}_\rok\pi$, so the same concentration argument as
Theorem~\ref{thm:capacity} gives the stated upper bound.  The controlled program
optimizes over $\Sset$, while the implemented
reveal program optimizes over the smaller polytope
$\Sset\cap\{x:H_Ix=0\}$ of reveal-forcing flow equalities.  Both are
parametric LPs whose only $\rok$-dependent right-hand side is the floor; each
optimal floor multiplier is therefore the corresponding frontier's
supergradient.
Equality of the two value functions on a neighborhood of zero gives equal right
derivatives and hence $\mu^{\mathrm{rev}}_I=\mu_I$.  Corollary~\ref{cor:price_of_safety}
then supplies the matching lower bound under its stated conditions.  If
$\kappa^{\mathrm{rev}}_0$ already equals the unconstrained maximum over
$\{x\in\Xpoly:H_Ix=0\}$, the reveal
frontier is locally constant and has right derivative zero.
\end{proof}

A two-arm conservative bandit makes the dichotomy concrete (verified in
Appendix~\ref{app:nongame}): with a costed target arm of security loss $L_I$ and a separate reveal arm
of loss $L_{\mathrm{rev}}>0$, the certificate costs
$N\rok=\sigma^2/(\Delta^2\mu^{\mathrm{rev}}_I)$ with
$\mu^{\mathrm{rev}}_I=1/L_{\mathrm{rev}}$; coupling
($\mu^{\mathrm{rev}}_I=\mu_I$) holds exactly when
$L_{\mathrm{rev}}=L_I$.  If $L_{\mathrm{rev}}=0$, the reveal arm has positive
zero-budget capacity, so the reveal-wall formula does not apply.  For globally certified last-decision showdown targets,
the reveal equalities add no restriction, so $L_{\mathrm{rev}}=L_I$ holds by
construction and one dual governs both.  An operational certificate at a single
budget establishes equality of capacity values only at that budget, not equality of
the two objectives on neighboring faces.

\subsection{Certified Value Recovery}
\paragraph{Proof of Theorem~\ref{thm:value_recovery}.}
Write
\[
\begin{aligned}
V_N&=\max_{x\in\Sset}\min_{y\in\Cid(N_{\rm tot})}x^\top\Amat y,\\
V_\infty&=\max_{x\in\Sset}\min_{y\in C_\infty^{\mathrm{id}}}x^\top\Amat y.
\end{aligned}
\]
On the simultaneous-coverage event
$C_\infty^{\mathrm{id}}\subseteq\Cid(N_{\rm tot})$, set inclusion gives $V_N\le V_\infty$.
Let $x_\infty$ attain $V_\infty$.  Every $y\in\Cid(N_{\rm tot})$ and
$z\in C_\infty^{\mathrm{id}}$ belong to $\Cid(N_{\rm tot})$, so the definition of $\eta$ and
symmetry in $(y,z)$ imply
$x_\infty^\top\Amat y\ge x_\infty^\top\Amat z-\eta$.  Taking the two minima and then
maximizing over the finite-sample response gives $V_N\ge V_\infty-\eta$.
Moreover, coverage implies
$x_N^\top\Amat\ystar\ge\min_{y\in\Cid(N_{\rm tot})}x_N^\top\Amat y=V_N$.
The response is floor-safe because $x_N\in\Sset$, independently of coverage
(Theorem~\ref{thm:safety}).  By definition,
$\Gactivepop=V_\infty-\vref$, so subtracting $\vref$ from these inequalities yields
the certificate and realized-gain statements in the theorem.

For a row-level rate consequence, let $M$ collect every finite-width public
or reveal row, scaled to a realization-mass functional, and index these rows
by $i$.  Run one fixed collection portfolio for $N_{\rm tot}$ hands.  Let
$q_i>0$ be a certified lower bound on the effective one-hand observation rate
appearing in row $i$'s confidence radius, and suppose simultaneous
concentration gives
\[
w_i\le\widetilde O\!\left(\frac{h_i}{\sqrt{q_iN_{\rm tot}}}\right).
\]
Here a public row's $q_i$ includes its pilot allocation and public visitation,
whereas an action-specific reveal row uses its own label-event rate rather
than the total reveal rate of its target; the fixed $h_i$ absorbs row scaling
and support constants.  Standard Bernoulli or multinomial concentration gives
this display for the confidence rows used here.  Under (A4), for
every $x\in\Sset$ there exist $\theta,\nu$ satisfying
$\Amat^\top x=M^\top\theta+E_2^\top\nu$: indeed, (A4) says that
$\Amat^\top x$ annihilates $\ker E_2\cap\ker M$, whose orthogonal complement is
$\operatorname{range}(E_2^\top)+\operatorname{range}(M^\top)$.  Hence, for
$d=y-y'$ with $E_2d=0$ and $|(Md)_i|\le2w_i$,
\[
\eta
\le 2\max_{x\in\Sset}
\min_{\substack{\theta,\nu:\\
\Amat^\top x=M^\top\theta+E_2^\top\nu}}
\sum_i|\theta_i|w_i
\le\widetilde O\!\left(\frac{A_h(q)}{\sqrt{N_{\rm tot}}}\right),
\]
where
\[
A_h(q):=\max_{x\in\Sset}
\min_{\substack{\theta,\nu:\\
\Amat^\top x=M^\top\theta+E_2^\top\nu}}
\sum_i\frac{|\theta_i|h_i}{\sqrt{q_i}}.
\]
In particular,
$A_h(q)\le A_1h_{\max}/\sqrt{\underline q}$, where
\[
A_1:=\max_{x\in\Sset}
\min_{\substack{\theta,\nu:\\
\Amat^\top x=M^\top\theta+E_2^\top\nu}}\|\theta\|_1 .
\]
Here $h_{\max}:=\max_i h_i$ and $\underline q:=\min_iq_i$.
This flow-aware constant accounts for uncertainty propagated from observed
children to censored coordinates.  If a dedicated collection plan for row
$i$ has certified row-specific rate $r_i>0$ and is played for share
$\alpha_i$, then other plans can only add observations, so
$q_i\ge\alpha_i r_i$.  Thus the display is a total-budget bound; assigning
$n$ hands to each of $m$ rows gives $N_{\rm tot}=mn$ and
$q_i\ge r_i/m$.  This also explains the unbucketed-river
granularity effect: per-combination pins divide the portfolio's reveal mass among
coordinates with much smaller jointly achieved rates.  Finally, the set-stability
argument uses only $C_\infty^{\mathrm{id}}\subseteq\Cid(N_{\rm tot})$ and
$\ystar\in C_\infty^{\mathrm{id}}$, so it applies unchanged to any nested population
and finite-sample uncertainty sets.  Theorem~\ref{thm:active_id} identifies the
exact-pin population value with the oracle restricted to $\mathcal R_{\rok}$; when the
residual is payoff-null, $V_\infty-\vref=\Goracle$. \qed

\begin{corollary}[Portfolio allocation]
\label{cor:portfolio_allocation}
Fix a linear flow reconstruction $T$ satisfying
$\Amat^\top x-M^\top Tx\in\operatorname{range}(E_2^\top)$ for every
$x\in\Sset$; such a map exists under (A4) by a linear right inverse on the
relevant row space.  For every finite-width row $i$, let $r_i>0$ be a
certified lower bound on that row's observation rate under its dedicated
collection plan, including any public-pilot allocation or within-target
label share.  Let $b_i=\max_{x\in\Sset}|(Tx)_i|$, retain the confidence
constant $h_i$ from the preceding display, and discard rows with $b_i=0$.
Among mixtures of these dedicated plans, the separable envelope
\[
\eta\le\widetilde O\!\left(
\frac{1}{\sqrt{N_{\rm tot}}}
\sum_i\frac{b_ih_i}{\sqrt{\alpha_i r_i}}
\right)
\]
is minimized by
\[
\alpha_i^\star=
\frac{(b_i^2h_i^2/r_i)^{1/3}}
{\sum_j(b_j^2h_j^2/r_j)^{1/3}}.
\]
The resulting envelope is
\[
\eta\le\widetilde O\!\left(
\frac{\bigl[\sum_i(b_i^2h_i^2/r_i)^{1/3}\bigr]^{3/2}}
{\sqrt{N_{\rm tot}}}
\right).
\]
\end{corollary}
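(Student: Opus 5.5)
The corollary is a separable convex allocation problem layered on the row-level bound just proved for Theorem~\ref{thm:value_recovery}. The plan has three steps: justify the envelope, optimize it over the simplex, and substitute the optimizer back.

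\textbf{Step 1: justify the envelope.} Fix a mixture $\alpha\in\Delta_K$ of the dedicated plans. By the monotonicity remark in that proof (other plans only add observations), row $i$ has effective rate $q_i\ge\alpha_i r_i$. Simultaneous concentration then gives
\[
w_i\le\widetilde O\bigl(h_i/\sqrt{\alpha_i r_i N_{\rm tot}}\bigr).
\]
For the reconstruction, use the fixed linear map $T$ in place of the inner minimization over $(\theta,\nu)$. For any $x\in\Sset$ and any $d=y-y'$ with $E_2d=0$, the assumption $\Amat^\top x-M^\top Tx\in\operatorname{range}(E_2^\top)$ gives
\[
x^\top\Amat d=(Tx)^\top Md .
\]
Since $|(Md)_i|\le 2w_i$, this yields
\[
\eta\le 2\max_{x\in\Sset}\sum_i |(Tx)_i|\,w_i\le 2\sum_i b_i w_i .
\]
Rows with $b_i=0$ contribute nothing and can be discarded, which also avoids dividing by $\alpha_i=0$. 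Existence of $T$ follows because (A4) puts $\Amat^\top x$ in $\operatorname{range}(M^\top)+\operatorname{range}(E_2^\top)$. A linear right inverse on the span of $\{\Amat^\top x\}$ modulo $\operatorname{range}(E_2^\top)$ then defines $T$ linearly on the affine hull of $\Sset$.

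\textbf{Step 2: minimize over the simplex.} Set $c_i=b_ih_i/\sqrt{r_i}$, so the task is to minimize $\Phi(\alpha)=\sum_i c_i\alpha_i^{-1/2}$ over $\Delta$. $\Phi$ is strictly convex and blows up at the boundary, so the minimizer is interior and the KKT condition is
\[
\tfrac12 c_i\alpha_i^{-3/2}=\lambda .
\]
Hence $\alpha_i\propto c_i^{2/3}=(b_i^2h_i^2/r_i)^{1/3}$, and normalizing gives $\alpha^\star$. As a cross-check, Hölder's inequality with exponents $(3,3/2)$ applied to $\sum_i c_i^{2/3}=\sum_i (c_i\alpha_i^{-1/2})^{2/3}\alpha_i^{1/3}$ gives the lower bound
\[
\Phi(\alpha)\ge\Bigl(\sum_i c_i^{2/3}\Bigr)^{3/2}
\]
for every feasible $\alpha$, with equality at $\alpha^\star$. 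This certifies global optimality without appealing to KKT.

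\textbf{Step 3: substitute back.} Let $S=\sum_j c_j^{2/3}$. Then
\[
\Phi(\alpha^\star)=\sum_i c_i\,(c_i^{2/3}/S)^{-1/2}=S^{1/2}\sum_i c_i^{2/3}=S^{3/2},
\]
which is the stated envelope.

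The main obstacle is Step 1, in two places. First, the constant $h_i$ must not depend on $\alpha$. It absorbs label shares and support margins that belong to the dedicated plan, so it has to be stated per plan; the logarithmic factors (the union bound over rows) are hidden in $\widetilde O$. Second, $T$ must be a single map valid uniformly over $\Sset$. A map chosen pointwise per $x$, as with the inner minimization, would make the $b_i$ ill-defined. The linear right-inverse construction is what I would write out carefully; the remaining steps are routine calculus.
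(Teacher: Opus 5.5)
Your proposal is correct and follows the paper's own proof. The fixed reconstruction $T$ gives $\eta\le 2\sum_i b_iw_i$, the bound $q_i\ge\alpha_i r_i$ yields the separable envelope, and stationarity $\alpha_i\propto c_i^{2/3}$ (with $c_i=b_ih_i/\sqrt{r_i}$) followed by substitution gives $(\sum_i c_i^{2/3})^{3/2}$; your H\"older lower bound is a self-contained global-optimality certificate where the paper relies on strict convexity of $\sum_i c_i\alpha_i^{-1/2}$ on the simplex.
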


\begin{proof}
Taking coordinatewise maxima in the row-level display gives
$\eta\le 2\sum_i b_iw_i$.  Since
$w_i=\widetilde O(h_i/\sqrt{q_iN_{\rm tot}})$ and
$q_i\ge\alpha_i r_i$, with $h_i$ retained in its numerator, this yields
the stated separable envelope.
Set $c_i=b_ih_i/\sqrt{r_i}$.  The strictly convex objective
$\sum_i c_i\alpha_i^{-1/2}$ on $\sum_i\alpha_i=1$ has stationarity condition
$\alpha_i\propto c_i^{2/3}$, which gives $\alpha_i^\star$.
Substitution yields
$\sum_i c_i/\sqrt{\alpha_i^\star}
=(\sum_i c_i^{2/3})^{3/2}$.
\end{proof}

The shares in Corollary~\ref{cor:portfolio_allocation} optimize the displayed separable
envelope.  The following results remove its dedicated-probe restriction and also cover
target discovery when the public screen is null.

\subsection{Universal Auditing and Joint Routing}
\label{app:joint_routing}

Fix $m$ reveal-certified targets and $K$ floor-safe probes
$x^{(1)},\ldots,x^{(K)}$.  Let $R_{ij}\ge0$ be a certified lower bound on
the one-hand informative-event probability for target $i$ under probe $j$
over the opponent class under consideration.  These bounds may incorporate
controlled reach, opponent continuation, and the reveal-forcing suffix.

\paragraph{Proof of Theorem~\ref{thm:portfolio}.}
Let $\alpha^\star$ attain the max--min objective and set
$x_{\rm aud}=\sum_j\alpha_j^\star x^{(j)}$.
Convexity of $\Sset$ makes $x_{\rm aud}$ floor-safe.  If $p_i(x)$ denotes the
actual informative-event rate, linearity of root mixing gives
$p_i(x_{\rm aud})=\sum_j\alpha_j^\star p_i(x^{(j)})
\ge(R\alpha^\star)_i\ge q_{\rm aud}$.
The max--min objective is an LP;
its equality follows from LP duality, with $u$ a distribution over targets.
The confidence radius under the audit fraction is at most
$C\sqrt{\log(2m/\delta)/(\beta Nq_{\rm aud})}$.  The displayed strict
inequality makes this radius smaller than $\Delta_{\rm sig}/2$, so the reference lies
outside the interval of every $\Delta_{\rm sig}$-separated coordinate.  The
simultaneous confidence event gives the stated probability. \qed

\begin{corollary}[Public-null target discovery]
\label{cor:public_null_audit}
The discovery guarantee of Theorem~\ref{thm:portfolio} requires no public
anomaly.  Thus a predeclared public-null coordinate is detected whenever it is
$\Delta_{\rm sig}$-separated and $q_{\rm aud}>0$.  When $q_{\rm aud}=0$, the library and
certified lower bounds do not establish uniform discovery; impossibility follows
only when $R$ is exact or its zero entries are certified support impossibilities.
\end{corollary}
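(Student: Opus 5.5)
The corollary follows from Theorem~\ref{thm:portfolio}, whose discovery guarantee never uses the public stream. I would separate a positive part, which goes through when $q_{\rm aud}>0$, from a negative part, which only says what the hypotheses fail to establish when $q_{\rm aud}=0$.

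\emph{Positive part.} Fix a predeclared public-null coordinate $i$, meaning every public score $\widehat D_{\rm scr}(H)$ vanishes at its history. Suppose it is $\Delta_{\rm sig}$-separated from its reference. The discovery argument in the proof of Theorem~\ref{thm:portfolio} runs as follows.
\begin{enumerate}
\item Form the floor-safe max--min mixture $x_{\rm aud}=\sum_j\alpha^\star_j x^{(j)}$.
\item Linearity of root mixing gives $p_i(x_{\rm aud})\ge(R\alpha^\star)_i\ge q_{\rm aud}$.
\item The simultaneous radius is at most $C\sqrt{\log(2m/\delta)/(\beta N_{\rm tot}q_{\rm aud})}$, which is below $\Delta_{\rm sig}/2$ once $N_{\rm tot}$ exceeds the stated threshold.
\item On the $(1-\delta)$ coverage event, the reference therefore lies outside coordinate $i$'s interval.
\end{enumerate}
None of these steps refers to $\widehat D_{\rm scr}$ or to $\Cpub$. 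The probe library and $R$ are frozen before any data are collected, and the audit batch is disjoint from the screening pilot. So conditioning on a null screen leaves the audit's coverage event and its rate bound unchanged. I would state this independence explicitly, because it is the one point where public-null routing could otherwise contaminate the guarantee. With that in place, detection holds with no public anomaly.

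\emph{Negative part.} If $q_{\rm aud}=0$, LP duality in Theorem~\ref{thm:portfolio} gives a weighting $u\in\Delta_m$ with $\max_j u^\top R_{\cdot j}=0$. Every probe's certified lower bound is then zero on $\operatorname{supp}(u)$. Since the $R_{ij}$ are only lower bounds, the actual rates $p_i(x^{(j)})$ may still be positive, so the theorem's radius bound gives no control and uniform discovery is simply not established.

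For the impossibility direction, suppose $R$ is exact, or its zero entries are certified support impossibilities. Then some coordinate $i$ has $p_i(x)=0$ for every mixture of the library. The one-hand laws under the reference and under a $\Delta_{\rm sig}$-separated alternative then agree on every observation. This is the same off-event equality used in Theorem~\ref{thm:lower_bound}, now with $\Pr(E)=0$, so the per-hand KL is zero. By the Bretagnolle--Huber step, no test from library data can separate the two with error below $1/4$.

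The main obstacle is stating this converse carefully. Zero informative rate must actually imply identical observation laws, which needs an off-event-equality condition like the one in Definition~\ref{def:harddir}. I would assume it for the separated alternative rather than try to derive it.
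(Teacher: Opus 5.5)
Your proposal is correct and follows the paper's argument. The paper's proof applies Theorem~\ref{thm:portfolio} to the predeclared target set and notes that its confidence test uses reveal coordinates rather than the public screen, which is your positive part; your remark on independence of the disjoint audit batch is a harmless refinement, and the paper gives no argument for the $q_{\rm aud}=0$ clause, which your ``zero lower bounds give no control'' observation covers, while the off-event-equality assumption you flag is needed only for the stronger sufficiency direction the corollary does not assert.
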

\begin{proof}
Apply Theorem~\ref{thm:portfolio} to the predeclared target set; its confidence
test uses reveal coordinates rather than the public screen. \qed
\end{proof}

The audit objective protects the least-observed target.  Once target
importance weights are available, the full cross-reveal matrix yields a
different globally solvable allocation.

\begin{proposition}[Globally optimal flow-aware width for a certified rate matrix]
\label{thm:joint_routing}
Let $a_i>0$, assume some mixture has $(R\alpha)_i>0$ for every $i$, and set
\[
\Phi_R(\alpha):=\sum_{i=1}^m\frac{a_i}{\sqrt{(R\alpha)_i}},
\qquad \alpha\in\Delta_K.
\]
Set $\Phi_R(\alpha)=+\infty$ if any $(R\alpha)_i=0$.
Then $\Phi_R$ is convex, its optimal reveal vector $q^\star=R\alpha^\star$
is unique, and an optimal mixture exists with at most $m$ probes; the
optimizing mixture need not be unique.  A
covered mixture $\alpha$ with $q=R\alpha$ is optimal if and only if, with
$s_j=\sum_i a_iR_{ij}/q_i^{3/2}$,
\[
\begin{aligned}
s_j&\le\Phi_R(\alpha) &&\text{for every }j,\\
s_j&=\Phi_R(\alpha) &&\text{when }\alpha_j>0.
\end{aligned}
\]
The optimal value has the exact dual representation
\[
\min_{\alpha\in\Delta_K}\Phi_R(\alpha)
=
\left[
\max_{\substack{u\ge0\\R^\top u\le\mathbf 1}}
\sum_{i=1}^m a_i^{2/3}u_i^{1/3}
\right]^{3/2}.
\]
More generally, if each reveal rate is a nonnegative linear functional
$r_i^\top x$ on $\Sset$ and these are the rates used by the confidence
bound, then
\[
\min_{x\in\Sset}\sum_i\frac{a_i}{\sqrt{r_i^\top x}}
\]
is a convex program.  At a covered iterate $x$, its Frank--Wolfe linear
oracle is the safe weighted-reach LP
\[
\max_{z\in\Sset}\sum_i
\frac{a_i\,r_i^\top z}{(r_i^\top x)^{3/2}},
\]
and
\[
G(x):=\frac12\left[
\max_{z\in\Sset}\sum_i
\frac{a_i\,r_i^\top z}{(r_i^\top x)^{3/2}}
-\sum_i\frac{a_i}{\sqrt{r_i^\top x}}
\right]
\]
satisfies $0\le\Phi(x)-\Phi^\star\le G(x)$.  An additive-$\epsilon$
upper error in the linear oracle adds $\epsilon/2$ to this certificate.
\end{proposition}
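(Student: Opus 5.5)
The plan is to reduce everything to the reveal vector $q=R\alpha$ and exploit the separable convex function $\phi(q)=\sum_i a_i q_i^{-1/2}$, which is strictly convex and strictly decreasing in each coordinate on $q>0$ (extended by $+\infty$ at the boundary). \emph{Convexity and uniqueness.} $\Phi_R=\phi\circ R$ is the composition of a convex function with a linear map, so it is convex on $\Delta_K$. The set $Q=R\Delta_K$ is a compact polytope, and by hypothesis it meets $\{q>0\}$. Since $\phi\to+\infty$ at the boundary, a minimizer $q^\star$ exists in the interior region $\{q>0\}$. Strict convexity of $\phi$ (using $a_i>0$) then makes $q^\star$ unique, while $\alpha^\star$ need not be unique whenever $R$ has a nontrivial kernel on the simplex.

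\emph{At most $m$ probes.} Carathéodory alone gives $m+1$ columns, so I would sharpen it as follows. Because $\phi$ is strictly decreasing coordinatewise, $\nabla\phi(q^\star)$ has strictly negative entries. The first-order optimality inequality $\nabla\phi(q^\star)^\top(q-q^\star)\ge0$ on $Q$ then says that $q^\star$ lies on the face of $Q$ exposed by the nonzero normal $-\nabla\phi(q^\star)$. That face has dimension at most $m-1$, so Carathéodory applied inside it gives a representation by at most $m$ columns of $R$. I expect this step, together with the exact duality below, to need the most care.

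\emph{KKT characterization.} For covered $\alpha$, I would compute $\partial\Phi_R/\partial\alpha_j=-\tfrac12 s_j$. The identity $\sum_j\alpha_j s_j=\sum_i a_i q_i/q_i^{3/2}=\Phi_R(\alpha)$ follows directly. Convex optimality over the simplex means that the minimal directional derivative is attained on the support, i.e.\ $s_j\le\max_k s_k$ with equality on $\operatorname{supp}\alpha$. Combined with the averaging identity, this common maximum must equal $\Phi_R(\alpha)$, which is exactly the stated condition. The converse direction is the sufficiency of first-order conditions for a convex objective.

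\emph{Dual representation.} For weak duality, take any $u\ge0$ with $R^\top u\le\mathbf 1$. Then $u^\top q\le\alpha^\top\mathbf 1=1$. Writing $a_i^{2/3}u_i^{1/3}=(a_iq_i^{-1/2})^{2/3}(u_iq_i)^{1/3}$ and applying Hölder with exponents $3/2$ and $3$ gives
\[
\sum_i a_i^{2/3}u_i^{1/3}\le\Phi_R(\alpha)^{2/3}\,(u^\top q)^{1/3}\le\Phi_R(\alpha)^{2/3}.
\]
For strong duality, I would set $u_i=a_i q_i^{\star-3/2}/\Phi^\star$. The KKT condition gives $R^\top u=s/\Phi^\star\le\mathbf 1$, so $u$ is dual feasible. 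Substituting, $\sum_i a_i^{2/3}u_i^{1/3}=\Phi^{\star-1/3}\sum_i a_i q_i^{\star-1/2}=\Phi^{\star\,2/3}$, which closes the gap.

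\emph{General safe-set version.} Here $x\mapsto\sum_i a_i(r_i^\top x)^{-1/2}$ is convex by the same composition argument, and $\Sset$ is a polytope (Appendix A.1), so the problem is a convex program. The gradient is $-\tfrac12\sum_i a_i r_i/(r_i^\top x)^{3/2}$, so the Frank--Wolfe oracle $\min_z\nabla\Phi(x)^\top z$ is the displayed safe weighted-reach LP. For the certificate, convexity gives
\[
\Phi(x)-\Phi^\star\le\nabla\Phi(x)^\top(x-z^\star)\le\max_{z\in\Sset}\bigl[-\nabla\Phi(x)^\top z\bigr]+\nabla\Phi(x)^\top x .
\]
Using $\nabla\Phi(x)^\top x=-\tfrac12\Phi(x)$, the right-hand side equals $G(x)$, and nonnegativity follows from the first inequality. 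Finally, an additive-$\epsilon$ upper error in the oracle's maximum enters the bracket in $G(x)$ once and is multiplied by $\tfrac12$, so it adds $\epsilon/2$ to the certificate.
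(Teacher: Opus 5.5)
Your proposal is correct and follows essentially the paper's proof: convexity (by composition, where the paper writes out the Hessian) and uniqueness of $q^\star$ by strict convexity, the supporting hyperplane with normal $c_i=a_i(q_i^\star)^{-3/2}$ plus Carath\'eodory for the $m$-probe bound, simplex KKT with the averaging identity $\sum_j\alpha_js_j=\Phi_R(\alpha)$, the explicit dual point $u_i^\star=a_i(q_i^\star)^{-3/2}/\Phi^\star$, and the Frank--Wolfe gap, where you usefully make explicit that $\nabla\Phi(x)^\top x=-\tfrac12\Phi(x)$ turns the standard gap into exactly $G(x)$. The only real deviation is weak duality, which you obtain from H\"older with exponents $3/2$ and $3$ instead of minimizing $\sum_i a_iq_i^{-1/2}$ under the single constraint $u^\top q\le1$; this is slightly cleaner, since coordinates with $u_i=0$ need no separate limit. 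One small caution: your aside that $\alpha^\star$ is non-unique \emph{whenever} $R$ has a kernel direction on the simplex is too strong, because the optimal face may contain none of the dependent columns, though an example with duplicated columns suffices for the ``need not'' claim.
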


\begin{proof}
The Hessian of the finite-library objective is
\[
\nabla^2\Phi_R(\alpha)
=\frac34R^\top
\operatorname{diag}\!\left(
\frac{a_i}{(R\alpha)_i^{5/2}}
\right)R\succeq0.
\]
The extended objective is lower semicontinuous on the compact polytope
$R\Delta_K$, so the positive-coverage assumption gives an optimum.
Strict convexity in the positive reveal vector gives uniqueness of $q^\star$;
The simplex KKT conditions give the displayed inequalities,
because
$\nabla_j\Phi_R=-s_j/2$ and
$\sum_j\alpha_j^\star s_j=\Phi_R(\alpha^\star)$.
Writing $c_i=a_i/(q_i^\star)^{3/2}$, every column used by an optimal mixture
lies in the supporting hyperplane
$c^\top R_{\cdot j}=\Phi_R(\alpha^\star)$.  Thus $q^\star$ lies in the convex
hull of columns in an affine space of dimension at most $m-1$, and
Carath\'eodory's theorem gives an optimal mixture supported on at most $m$
probes.

For any $u\ge0$ with $R^\top u\le\mathbf1$, every feasible
$q=R\alpha$ satisfies $u^\top q\le1$.  Direct minimization over $q>0$
under this single constraint gives
\[
\inf_{\substack{q>0\\u^\top q\le1}}
\sum_i a_iq_i^{-1/2}
=\left(\sum_i a_i^{2/3}u_i^{1/3}\right)^{3/2},
\]
so the dual display is a lower bound.  At $q^\star$, set
$u_i^\star=a_i/[\Phi_R(\alpha^\star)(q_i^\star)^{3/2}]$.
The KKT inequalities imply $R^\top u^\star\le\mathbf1$, and substitution
makes the lower bound equal $\Phi_R(\alpha^\star)$, proving strong duality.
For the continuous safe set, composition of the convex decreasing map
$q\mapsto\sum_i a_iq_i^{-1/2}$ with linear positive rates gives convexity.
Its negative gradient supplies the stated linear oracle; the standard
Frank--Wolfe gap bounds global suboptimality by convexity.
\end{proof}

For the fixed flow reconstruction in
Corollary~\ref{cor:portfolio_allocation}, this objective is its full-matrix form.
If the coordinate confidence widths
satisfy $w_i\le h_i/\sqrt{Nq_i}$, with simultaneous-coverage factors absorbed
into $h_i$, then setting $a_i=2b_ih_i$ gives
$\eta\le N^{-1/2}\Phi_R(\alpha)$.  Thus the diagonal allocation is recovered
when probe $j$ reveals only coordinate $j$, whereas general $R$ prices all
cross-reveals.

\section{Confidence Sets under Censoring}
\label{app:pubconf}

\subsection{Public Confidence Set}
\label{app:c_pub}
For a public class $S$ and action $a$, the reach-weighted public frequency is
\begin{equation}
\pi_y^\omega(a\mid S)=
\frac{\sum_{I\in\mathcal I(S)}\omega_r(I)y_{\sigma(I)a}}
     {\sum_{I\in\mathcal I(S)}\omega_r(I)y_{\sigma(I)}} ,
\label{eq:pubfreq}
\end{equation}
where $\omega_r(I)$ is chance times the agent's reach, known to the agent
(reach-weighted evaluation in the tradition of \citealt{bowling2008strategy}).  A confidence
interval $[\ell,u]$ on \eqref{eq:pubfreq} linearizes into the public constraints
\begin{equation}
\ell\sum_I\omega_r(I)y_{\sigma(I)}
\le \sum_I\omega_r(I)y_{\sigma(I)a}
\le u\sum_I\omega_r(I)y_{\sigma(I)} ,
\label{eq:pub}
\end{equation}
and stacking them over public classes and actions yields the public confidence set
$\Cpub$.  Folds are public, so $\Cpub$ constrains every action's aggregate frequency;
what it cannot resolve is how a censored fold mass is distributed across the hidden types
behind a public state.

In the implemented batch controller, a fixed blueprint generates a pilot of
$N_P$ independent hands.  If public class $S$ is visited $n_S$ times and action $a$
is observed $C_{S,a}$ times, then
$\widehat\pi^\omega(a\mid S)=C_{S,a}/n_S$; conditional on $n_S$, this is a Bernoulli
mean with expectation \eqref{eq:pubfreq}.  When $n_S=0$ its interval is $[0,1]$.

\subsection{Active Reveal Confidence Set}
\label{app:c_obs}
The active set adds the reveal constraints to $\Cpub$, $\Cid=\Cpub\cap\{\text{reveal
constraints}\}$, and we use it in two forms.  The \emph{population} object pins each
showdown-revealed non-fold mass $y_{\sigma(I)a}$ with positive portfolio reveal rate
by its exact value (zero width).  The ladder (Table~\ref{tab:ladder}) is the
conditional full-coverage object: an ideal portfolio covers every relevant
type-specific $I\in\mathcal R_\rok$, and its restricted-oracle value numerically
equals $G^\star$ on the tested families by pooling reveal support across that
portfolio.
The \emph{finite-sample} object uses a reveal plan frozen after the pilot and a
disjoint batch of $N_R$ hands.  For each committing revealed label
$i=(I,a)$, define
\[
Z_{k,i}=\mathbf 1\{\text{hand }k\text{ reveals label }i\}.
\]
Here revealing label $i$ means reaching $I$, observing action $a$, and completing
the type-revealing suffix.
The variables $Z_{1,i},\ldots,Z_{N_R,i}$ are Bernoulli with
\[
\mathbb E Z_{k,i}=m_i=w_i y_{\sigma(I)a},
\qquad
\widehat m_i=\frac{1}{N_R}\sum_{k=1}^{N_R}Z_{k,i},
\]
where $w_i$ is the known chance-and-agent showdown-reach coefficient.  Thus the
denominator is every hand in the reveal batch, not only successful showdowns.
An interval $[\ell_i,u_i]$ for $m_i$ becomes the two linear rows
$\ell_i\le w_i y_{\sigma(I)a}\le u_i$.

For either public or reveal Bernoulli mean $\widehat q$ from $n\ge1$ samples, the
implementation uses the two-sided empirical-Bernstein radius
\begin{equation}
\begin{aligned}
b_n(\widehat q,\delta_j)
&=\sqrt{\frac{2\widehat v\log(3/\delta_j)}{n}}
+\frac{3\log(3/\delta_j)}{n},\\
\widehat v&=\widehat q(1-\widehat q),
\end{aligned}
\label{eq:eb_radius}
\end{equation}
clipped to $[0,1]$; for $n=0$ it uses $[0,1]$.  This is Theorem~1 of
\citet{audibert2009exploration} with $x=\log(3/\delta_j)$ and unit range.

\begin{proposition}[Fixed-batch simultaneous coverage]
\label{prop:fixed_batch_coverage}
Suppose \textnormal{(A1)--(A2)} hold.  Let $J_P$ and $J_R$ be the numbers of public
and reveal intervals.  Before observing the pilot, a public-only arm assigns
$\delta$ to its public family, while every active arm reserves $\delta/2$ for
public rows and $\delta/2$ for reveal rows; each reservation is divided uniformly
within its family and any unused share remains unused.  If the reveal plan and its
$J_R$ labels are frozen from the independent pilot before the reveal batch is sampled, the
intersection of \eqref{eq:pub} and the reveal rows contains $\ystar$ with
probability at least $1-\delta$.  On the same event, the corresponding
exact-row population set is contained in the finite-sample set:
$C_\infty^{\mathrm{id}}\subseteq\Cid$.
\end{proposition}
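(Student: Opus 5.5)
The plan is to condition on the independent pilot and split the failure budget with a union bound. Since the pilot and reveal batch are disjoint, I first condition on the entire pilot transcript. Under (A1), the reveal plan and its $J_R$ labels are measurable with respect to the pilot. Conditionally, the $N_R$ reveal hands are therefore i.i.d. draws under a fixed plan $x$. For each frozen label $i=(I,a)$, the indicators $Z_{1,i},\dots,Z_{N_R,i}$ are then conditionally i.i.d. Bernoulli with mean $m_i=w_i y^\star_{\sigma(I)a}$. Here the coefficient $w_i$ is known by (A2) and fixed given the pilot, because it depends only on chance, type priors and the frozen agent plan.

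The public rows are handled in the same way. Conditional on the visit count $n_S$, the count $C_{S,a}$ is a sum of $n_S$ i.i.d. Bernoulli variables with mean $\pi^\omega_{\ystar}(a\mid S)$ from \eqref{eq:pubfreq}. This uses (A1) and the fact that the blueprint plan is fixed before the pilot, so visits to $S$ do not bias the within-visit action law. When $n_S=0$, the interval $[0,1]$ is trivially valid.

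Next, apply the empirical-Bernstein radius \eqref{eq:eb_radius} to each row at level $\delta_j$. By Theorem~1 of \citet{audibert2009exploration} with unit range, each interval misses its mean with conditional probability at most $\delta_j$. For public rows this is done conditionally on $n_S$ and then integrated. Because the bound holds for every value of the conditioning variables, it also holds unconditionally.

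The budgets are then summed. An active arm uses $\delta_j=\delta/(2J_P)$ for public rows and $\delta/(2J_R)$ for reveal rows. A public-only arm uses $\delta/J_P$ for its public rows. A union bound over at most $J_P+J_R$ rows gives a total failure probability of at most $\delta$. Unused shares only decrease this total.

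On the complementary event, every true mean lies in its interval. Each interval linearizes exactly: the public constraint gives \eqref{eq:pub} after multiplying by the positive denominator, and each reveal constraint gives $\ell_i\le w_i y_{\sigma(I)a}\le u_i$. Hence $\ystar$ satisfies all rows, and $\ystar\in\Cpub\cap\Cid$. For the nesting claim, each population exact-pin row fixes the coordinate to the true value $m_i$ (or the true public frequency). That value lies in the corresponding finite interval on this event, so every $y$ satisfying the exact rows also satisfies the interval rows. This gives $C_\infty^{\mathrm{id}}\subseteq\Cid$ on the same event. Rows with zero population width but no finite-sample row, that is, labels not frozen, appear in neither set. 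I would state this restriction to the same label family explicitly.

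The main obstacle is the measurability and independence step. The labels and the number $J_R$ are data-dependent, so the union bound must be carried out conditionally on the pilot with $J_R$ treated as fixed. I would also verify that the pilot's selection does not enter the distribution of the reveal indicators. Disjointness of the batches together with (A1) gives exactly this, and it is what licenses applying a fixed-sample concentration inequality.
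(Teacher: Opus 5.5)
Your proposal is correct and matches the paper's proof step for step. Public rows get empirical-Bernstein coverage conditional on their visit counts, and reveal rows get it conditional on the frozen pilot. A union bound then runs within and across the pre-budgeted $\delta/2$ families, the covered intervals are substituted into the linear rows, and the inclusion $C_\infty^{\mathrm{id}}\subseteq\Cid$ follows because every exact-row opponent shares $\ystar$'s row values. Your two additions, treating $J_R$ as fixed given the pilot and restricting the population set to the same frozen label family, only make explicit what the paper leaves implicit.
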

\begin{proof}
Conditional on each public visit count, \eqref{eq:eb_radius} covers its Bernoulli
action probability with failure at most its preassigned $\delta_j$.  Conditional on
the pilot, the reveal plan and labels are fixed and each $\widehat m_i$ is the
Bernoulli mean above, so the same bound covers $m_i$.  A union bound within and
across the pre-budgeted families gives total failure probability at most $\delta$.
Substitution into the linear rows proves membership of $\ystar$.  Every
$y\in C_\infty^{\mathrm{id}}$ has the same frozen public and reveal row values
as $\ystar$, so it satisfies those covered intervals as well, proving the
set inclusion. \qed
\end{proof}

By construction $\Cid\subseteq\Cpub$, so active evidence never enlarges the feasible
opponent region.  The sample split is what permits data-dependent target selection
without post-selection confidence correction.  Under a fixed portfolio with
informative-event rate $q_i$, the successful count is
$\Theta(q_iN_R)$ with high probability and conditional-action error scales as
$\widetilde O(1/\sqrt{q_iN_R})$, matching Theorem~\ref{thm:capacity}.

\paragraph{Calibration diagnostic.}
The weighting and union bounds are empirically necessary.  On a heads-up no-limit hold'em
river endgame under showdown monitoring (60 rounds of 200 episodes, 25 seeds), the
reach-weighted, union-bounded public set holds anytime coverage $1.000$ across all four
diagnostic opponents and levels $\delta\in\{0.05,0.1,0.2\}$.  Dropping the reach weighting
collapses coverage to $0.04$--$0.12$ on the equilibrium and censored-fold opponents (where
showdown reach is most uneven); dropping the union bound degrades it as $\delta$ tightens
($0.96$ at $\delta{=}0.05$ to $0.64$ at $\delta{=}0.20$).  These failures fall on the
low-reach tail the MNAR effect (Theorem~\ref{thm:passive_mnar}) makes hardest.

\section{Algorithms}
\label{app:algos}

\subsection{Safe Active De-censoring Loop}
\label{app:alg:etc}
Algorithm~\ref{alg:sad} gives one acquisition batch.  Every policy it forms is solved
inside $\Sset$, so the floor holds independently of all data and screening decisions.

\paragraph{Simultaneous public screen.}
For each public history $H$, let $\hat p_H$ be its pilot action frequencies and
$p_H^{\rm bp}$ the known blueprint reference.  Conditional on $n_H$ visits, a union
bound over the $M=\sum_H|A_H|$ action coordinates gives
\[
\lVert \hat p_H-p_H\rVert_\infty
\le r_H:=\sqrt{\frac{\log(2M/\delta_r)}{2n_H}}
\quad\text{for every }H
\]
with probability at least $1-\delta_r$.  The triangle inequality then yields
\[
\tfrac12\lVert p_H-p_H^{\rm bp}\rVert_1
\ge \tfrac12\lVert\hat p_H-p_H^{\rm bp}\rVert_1
-\tfrac{|A_H|}{2}r_H.
\]
SAD routes only histories with a positive lower bound.  Under the reference opponent,
no history survives on the simultaneous event; under an alternative, the positive
remainder weights every reveal-certified private target behind that public history.
The screen uses no hidden label or opponent realization.

\paragraph{Matched acquisition budget.}
Every empirical cell is assigned a total budget $N$.  Public-only collection uses all
$N$ blueprint hands.  An active arm reserves $0.2N$ blueprint hands for the public
pilot, freezes its route, and draws the remaining $0.8N$ hands under the resulting
floor-safe portfolio.  Random probing uses the same split; oracle-targeting is reported only
as a population-assisted upper reference.  The pilot alone defines $\Cpub$ with known
blueprint reach weights.  The independent reveal batch estimates the joint
reach-and-reveal masses of committing non-fold sequences, from which flow recovers fold
mass.  One global confidence budget is split across the public and reveal families and
then across their statistical rows.

Each library component has a fixed certificate mask $M_{Ij}$, and
$\Omega_{Ij}=M_{Ij}\omega_{x_j}(I)$; a component receives no routing credit for
uncertified reach.  Singleton certificates always supply a valid fallback library,
although the linear screened objective may select only one component.  Universal
coverage comes from the separately reserved max--min audit share, not from that
screened objective.  The coarse-endgame rows commit to showdown under one compatible
weighted plan, so its mask covers every routed row.
Routing and response computation use the assigned empirical streams.  If the joint
empirical set is empty, deployment falls back to its empirical public set and then to
the blueprint.  The evaluated SAD route uses the screened public anomaly as its value
weight and maximizes weighted controlled reach in $\Sset$.  Capacity separately prices
the attainable observation rate; the reported capacity-only arm in
App.~\ref{app:ablation} isolates that role.

\paragraph{Hyperparameters.}
The matched-budget experiment fixes confidence level $\delta=0.1$, screening level
$\delta_r=0.1$, safety budget $\rho=0.5$, public fraction $0.2$, and probe tie-break
weight $B=10^6$ over ten matched seeds.  A target enters the route only when its screened
weight exceeds $10^{-6}$.  These quantities are fixed across opponents and budgets.

\subsection{Public-Null Auditing and Joint Routing}
\label{app:joint_algorithms}

When the public screen selects no target, a data-independent audit mixture can preserve
discovery.  For a fixed safe probe library, we form its certified cross-reveal matrix
$R$, setting $R_{Ij}=0$ whenever component $j$ lacks a reveal certificate for target
$I$, and solve
$\max_{\alpha\in\Delta_K}\min_i(R\alpha)_i$.  A direct LP gives the exact solution for
an enumerated library.  For an implicit safe set, Hedge maintains a distribution over
targets and calls the safe weighted-reach LP oracle at each iteration; averaging the
returned realization plans gives a feasible primal plan, while the smallest weighted
oracle value gives a dual upper bound.  In exact arithmetic these quantities sandwich
the max--min audit rate; the implementation reports their binary64 numerical
counterparts.  The root audit tag is sampled privately before each hand.  Thus a mixture
component may be analyzed as its own reveal-certified stratum while all blueprint
filler hands remain charged to the acquisition budget; this stratified interpretation
requires a stationary opponent that cannot condition on the tag.

Fixed positive row scales $s_i$ may normalize heterogeneous targets by replacing
$R_{ij}$ with $R_{ij}/s_i$.  A normalized minimum $\gamma$ then certifies physical
rate at least $s_i\gamma$ for target $i$; confidence radii use these physical rates.
The four-target benchmark uses solo reveal-rate upper enclosures as $s_i$.

The screened linear objective, the max--min audit, and the following width objective
are distinct allocations.  In the joint-routing benchmark, a fixed audit share is
reserved first and the remaining share is optimized for width.  Let
$q_i=r_i^\top x$ be the complete reveal rate of target
$i$, including reveals contributed by probes aimed elsewhere.  We minimize
$\sum_i a_i/\sqrt{q_i}$ by Frank--Wolfe.  At iteration $t$, the oracle weights are
$a_i/q_{i,t}^{3/2}$; one weighted-reach LP over $\Sset$ supplies the search
vertex, and one-dimensional line search updates the mixture.  We reserve a fixed audit
share before this optimization, so every iterate retains positive coverage.  The final
Frank--Wolfe residual is recomputed with a fresh oracle call.  In exact arithmetic it
is a global suboptimality bound for the fixed-audit objective; the implementation
reports its numerical value using binary64 outer arithmetic.  For each oracle call, the
safety-dual opponent is quantized to an exactly normalized dyadic behavioral
plan.  Outward rounding encloses its realization weights, the payoff products,
and the player treeplex backward induction, producing a floating-point upper
enclosure of the exact Lagrangian bound for the supplied binary64 weights.  A
$H_Ix=0$ reveal program uses the same recursion with each specified child fixed
rather than maximizing over its siblings, so the enclosure is taken over the
same constrained treeplex.  A
$10^{-5}$ blueprint share interiorizes each returned search vertex.  Each oracle LP imposes the
worst-case floor; independent best responses additionally check the solo probes,
audit, separable baseline, and returned joint mixture.

\subsection{Evidence Updates}
The primary protocol keeps the evidence streams disjoint.  The \emph{public stream}
contains the blueprint pilot and drives both $\Cpub$ and the frozen route.  The
\emph{reveal stream} contains only the subsequent probe hands; it records a private type
when a non-fold continuation reaches showdown and estimates the corresponding
realization mass with the total reveal-batch denominator.  Public rows are keyed by
$H$, whereas reveal rows remain keyed by $I\in\mathcal I(H)$.  The agent knows the
collection policy and therefore the controlled reach coefficients exactly.

\subsection{Verification}
\label{app:certaudit}
The certified floor is audited directly: for a deployed plan $x$ the worst-case value
$\Wval(x)=\min_{y\in\Ypoly}x^\top\Amat y$ is the opponent's best response to a fixed $x$,
computed by a deterministic bottom-up pass over the opponent sequence form---no linear
program and no confidence set are involved---so every reported plan is checked against
$\Wval(x)\ge\vref-\rok$ independently of the model used to form it.  The implementation
uses binary64 arithmetic and records a violation only below
$\vref-\rok-10^{-6}$.  Across the $574$ floor-safe controller and ablation
records, the minimum computed response slack is $-6.53{\times}10^{-8}$ and the minimum
acquisition-policy slack is $-3.20{\times}10^{-13}$; hence every record passes
the stated numerical audit.  We use
\emph{numerically floor-certified to $10^{-6}$} for this deterministic,
model-independent audit and
\emph{statistically certified} for the high-probability exploitation lower bound a robust
response attains over a confidence set (valid at level $1-\delta$ under coverage); every
deployed plan is numerically floor-certified to $10^{-6}$ regardless of model correctness,
while its exploitation
value is statistically certified only when the confidence set covers $\ystar$.  The exploitation
plans themselves are the robust responses $\max_{x\in\Sset}\min_{y\in C}x^\top\Amat y$ over
$C\in\{\Cpub,\Cid\}$, each one sequence-form LP (HiGHS) by strong duality: writing
$C=\{y\in\Ypoly:Gy\le h\}$ and $\Ypoly=\{y\ge0:E_2y=e_2\}$,
\[
\begin{gathered}
\max_{x\in\Xpoly,\,z,\,\eta,\,\lambda\ge0,\,\nu}\ z
\quad\text{s.t.}\quad
z\le e_2^\top\eta-h^\top\lambda,\\
E_2^\top\eta-G^\top\lambda\le \Amat^\top x,\quad
E_2^\top\nu\le \Amat^\top x,\quad
e_2^\top\nu\ge \vref-\rok,
\end{gathered}
\]
whose last two rows are the floor's dual certificate.  Reveal-forcing policies
are audited per target by the same machinery---the reach LP for
$\kcap(I)$ or constrained reveal LP for $\kappa^{\mathrm{rev}}_\rok(I)$,
a structural reveal-forcing-suffix check, and the floor verifier on the
resulting plan.  Of the $60$ reported targets, $48$ are direct-showdown lines
with $\kappa^{\mathrm{rev}}_\rok=\kcap$ and hence satisfy
Def.~\ref{def:class}; the other $12$ use whole-information-set
call equalities on every chance-compatible response information set and are
reported only with $\kappa^{\mathrm{rev}}_\rok$.  All fixed-action, showdown-reach,
optimality-enclosure, and floor checks pass.  The end-to-end
controller instead assigns weights over $2160$ type-specific river labels ($2158$ have
positive capacity).  Its active set admits only showdown-committing sequence rows with
positive controlled reach; sequence-form flow then recovers the associated censored
mass.  Thus the routing-label count is neither the number of statistical rows nor an
additional per-label certificate claim.  In the
primary $N=10^6$ matched-budget grid, all $164$ probe and response cells pass the
floor check; the two lower-budget runs bring the unique total to $404$, with no
confidence-set infeasibility or timeout.  The $30$ reported-protocol capacity-only
cells also pass every floor and feasibility check.  All $488$ plans in the
complementary population-assisted routing grid also pass.  When a finite-sample
confidence set is empty---the naive $\Cpub\cap\Cid$ box can be infeasible under the MNAR
selection effect---the solver reports infeasibility and the controller falls back to the
feasible public set.

\section{Additional Experiments}
\label{app:moreexp}

This appendix collects supporting diagnostics for Safe Active De-censoring.

\subsection{Active Confidence-Set Coverage}
\label{app:coverage}
Theorem~\ref{thm:value_recovery} assumes the active set $\Cid(N_{\rm tot})$ contains $\ystar$ with
high probability and has payoff-relevant width $\eta\to0$.  Figure~\ref{fig:cid_coverage}
measures both directly on the coarse endgame, sample-split so that targets are chosen from
public statistics independently of the showdown counts that form the intervals.  Panel~(a)
reports the \emph{per-coordinate} containment fraction---the share of pinned coordinates whose
interval traps the truth; since $\Cid$ is the intersection of these coordinate intervals over
the spatial union bound of Appendix~\ref{app:pubconf}, every-coordinate containment in a cell
implies $C_\infty^{\mathrm{id}}\subseteq\Cid(N_{\rm tot})$ and hence
$\ystar\in\Cid(N_{\rm tot})$ there.  The plotted $N$ is this total batch size.
The active set holds at $1$ in every cell at every budget
(solid; the three families coincide), so the set covers $\ystar$ throughout, whereas the
passive showdown box (dashed) \emph{loses} coordinate containment as $N$ grows---from full at
$N\le10^4$ to $0.69$--$0.74$ at $N=10^6$.  Coverage that
degrades with more data is the finite-sample signature of
Theorem~\ref{thm:passive_mnar}: the passive intervals concentrate around the
showdown-conditioned estimand, which excludes the truth.  Panel~(b) shows the active
interval width shrinking at the $1/\sqrt N$ rate (guide line), so $\eta\to0$ as the theorem
requires.  Together the panels support the coverage and width hypotheses of the
certified-recovery theorem, and the passive curve is why the active protocol is needed to
meet them.
\begin{figure*}[t]
\centering
\includegraphics[width=0.82\textwidth]{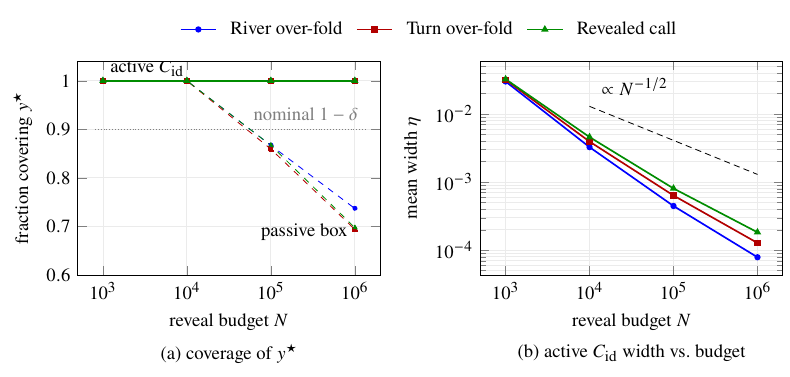}
\caption{Finite-sample behavior of the active confidence set $\Cid$ on the coarse
turn--river endgame ($\rho=0.5$, $\delta=0.1$, $5$ matched seeds).  (a)~Fraction of
pinned coordinates whose interval contains the truth $\ystar$ (per-coordinate containment; the
set covers $\ystar$ in a cell when this is $1$, via the union bound of
Appendix~\ref{app:pubconf}): the active set (solid) holds
at $1$ for all three deviating families and budgets, while the passive showdown box (dashed)
loses containment as $N$ grows.  (b)~Active interval width versus budget, shrinking at
the $1/\sqrt N$ rate.  Together these support the coverage and width hypotheses of
Theorem~\ref{thm:value_recovery}.}
\label{fig:cid_coverage}
\end{figure*}

\subsection{A Necessity Counterexample}
\label{app:necessity}
Proposition~\ref{prop:necessity} shows that (A3) cannot be omitted from a general
point-identification claim.
Table~\ref{tab:necessity} instantiates the proof's minimal two-type construction:
without reveal-controllability the two opponents have monitoring gap $0$ and remain in
one nontrivial fiber; adding a forced showdown separates them and collapses that fiber,
recovering Theorem~\ref{thm:active_id}'s identification.
\begin{table}[t]
\centering\small

\begin{tabular}{lrr}
\toprule
Setting & Mon.\ gap & Fiber diam. \\
\midrule
Non-reveal-controllable        & $0.000$ & $1.000$ \\
Reveal-controllable (restored) & $1.000$ & $0.000$ \\
\bottomrule
\end{tabular}
\caption{Point-identification boundary in the two-type construction of
Proposition~\ref{prop:necessity}.  \emph{Monitoring gap} is total variation between
the public-action, showdown-label, and payoff laws; \emph{fiber diameter} is total
variation between opponent plans still compatible with one monitored law.}
\label{tab:necessity}
\end{table}

\subsection{Residual Safe-Payoff Width Beyond A3}
\label{app:residual}
Reveal-controllability is sufficient but not necessary (Prop.~\ref{prop:payoff_boundary}):
when forcing is only partial, the operative quantity is the one-sided safe-payoff width
$\Delta_M^-(\ystar)$, which bounds the exploitation lost to the residual fiber
(Prop.~\ref{thm:residual_recovery}) and is computed exactly by two linear programs---an
$\Sset$-constrained best response to $\ystar$ and the robust response over the observation
fiber $\mathcal F_M(\ystar)$.  We audit it directly on the bucketed turn--river endgame
($5{,}221$ sequences, $\rok=0.5$) against four deployment opponents.  Forcing every
non-fold continuation collapses the fiber to a point and recovers the safe oracle value
$V(\ystar)=\max_{x\in\Sset}x^\top\Amat\ystar$---the point-identification regime of
Theorem~\ref{thm:active_id}---while leaving continuations unforced one at a time prices each
per-continuation shortfall $V-R_M$ by a single fiber LP.  The $2627$ continuations with no
fold sibling are payoff-null by flow and need no LP; the remaining $1198$ fold-adjacent
continuations are each priced exactly.

Table~\ref{tab:residual_width} reports the result: for every leak $92$--$99\%$ of
continuations are payoff-null---leaving them unforced changes the certified value by
nothing---so the quotient A3 must force is a small, identifiable fraction of the tree
(the equilibrium control is uniformly null).  The residual concentrates with censoring
depth (the deep turn over-fold carries $52\%$ of it on its top eight continuations), yet
forcing \emph{only} the top-$k$ recovers essentially nothing: the robust value is set by
the worst \emph{unforced} value-relevant direction, so the quotient must be forced as a
whole.  It is also \emph{structurally localized}
(Table~\ref{tab:residual_localize}): the value-relevant continuations collapse onto
$3$--$18$ public lines with $90\%$ of the mass on $2$--$7$ of them, identifiable in
advance---so ``reveal-controllable after every continuation'' becomes ``force a few
identifiable lines,'' with the bounded residual of Prop.~\ref{prop:payoff_boundary} on
any line left unforced.
\begin{table}[t]
\centering\small

\setlength{\tabcolsep}{2.3pt}
\begin{tabular}{@{}lrrrrr@{}}
\toprule
Opponent & $V{-}\vref$ & val.-rel. & \%\,null & worst\,$\Delta^-$ & \%\,expl \\
\midrule
Equilibrium (control) & $0.000$ & $0$   & $99.9$ & $0.000$ & $0.0$ \\
River over-fold       & $0.595$ & $300$ & $91.8$ & $0.021$ & $3.5$ \\
Revealed call         & $1.434$ & $213$ & $94.4$ & $0.040$ & $2.8$ \\
Turn over-fold (deep) & $0.517$ & $49$  & $98.7$ & $0.062$ & $12.0$ \\
\bottomrule
\end{tabular}
\caption{Residual safe-payoff width on the bucketed turn--river endgame
($5{,}221$ sequences, $\rok=0.5$), per deployment opponent.  $V{-}\vref$ is the safe
exploitation available; \emph{val.-rel.}\ counts continuations with positive single-drop
shortfall (of $3825$ non-fold continuations; the rest are payoff-null, $2627$ by flow
alone); worst\,$\Delta^-$ is the largest single-continuation one-sided width
(Prop.~\ref{prop:payoff_boundary}), absolute and as \% of exploitation.  The equilibrium
control carries no exploitable value and is uniformly null; across genuine leaks
$92$--$99\%$ of continuations are payoff-null, so the payoff-relevant quotient A3 must
force is a small fraction of the tree.}
\label{tab:residual_width}
\end{table}

\begin{table}[t]
\centering\small

\setlength{\tabcolsep}{4pt}
\begin{tabular}{@{}lrrr@{}}
\toprule
Opponent & val.-rel.\ cont. & public lines & lines for $90\%$ \\
\midrule
River over-fold       & $300$ & $18$ & $7$ \\
Revealed call         & $213$ & $9$  & $2$ \\
Turn over-fold (deep) & $49$  & $3$  & $2$ \\
\bottomrule
\end{tabular}
\caption{Structural localization of the payoff-relevant residual ($\rok=0.5$).
The value-relevant continuations---those with positive single-drop shortfall---do
not merely number few; they collapse onto a small set of \emph{public betting
lines} (line${}\times{}$action groups, the unit an agent can target by driving
that line to a showdown).  ``Lines for $90\%$'' is the number of public lines
carrying $90\%$ of the residual exploitation.  A partial-A3 agent therefore knows
\emph{a priori} which lines to force, and value concentrates as censoring deepens
(the turn leak rides on two lines).}
\label{tab:residual_localize}
\end{table}

\subsection{Depth Generality}
\label{app:depth}
A synthetic censored chain swept over depths $D\in\{1,\dots,6\}$ with four opponent
types (Table~\ref{tab:depth}) checks the mechanism outside poker: the censored fiber is
not opponent-controlled, so full-coverage active reveal closes the payoff-relevant ambiguity
($\Gactivepop=G^\star$) with the floor met at every depth; the added value over the
public aggregate is non-monotonic (largest at $D{=}3$, zero at $D{=}4$), so the claim is
coverage, not a trend.
\begin{table}[t]
\centering\small

\begin{tabular}{crrrr}
\toprule
Depth $D$ & $G_{\mathrm{pub}}$ & $\Gactivepop$ & $G^\star$ & $\Delta_{\mathrm{rec}}$ \\
\midrule
1 & $0.244$ & $0.352$ & $0.352$ & $0.108$ \\
2 & $0.303$ & $0.367$ & $0.367$ & $0.065$ \\
3 & $0.276$ & $0.396$ & $0.396$ & $0.120$ \\
4 & $0.417$ & $0.417$ & $0.417$ & $0.000$ \\
5 & $0.417$ & $0.438$ & $0.438$ & $0.021$ \\
6 & $0.417$ & $0.438$ & $0.438$ & $0.021$ \\
\bottomrule
\end{tabular}
\caption{Depth generality on the synthetic censored chain
($\rho=0.1$, 4 opponent types).  Full-coverage active reveal reaches the oracle at every censoring depth
($\Gactivepop=G^\star$) and the verified safety value meets the floor.  $\Delta_{\mathrm{rec}}$
is non-monotonic in depth---largest where the public line hides the leak ($D{=}3$), zero
where the public aggregate already captures it ($D{=}4$).}
\label{tab:depth}
\end{table}

\subsection{Capacity-Frontier Distribution}
\label{app:kappa}
The study selects the $60$ largest non-fold realization-mass deviations of the
deep-turn over-fold opponent.  Forty-eight continuations lead directly to showdown.
At the other $12$, a non-fold action can reopen betting, so the reveal program fixes
call at every chance-compatible player-$0$ information set after that action.  The
resulting LP maximizes $\kappa_\rho^{\mathrm{rev}}$ over the smaller floor-safe
treeplex.  Its audit checks every fixed-flow equality, verifies that each non-fold
child's showdown reach equals the target's controlled reach, and independently
recomputes the value floor.

Table~\ref{tab:kappa} summarizes these $12$ nontrivial-suffix targets.  Raising
$\rho$ from $0.1$ to $0.5$ increases median
$\kappa_\rho^{\mathrm{rev}}$ from $0.0044$ to $0.0087$ and median joint reveal
rate from $7.34{\times}10^{-4}$ to $1.51{\times}10^{-3}$.  Here
$\pi_{\ystar}(I)=\sum_{a\ne f}y^\star_{\sigma(I)a}$ includes residual opponent
reach as well as continuation mass.

We then simulate each of the $24$ target--budget policies at
$N\in\{10^5,3{\times}10^5,10^6\}$ over five independent seeds, totaling
$168$ million hands.  At $N=10^6$, the regression of log mean observed frequency
on log predicted $\kappa_\rho^{\mathrm{rev}}\pi_{\ystar}$ has slope $1.010$ and
$R^2=0.999$; median and maximum relative errors are $1.1\%$ and $2.6\%$.
Across all sample sizes, the log conditional-interval half-width has slope
$-0.501$ against $Nq_\rho^{\mathrm{force}}$ ($R^2=0.959$), where
$q_\rho^{\mathrm{force}}:=\kappa_\rho^{\mathrm{rev}}\pi_{\ystar}$, and the realized Wilson coverage
is $97.5\%$ over the $360$ simulation cells.  Thus the measured reveal frequency
and interval contraction separately recover the two factors in the rate law.
If a prefix certificate is slower, its joint rate replaces this target-level
diagnostic.
\begin{table}[t]
\centering
\small

\setlength{\tabcolsep}{5pt}
\begin{tabular}{@{}rrrrr@{}}
\toprule
$\rho$ & $\widetilde\kappa^{\mathrm{rev}}_\rho$
& $\kappa^{\mathrm{rev},(10)}_\rho$
& $\widetilde q_\rho^{\mathrm{force}}$
& $\widehat N_{\mathrm{cert}}$ \\
\midrule
$0.1$ & $0.0044$ & $0.0030$ & $7.34{\times}10^{-4}$ & $5.5{\times}10^5$ \\
$0.5$ & $0.0087$ & $0.0072$ & $1.51{\times}10^{-3}$ & $2.6{\times}10^5$ \\
\bottomrule
\end{tabular}
\caption{Reveal-forcing capacity on the $12$ selected deep-turn targets whose
non-fold continuation can reopen betting.  Each target--budget value solves the
floor-safe reach LP with whole-information-set call equalities.  Here
$q_\rho^{\mathrm{force}}=\kappa_\rho^{\mathrm{rev}}\pi_{\ystar}$ and
$\widehat N_{\mathrm{cert}}=1/(\widetilde q_\rho^{\mathrm{force}}\varepsilon^2)$ at
$\varepsilon=0.05$.  All fixed-flow, showdown-reach, and floor audits pass.}
\label{tab:kappa}
\end{table}

\subsection{Non-Game Instances of the Frontier}
\label{app:nongame}
The safe observation frontier (Thm.~\ref{thm:capacity_law}) and the decoupling characterization
(Prop.~\ref{prop:decoupling}) use only a convex policy body, a concave floor, and a linear reach, so
they are not specific to sequence form.  We confirm both on two non-game convex bodies by solving the
relevant LPs with their floor duals exposed; Figure~\ref{fig:price_of_safety} reports the game frontier.

\emph{Conservative bandit (the simplex).}  Arms with known means under the floor
``expected reward $\ge\vref-\rok$'': a below-baseline wall arm has exactly
$\kcap=\min\{1,\mu_I\rok\}$ with $\mu_I=1/L_I$ and $L_I$ as in
Proposition~\ref{prop:capacity_mix}; the achievable
slope equals the dual to numerical precision, and $N_{\mathrm{cert}}\rok$ is constant;
the root parent is known.
Splitting the costed arm from the reveal arm exhibits the two cases of
Prop.~\ref{prop:decoupling}:
$N_{\mathrm{cert}}\rok=50/25/100$ as the reveal costs $1\times/0.5\times/2\times$ the
target, collapsing toward $0$ when the reveal is free---coupling iff co-active.

\emph{Constrained MDP (the occupancy polytope).}  A discounted safe self-loop versus a
risky depth-$d$ chain under a reward floor: the target is a wall, and
Table~\ref{tab:nongame} shows $s_I=\mu_I$ to four digits, constant
$N_{\mathrm{cert}}\rok$, and $\mu_I=\gamma^{d}$ with known root parent---the discount is the occupancy analogue
of reach decay, so deeper targets cost more, exactly as the deep-turn targets do in
hold'em (\S\ref{sec:exp:kappa}).

\begin{table}[t]
\centering
\caption{Constrained-MDP instance ($\gamma=0.9$, $\sigma^2/\Delta^2=100$): the safe observation frontier and safe
certification cost hold on the occupancy polytope, with floor dual $\mu_I=\gamma^{d}$.}
\label{tab:nongame}
\small
\begin{tabular}{lcccc}
\toprule
target depth $d$ & $0$ & $1$ & $2$ & $4$\\
\midrule
floor dual $\mu_I$ & $1.000$ & $0.900$ & $0.810$ & $0.656$\\
slope $(\kcap-\kappa_0)/\rok$ & $1.000$ & $0.900$ & $0.810$ & $0.656$\\
$N_{\mathrm{cert}}\cdot\rok$ & $100.0$ & $111.1$ & $123.5$ & $152.4$\\
\bottomrule
\end{tabular}
\end{table}

\subsection{Matched-Budget Controller and Complementary Routing Diagnostic}
\label{app:deploy}
\begin{table}[tb]
\centering\small

\begin{tabular*}{\columnwidth}{@{\extracolsep{\fill}}l*{4}{r}@{}}
\toprule
Opponent family
& $G_{\mathrm{pub}}$ & $\Gactivepop$ & $G^\star$ & $\Delta_{\mathrm{rec}}$ \\
\midrule
Equilibrium (control)      & $0.000$ & $0.000$ & $0.000$ & $0.000$ \\
River over-fold (uniform)  & $0.323$ & $0.479$ & $0.479$ & $0.156$ \\
River over-fold (strong)   & $0.434$ & $0.595$ & $0.595$ & $0.161$ \\
Turn over-fold (deep)      & $0.039$ & $0.517$ & $0.517$ & $0.478$ \\
Revealed call (strong)     & $0.313$ & $1.434$ & $1.434$ & $1.121$ \\
\bottomrule
\end{tabular*}
\caption{Full-coverage population benchmark on the coarse bucketed endgame (gains
over $\vref=-0.043$, $\rok=0.5$).  Exact pins match full monitoring, and
$G_{\mathrm{pub}}$ uses the ideal portfolio's public reach weights; the
blueprint-stream reference in Table~\ref{tab:deploy_pf} therefore differs.
Fine results are in \suppapp{app:deploy_fine}{D.8}.}
\label{tab:ladder}
\end{table}

The sample-split controller in Table~\ref{tab:sad_e2e} charges all empirical arms the
same total budget, freezes its route from a public pilot, and fits the reveal constraints
from an independent batch.  Table~\ref{tab:sad_budget_sweep} expands the result across
budgets.  At $3{\times}10^5$ hands, SAD has crossed the public-only certificate on both
over-fold families while untargeted reveal remains below it.  At $10^5$, SAD already
improves on Random for every deviating family, reducing the statistical cost of the
sample split (Holm-adjusted paired $p\le0.010$ for certified value).
\begin{table*}[t]
\centering
\small

\setlength{\tabcolsep}{3.7pt}
\begin{tabular}{@{}llrr|rr|rrr@{}}
\toprule
& & \multicolumn{2}{c}{$\Cpub$} & \multicolumn{2}{c}{Random reveal}
& \multicolumn{2}{c}{SAD} & \\
Opponent & $N$ & Cert. & Real. & Cert. & Real. & Cert. & Real. & Tgt. \\
\midrule
River over-fold
& $10^5$ & $.413\;(.004)$ & $.645\;(.001)$
& $.295\;(.013)$ & $.563\;(.022)$
& $.322\;(.034)$ & $.605\;(.061)$ & $288$ \\
& $3{\times}10^5$ & $.460\;(.003)$ & $.684\;(.001)$
& $.437\;(.022)$ & $.697\;(.023)$
& $\mathbf{.571}\;(.009)$ & $.802\;(<.001)$ & $324$ \\
& $10^6$ & $.485\;(.003)$ & $.688\;(<.001)$
& $.588\;(.027)$ & $.765\;(.029)$
& $\mathbf{.692}\;(.002)$ & $.810\;(<.001)$ & $356$ \\
\midrule
Turn over-fold
& $10^5$ & $.020\;(.003)$ & $.068\;(.001)$
& $.003\;(.001)$ & $.026\;(.004)$
& $.016\;(.002)$ & $.058\;(.003)$ & $90$ \\
& $3{\times}10^5$ & $.034\;(.002)$ & $.075\;(.001)$
& $.026\;(.005)$ & $.110\;(.007)$
& $\mathbf{.061}\;(.002)$ & $.120\;(.001)$ & $90$ \\
& $10^6$ & $.045\;(.001)$ & $.080\;(.001)$
& $.071\;(.009)$ & $.138\;(.003)$
& $\mathbf{.093}\;(.001)$ & $.126\;(.002)$ & $90$ \\
\midrule
Revealed call
& $10^5$ & $.201\;(.003)$ & $.304\;(.002)$
& $.172\;(.006)$ & $.286\;(.003)$
& $.178\;(.002)$ & $.291\;(.003)$ & $315$ \\
& $3{\times}10^5$ & $.222\;(.002)$ & $.309\;(.001)$
& $.235\;(.004)$ & $.323\;(.003)$
& $\mathbf{.236}\;(.001)$ & $.336\;(.003)$ & $315$ \\
& $10^6$ & $.232\;(.001)$ & $.310\;(<.001)$
& $.294\;(.024)$ & $.397\;(.098)$
& $\mathbf{.297}\;(.001)$ & $.377\;(.001)$ & $315$ \\
\bottomrule
\end{tabular}
\caption{Sample-split matched-budget sweep on the coarse turn--river
endgame (parentheses give two-sided 95\% Student-$t$ CI half-widths over ten
matched seeds).  Each row uses the
same total hand budget $N$ per arm.  The equilibrium control is omitted: every method
has zero certified and realized gain at all three budgets, and SAD retains no target.
All $404$ unique cells in the primary and budget grids pass the independent floor
audit, with no confidence-set infeasibility or timeout.}
\label{tab:sad_budget_sweep}
\end{table*}

\begin{figure}[t]
\centering
\includegraphics[width=\columnwidth]{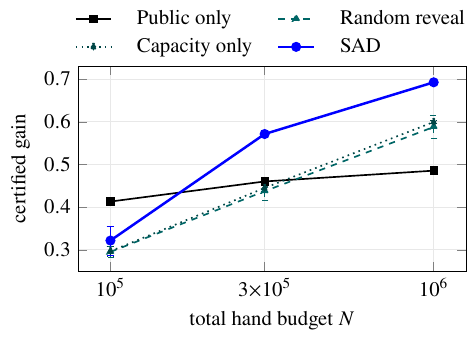}
\caption{Certified gain by acquisition strategy and hand budget on the river
over-fold.
Points are means over ten matched seeds; bars show two-sided 95\% Student-$t$
CIs.  Unscreened Capacity-only follows Random; SAD exceeds both from
$N=3{\times}10^5$.}
\label{fig:sad_budget}
\end{figure}

Table~\ref{tab:sad_tight_floor} repeats the matched controller comparison at
$\rho=0.1$, where the safe set admits substantially less probing slack.  SAD
selects a mean of $355.5$ river and $90$ turn targets, versus all $2{,}160$
positive-capacity targets under Random, and improves the certificate over both
public-only and Random acquisition on each deviation.  The paired improvements
remain significant after Holm correction across the four comparisons
($p=0.0078$); all $60$ acquisition and response plans pass the independent
$10^{-6}$ floor audit.
\begin{table*}[t]
\centering
\small

\begin{tabular}{@{}lrrrrr@{}}
\toprule
Opponent & $C^{\mathrm{pub}}$ & Random & SAD
& SAD $-$ $C^{\mathrm{pub}}$ & SAD $-$ Random \\
\midrule
River over-fold & $.174\;(.001)$ & $.234\;(.009)$ & $.271\;(.001)$ & $.097\;(.001)$ & $.037\;(.009)$ \\
Turn over-fold & $.021\;(<.001)$ & $.021\;(.001)$ & $.040\;(<.001)$ & $.019\;(<.001)$ & $.019\;(.001)$ \\
\bottomrule
\end{tabular}
\caption{Matched-budget deployment under the tighter safety budget
$\rho=0.1$ ($N=10^6$). Entries are certified-gain means with two-sided 95\%
Student-$t$ CI half-widths over ten matched seeds; the final two columns use
paired differences. Exact paired Wilcoxon tests remain significant after Holm
correction across the four comparisons ($p\le .008$).}
\label{tab:sad_tight_floor}
\end{table*}

Table~\ref{tab:sad_split_sensitivity} varies the public/reveal allocation while
holding the total budget at $10^6$.  The gain persists from $0.2/0.8$ through
$0.8/0.2$: every split improves both certified and realized value over $\Cpub$ on
all three deviations, while the equilibrium screen remains empty.  The near-identical
river result under $0.2/0.8$ and $0.5/0.5$ also shows that the result is not tied to
a single allocation.
\begin{table*}[t]
\centering
\small

\setlength{\tabcolsep}{4.5pt}
\begin{tabular}{@{}lrr|rr|rr|r@{}}
\toprule
& \multicolumn{2}{c}{River over-fold}
& \multicolumn{2}{c}{Turn over-fold}
& \multicolumn{2}{c}{Revealed call} & \\
Public/reveal & Cert. & Real. & Cert. & Real. & Cert. & Real. & Eq.\ tgt. \\
\midrule
$0.2/0.8$ & $.692\;(.002)$ & $.810\;(<.001)$
& $.093\;(.001)$ & $.126\;(.002)$
& $.297\;(.001)$ & $.377\;(.001)$ & $0$ \\
$0.5/0.5$ & $.692\;(.003)$ & $.810\;(<.001)$
& $.089\;(.001)$ & $.120\;(.002)$
& $.288\;(.001)$ & $.365\;(.001)$ & $0$ \\
$0.8/0.2$ & $.656\;(.003)$ & $.802\;(<.001)$
& $.082\;(.001)$ & $.116\;(.001)$
& $.267\;(.001)$ & $.341\;(.001)$ & $0$ \\
\bottomrule
\end{tabular}
\caption{Public/reveal allocation sensitivity at $N=10^6$
(parentheses give two-sided 95\% Student-$t$ CI half-widths over ten matched
seeds).  Each row uses the final
sample-split protocol with the same total budget.  Every allocation improves both
certified and realized value over $\Cpub$ on all three deviations, retains no
equilibrium target, and passes all floor, feasibility, and runtime audits
($120$ SAD cells).}
\label{tab:sad_split_sensitivity}
\end{table*}

A complementary controlled diagnostic fixes the public
object and routing signal at their population values while varying the reveal budget,
separating convergence of the reveal constraints from route-estimation variance.

Table~\ref{tab:deploy} aggregates that diagnostic at $N=10^5$ reveal hands, and
Table~\ref{tab:deploy_pf} expands it by family and reveal budget.  Its oracle-targeting
column is a finite-sample upper reference; $G^\star$ in Table~\ref{tab:ladder} is the
full-monitoring ceiling.
\begin{table}[!tb]
\centering\small

\setlength{\tabcolsep}{4pt}
\begin{tabular}{@{}lrrrr@{}}
\toprule
Method & Cert. & Real. & Cap. & Viol. \\
\midrule
$\Cpub$ (no reveal)       & $0.270$ & $0.363$ & $0.00$  & $0/3$ \\
Random probe              & $0.217$ & $0.337$ & $-0.07$ & $0/30$ \\
Active reveal ($\Dpub$)    & $0.322$ & $0.413$ & $0.17$  & $0/30$ \\
Oracle-target             & $0.412$ & $0.581$ & $0.26$  & $0/30$ \\
\midrule
\textbf{Routed diagnostic} & $\mathbf{0.329}$ & $\mathbf{0.409}$ & $\mathbf{0.17}$ & $0/30$ \\
Oracle                    & $0.999$ & $0.999$ & $1.00$  & $0/3$ \\
\bottomrule
\end{tabular}
\caption{Budgeted Safe Active De-censoring on the coarse turn--river endgame ($\rho=0.5$,
$\delta=0.1$, $N=10^5$; mean over the three deviating families, 10 matched seeds).
\emph{Cert.}/\emph{Real.}: robust certificate and realized value against $\ystar$;
the public baseline and routing signal use population public frequencies, while reveal
constraints use $N$ hands.  \emph{Cap.}: mean per-family share of the
oracle$\,$--$\,\Cpub$ gap captured;
\emph{Viol.}: floor violations (Theorem~\ref{thm:safety}).  \emph{Oracle-target} probes with
hidden leak labels (upper reference); \emph{Routed diagnostic}: per-cell maximum with
$\Cpub$.  The controlled design fixes public quantities to isolate finite-sample reveal
constraints; Table~\ref{tab:deploy_pf} gives the per-family breakdown.}
\label{tab:deploy}
\end{table}

The diagnostic certificates increase monotonically with reveal budget and approach
their population references from below, as predicted by
Theorem~\ref{thm:active_id}.  Two-sided 95\% Student-$t$ CI half-widths are at
most $0.014$.  The shallow river
over-fold tightens fastest ($0.663$ versus the oracle-target reference $0.678$ at
$N=10^5$); the deep turn and revealed-call coordinates remain cost-limited at that
budget.  Public targeting never exceeds hidden-label oracle targeting.
\begin{table*}[t]
\centering\small

\setlength{\tabcolsep}{5pt}
\begin{tabular}{@{}llrrrrrr@{}}
\toprule
Family & $N$ & $\Cpub$ & Passive & Random & SAD ($\Dpub$) & Oracle-tgt & Oracle \\
\midrule
River over-fold & $10^{3}$ & $0.510$ & $0.001_{\pm.002}$ & $0.081_{\pm.011}$ & $0.206_{\pm.011}$ & $0.207_{\pm.014}$ & $0.828$ \\
                & $10^{4}$ & $0.510$ & $0.230_{\pm.009}$ & $0.315_{\pm.007}$ & $0.495_{\pm.007}$ & $0.507_{\pm.009}$ & $0.828$ \\
                & $10^{5}$ & $0.510$ & $0.412_{\pm.005}$ & $0.476_{\pm.007}$ & $0.663_{\pm.002}$ & $0.678_{\pm.002}$ & $0.828$ \\
\midrule
Turn over-fold  & $10^{3}$ & $0.058$ & $0.000$ & $0.000$ & $0.000$ & $0.000$ & $0.735$ \\
                & $10^{4}$ & $0.058$ & $0.001_{\pm.000}$ & $0.004_{\pm.000}$ & $0.036_{\pm.002}$ & $0.025_{\pm.007}$ & $0.735$ \\
                & $10^{5}$ & $0.058$ & $0.023_{\pm.005}$ & $0.053_{\pm.005}$ & $0.082_{\pm.002}$ & $0.135_{\pm.002}$ & $0.735$ \\
\midrule
Revealed call   & $10^{3}$ & $0.242$ & $0.001_{\pm.000}$ & $0.015_{\pm.005}$ & $0.081_{\pm.005}$ & $0.078_{\pm.007}$ & $1.434$ \\
                & $10^{4}$ & $0.242$ & $0.117_{\pm.005}$ & $0.070_{\pm.002}$ & $0.152_{\pm.002}$ & $0.258_{\pm.009}$ & $1.434$ \\
                & $10^{5}$ & $0.242$ & $0.200_{\pm.002}$ & $0.121_{\pm.000}$ & $0.222_{\pm.000}$ & $0.424_{\pm.005}$ & $1.434$ \\
\bottomrule
\end{tabular}
\caption{Per-family robust objective by reveal budget in the controlled
population-public diagnostic (coarse turn--river endgame, $\rho=0.5$,
$\delta=0.1$; mean${}\pm{}$two-sided 95\% Student-$t$ CI half-width over $10$
matched seeds).  $\Cpub$ and Oracle are population references, with $\Cpub$
weighted by the blueprint collection stream rather than the ideal portfolio of
Table~\ref{tab:ladder}; the active columns use finite reveal constraints.  At
$N=10^5$, SAD's lift on both over-fold families is significant under a one-sided
Wilcoxon test ($p=0.001$).  Active and public objectives are certificates on
coverage; the passive column is statistically uncertified when its
showdown-conditioned box is nonempty and otherwise reports the empirical-$\Cpub$
fallback.  Every plan passes the floor audit.}
\label{tab:deploy_pf}
\end{table*}

\subsection{Deployment on the Fine Abstraction}
\label{app:deploy_fine}
The fine endgame ($4{,}590$ opponent information sets, $\vref=-0.053$) reproduces both
primary results: the conditional full-coverage ladder
(Table~\ref{tab:ladder_fine}) gives $\Gactivepop=G^\star$ on every family, and the budgeted deployment
(Table~\ref{tab:deploy_fine}) preserves the coarse structure.

\begin{table}[t]
\centering\small
\caption{Conditional full-coverage population ladder on the \emph{fine} endgame
($\vref=-0.053$, $\rho=0.5$); the fine-abstraction counterpart of
Table~\ref{tab:ladder}.}
\label{tab:ladder_fine}
\begin{tabular*}{\columnwidth}{@{\extracolsep{\fill}}l*{4}{r}@{}}
\toprule
Opponent family
& $G_{\mathrm{pub}}$ & $\Gactivepop$ & $G^\star$ & $\Delta_{\mathrm{rec}}$ \\
\midrule
Equilibrium (control)      & $0.000$ & $0.000$ & $0.000$ & $0.000$ \\
River over-fold (uniform)  & $0.304$ & $0.590$ & $0.590$ & $0.286$ \\
River over-fold (strong)   & $0.388$ & $0.718$ & $0.718$ & $0.330$ \\
Turn over-fold (deep)      & $0.459$ & $0.542$ & $0.542$ & $0.083$ \\
Revealed call (strong)     & $0.650$ & $1.080$ & $1.080$ & $0.430$ \\
\bottomrule
\end{tabular*}
\end{table}

The passive box is infeasible in every cell.  All $480$ completed responses clear the
floor; eight cells reach the solver wall-time cap.  Routed SAD
lifts the aggregate certificate ${\sim}18\%$---capturing $88\%$ of the oracle-targeting
gain on the censored river over-fold.  The finer public partition exposes more of each
leak, so SAD probes the turn over-fold sparingly (certifying $0.620$, near the $\Cpub$
baseline $0.592$), while naive oracle targeting deviates to low-anomaly lines and shifts
its own reach-weighted box away from the value it certifies ($0.233$): probing only where
revealing is informative spends budget and reach efficiently.
\begin{table*}[t]
\centering\small

\setlength{\tabcolsep}{5pt}
\begin{tabular}{@{}lrrrrrr@{}}
\toprule
Family & $\Cpub$ & Passive & Random & SAD ($\Dpub$) & Oracle-tgt & Oracle \\
\midrule
River over-fold & $0.466$ & $0.388_{\pm.003}$ & $0.542_{\pm.006}$ & $0.720_{\pm.003}$ & $0.753_{\pm.004}$ & $0.977$ \\
Turn over-fold  & $0.592$ & $0.495_{\pm.005}$ & $-0.000_{\pm.000}$ & $0.620_{\pm.004}$ & $0.233_{\pm.004}$ & $0.778$ \\
Revealed call   & $0.473$ & $0.324_{\pm.009}$ & $0.100_{\pm.002}$ & $0.371_{\pm.005}$ & $0.514_{\pm.004}$ & $1.080$ \\
\bottomrule
\end{tabular}
\caption{Deployment on the \emph{fine} turn--river endgame ($4{,}590$ opponent information
sets; $\rho=0.5$, $\delta=0.1$, $N=10^5$, mean${}\pm{}$two-sided 95\%
Student-$t$ CI half-width over $10$ matched seeds, except $n=9$ for SAD on the
two over-fold families)---the
fine-abstraction counterpart of Table~\ref{tab:deploy}.  The passive box is infeasible in
every cell ($30/30$, MNAR signature of Theorem~\ref{thm:passive_mnar}) while the
active-reveal sets are feasible throughout ($0/30$); all $28$ completed responses clear
the floor, and two SAD cells reach the solver wall-time cap.  Public-targeted SAD lifts the certificate on the river and turn over-folds; on
the revealed-call family the fine public aggregate already certifies more than the
finite-sample probe ($\Cpub=0.473$ versus SAD $0.371$), so Routed SAD---the deployed
maximum with $\Cpub$---falls back to $\Cpub$ there and never certifies below it.  On the deep
turn over-fold SAD ($0.620$) exceeds naive oracle targeting ($0.233$), and on the
high-capacity river it nearly matches it ($0.720$ versus $0.753$): the turn public anomaly
is weak (the fine public aggregate already exposes the leak), so SAD correctly probes little
and stays near $\Cpub$, whereas oracle targeting over-probes a low-anomaly line and
self-loosens its reach-weighted box.  Routed
SAD lifts the aggregate certificate from $0.510$ (no reveal) to $0.600$ with $0$ floor
violations.}
\label{tab:deploy_fine}
\end{table*}

\subsection{Comparison with Opponent-Modelling Baselines}
\label{app:baselines}
The fixed uniformly random non-fold collection policy of
\citet{davis2019solving} is a direct acquisition comparator.  We pair $0.2N$
blueprint pilot hands with $0.8N$ hands from this fixed policy, then apply the
same empirical-Bernstein confidence construction and robust responder used by
SAD.  Its downstream certificates can be strong (Table~\ref{tab:fixed_nonfold}),
but the collector's independently audited worst-case value is $-2.313$, which
misses the $-0.543$ floor by $1.769$ in every cell.  The comparison therefore
separates the statistical utility of broad non-fold collection from the
acquisition guarantee studied here; every downstream response remains
floor-safe.
\begin{table}[t]
\centering
\small

\begin{tabular}{@{}lrrr@{}}
\toprule
Opponent & Cert. & Pop. & $\Delta_{\rm acq}$ \\
\midrule
Equilibrium     & $.000\;(.000)$ & $.000\;(.000)$ & $-.743$ \\
River over-fold & $.565\;(.003)$ & $.788\;(.001)$ & $-.420$ \\
Turn over-fold  & $.372\;(.002)$ & $.624\;(.001)$ & $+.155$ \\
Revealed call   & $.864\;(.008)$ & $1.379\;(.001)$ & $-.827$ \\
\bottomrule
\end{tabular}
\caption{Davis-style fixed non-fold collection on the coarse endgame
($N=10^6$, $\rho=0.5$).  The downstream responder uses the same sampled
confidence construction as SAD.  Parentheses are two-sided 95\% Student-$t$ CI
half-widths over ten seeds.  $\Delta_{\rm acq}$ is the exact collection-phase
expected-value change relative to the blueprint, so a sampling CI is not
applicable.}
\label{tab:fixed_nonfold}
\end{table}

The deployment tables position our arms against public and oracle targeting; we now compare
against named opponent-modeling methods on the same opponents under one protocol
($\rho=0.5$, $N=10^5$ probe hands, ten matched seeds).  \emph{EM-BR} is a safety-constrained
best response to the censored-EM opponent reconstruction (the maximum-likelihood per-card model
under showdown censoring); \emph{RNR} is the matched-safety restricted Nash response
\citep{johanson2007robust}, the largest interpolation toward that model whose worst case still
clears the floor; \emph{DBR} is the matched-safety data-biased response
\citep{johanson2009data}, whose interpolation weight is per-infoset data confidence---the
most locally adaptive of these model-based baselines; all three are safe.
\emph{RNR$_{p=1}$} is the unconstrained best response to the same model.
Table~\ref{tab:baselines} reads in three parts.  On the equilibrium control the safe
model-based responses \emph{lose} value because they fit noise in the censored data and infer
a spurious exploit that reduces realized value, whereas the public-robust core stays at zero
(DBR's per-infoset confidence nearly repairs this).  On the censored leaks the core recovers
the most at both granularities---significantly ahead of even DBR, the best model-based arm
(one-sided Wilcoxon over matched seeds; $p$-values in the caption)---the realized form of
Theorem~\ref{thm:passive_mnar}: a model fit to censored data tracks a biased estimand, and
per-infoset confidence weighting mitigates but does not remove the bias.  On the
\emph{uncensored} revealed call the model-based responses are competitive, delineating
the boundary of the censoring effect.  The unconstrained RNR$_{p=1}$ leads on realized
value but breaches the floor on every opponent, so its gains carry no guarantee.
\begin{table}[t]
\centering\small

\setlength{\tabcolsep}{2pt}
\begin{tabular}{@{}lrrrrr@{}}
\toprule
Opp. (oracle) & core & EM-BR & RNR & DBR & RNR$_1$ \\
\midrule
\multicolumn{6}{@{}l}{\emph{Coarse endgame} (floor $-0.543$)} \\
Equil. (ctrl) ($0.00$) & $+0.000$ & $-0.075$ & $-0.055$ & $-0.004$ & $-0.059$ \\
River OF ($0.83$) & $+0.645$ & $+0.151$ & $+0.142$ & $+0.488$ & $+0.708$ \\
Turn OF ($0.73$) & $+0.069$ & $+0.012$ & $-0.004$ & $+0.027$ & $-0.032$ \\
Rev. call ($1.43$) & $+0.304$ & $+0.581$ & $+0.608$ & $+0.346$ & $+1.462$ \\
\midrule
\multicolumn{6}{@{}l}{\emph{Fine endgame} (floor $-0.553$)} \\
Equil. (ctrl) ($-0.00$) & $-0.000$ & $-0.242$ & $-0.242$ & $-0.025$ & $-0.280$ \\
River OF ($0.98$) & $+0.541$ & $+0.371$ & $+0.362$ & $+0.183$ & $+0.133$ \\
Turn OF$^{(8)}$ ($0.78$) & $+0.700$ & $+0.323$ & $+0.185$ & $+0.721$ & $+0.651$ \\
Rev. call$^{(9)}$ ($1.08$) & $+0.585$ & $+0.172$ & $+0.170$ & $+0.714$ & $+0.295$ \\
\midrule
Floor held & yes & yes & yes & yes & \textbf{no} \\
\bottomrule
\end{tabular}
\caption{Named baselines vs.\ the public-robust core on the matched
deployment opponents ($\rho=0.5$, $N=10^5$, mean realized value vs.\
$\ystar$ over $10$ matched seeds; two-sided 95\% Student-$t$ CI half-width
$\le 0.043$).  \emph{EM-BR} and
\emph{RNR} are safety-constrained censored-EM / restricted-Nash responses;
\emph{DBR} is the matched-safety data-biased response with per-infoset
confidence \citep{johanson2009data}; RNR$_{p{=}1}$ is unconstrained.
Parenthesized values are safety-constrained oracle gains; superscripts mark
cells with fewer than $10$ seeds after solver stalls.  The core recovers the
most wherever censoring hides the value (river at both granularities, turn
at the coarse one); at the \emph{fine} turn over-fold---whose leak the
finer public partition largely exposes (App.~E.9)---DBR slightly exceeds the core on
realized value, with no certificate attached.  One-sided Wilcoxon over
matched seeds: b2 river: core $>$ DBR, $p=0.0010$; b2 turn: core $>$ DBR, $p=0.0010$; b4 river: core $>$ EM, $p=0.0010$; b4 turn: DBR $>$ core, $p=0.0039$.  All safe arms hold the floor; RNR$_{p{=}1}$
breaches it on every opponent.}
\label{tab:baselines}
\end{table}

\subsection{Capacity-Routing Ablation}
\label{app:ablation}
A reported-protocol ablation replaces SAD's sampled public screen by an unscreened,
data-independent route that weights all $2158$ positive-capacity river targets by
$\kcap(I)$.  It otherwise uses the same $0.2N/0.8N$ public/reveal split, confidence
construction, safety set, and ten matched seeds (Table~\ref{tab:ablation}).  Capacity-only
tracks Random at every budget (Holm-adjusted paired $p\ge0.316$).  At
$N=3{\times}10^5$ and $10^6$, SAD exceeds Capacity-only by
$0.126\;(.003)$ and $0.093\;(.003)$, respectively (paired 95\% CI half-widths;
adjusted $p=0.006$).  Thus capacity prices attainable evidence, whereas the
sampled public anomaly locates the directions worth observing.
\begin{table}[t]
\centering\small

\setlength{\tabcolsep}{3.5pt}
\begin{tabular}{@{}lrrrr@{}}
\toprule
$N$ & $\Cpub$ & Random & Capacity-only & SAD \\
\midrule
$10^5$ &
$.413\;(.004)$ & $.295\;(.013)$ & $.297\;(.012)$ & $.322\;(.034)$ \\
$3{\times}10^5$ &
$.460\;(.003)$ & $.437\;(.022)$ & $.445\;(.008)$ & $.571\;(.009)$ \\
$10^6$ &
$.485\;(.003)$ & $.588\;(.027)$ & $.599\;(.003)$ & $.692\;(.002)$ \\
\bottomrule
\end{tabular}
\caption{Reported-protocol routing ablation on the river over-fold
($\rho=0.5$, $\delta=0.1$).  Entries are certified mean and two-sided 95\%
Student-$t$ CI half-width over ten matched seeds.  Active arms share the
$0.2N/0.8N$ public/reveal split; Capacity-only routes all $2158$
positive-capacity targets without using the sampled public anomaly.}
\label{tab:ablation}
\end{table}

This is the quantitative form of Proposition~\ref{prop:public_rank}.  Partitioning the
per-target leak value by public state and decomposing its variance, the within-public-state
component---the part no public score can rank---is $0.57$, $0.51$, and $0.42$ of the total
for the river over-fold, deep turn over-fold, and revealed call, respectively (roughly half
in every family), while the public anomaly $\Dpub$ has zero spread within any public state.
This decomposition explains why ordering public states captures most of the available
routing gain even though the anomaly score is constant within a public state.

\subsection{Globally Optimized Joint Routing}
\label{app:joint_routing_experiment}

We instantiate Proposition~\ref{thm:joint_routing} on a predeclared four-target
Leduc portfolio containing one censored facing-bet target from each of four
first-round public histories.  Target weights are uniform, the safety budget is
$\rho=0.1$, and $20\%$ of the mixture is reserved for a capacity-normalized
max--min audit.  Reveal rates use one explicit uniform-opponent continuation model, so
Proposition~\ref{thm:joint_routing} applies to this affine rate model rather than
opponent-universal rates.
Every linear subproblem is the safe weighted-reach LP; an exactly normalized
dyadic opponent derived from its row duals, followed by outward-rounded
treeplex evaluation, supplies its objective upper bound.  The full
cross-reveal vector is recomputed for each returned realization plan.

Table~\ref{tab:joint_routing} separates two effects.  Accounting for incidental
cross-reveals lowers the specified uniform coordinate-width envelope by
$11.0\%$.  Optimizing those joint rates lowers it by a further $12.3\%$, to
$73.113$.  The final numerical Frank--Wolfe residual is $0.0291$, giving the
binary64 bracket $[73.084,73.113]$ under the stated $10^{-7}$ feasibility
tolerance.  The exact-arithmetic counterpart is the global bound in
Proposition~\ref{thm:joint_routing}.  Independent best responses verify the floor
for every solo probe, the universal audit, the separable mixture, and the joint
solution; the blueprint interior share gives each returned plan positive floor
margin.
These quantities are deterministic optimization outputs rather than
sampled estimates, so sampling confidence intervals are not applicable.
\begin{table}[t]
\centering\small

\begin{tabular*}{\columnwidth}{@{\extracolsep{\fill}}lr@{}}
\toprule
Routing design & Envelope $\Phi$ \\
\midrule
Diagonal prediction                    & $93.650$ \\
Separable shares with cross-reveals    & $83.325$ \\
Capacity-normalized audit              & $74.404$ \\
Joint width optimum                    & $73.113$ \\
\quad numerical optimum bracket        & $[73.084,73.113]$ \\
\bottomrule
\end{tabular*}
\caption{Safe routing on a four-target Leduc portfolio with a fixed 20\%
capacity-normalized audit share.  Lower uniform coordinate-width envelope is better.  The joint design uses all
cross-reveals under the stated uniform-continuation model; its Frank--Wolfe
residual uses outward-rounded dual-feasible weighted-reach upper bounds for
the linear subproblems.  The displayed numerical bracket uses binary64 outer
arithmetic and the stated feasibility tolerance.}
\label{tab:joint_routing}
\end{table}

\subsection{Public Anomaly Signals}
\label{app:gate_signals}
The early-line statistic $\De$ is the poker instantiation of $\Dpub$; the panel of
low-cost alternatives (Table~\ref{tab:signals}) is less predictive everywhere else.
Revealed-play deviations are uninformative or anti-correlated (Pearson $-0.16$ to
$-0.18$)---on deep-censored families the value-bearing deviation lies on the censored
action---and pooled or post-censoring fold excess is weak ($-0.28$ to $-0.34$), while the
early pre-censoring fold excess reaches $+0.96$ against the population recoverable value.
\begin{table*}[t]
\centering\small

\begin{tabular}{llrr}
\toprule
Signal & Information used & Pearson & Spearman \\
\midrule
Revealed realization dev.       & showdown only      & $-0.172$ & $+0.161$ \\
Behavior TV (all actions)       & showdown only      & $-0.168$ & $+0.161$ \\
Behavior dev., fold only        & showdown only      & $-0.163$ & $+0.161$ \\
Public fold excess (all lines)  & public line        & $-0.276$ & $+0.042$ \\
Public fold excess, river       & public line (late) & $-0.338$ & $-0.214$ \\
\midrule
\textbf{Behavior dev., turn}        & showdown (early)   & $\mathbf{+0.955}$ & $\mathbf{+0.762}$ \\
\textbf{Public fold excess, turn} ($\De$) & \textbf{public line (early)} & $\mathbf{+0.955}$ & $\mathbf{+0.762}$ \\
\bottomrule
\end{tabular}
\caption{Low-cost (LP-free) candidate signals versus the population recoverable value
on the coarse endgame ($\rho=0.5$; total available lift $2.04$).  Only the early
(pre-censoring) turn line predicts where active reveal pays off; revealed-only and
late-line signals are uninformative or anti-correlated.  The public early-line fold
excess $\De$ and its showdown-space twin coincide on this suite because the deviation is
localized on the turn fold.  Pearson / Spearman across the diagnostic opponent suite.}
\label{tab:signals}
\end{table*}

\section{Certification at the Unbucketed River}
\label{app:fulldeck}

The body's endgames use card bucketing so that every robust response is one
exact LP.  This appendix removes the abstraction on the opponent side and
re-runs the pipeline on a \emph{fixed-board, unbucketed river}:
(A$\spadesuit$\,K$\spadesuit$\,Q$\diamondsuit$\,J$\clubsuit$\,9$\heartsuit$),
$1081$ hand combinations per player ($1.07$M deals with exact card removal),
pot-bet/all-in betting capped at three raises, ${\sim}1.7\times10^4$
sequences per player.  At this size the robust-response LP is no longer practical to
materialize---its constraint matrix has $10^{7\text{--}8}$ nonzeros, and
already at bucketed sizes the monolithic solver times out on the sparsest
confidence shape (Table~\ref{tab:g1_solver})---so every result below is
computed by the certified oracle decomposition of
\S\ref{app:fulldeck_solving}.  Two exact,
LP-free primitives anchor soundness at any scale: a best-response dynamic
program over the range vectors gives the floor value $\Wval(x)$ of any
candidate plan, and the same program gives $\vref$ exactly.  The blueprint
is range CFR$^+$
($2000$ iterations, $1.1$\,s single-core, exploitability $1.4\times10^{-4}$,
measured by the exact DP); its security value $\vref=-0.065$ enters the safe
set through the blueprint-shift lemma below, so blueprint
suboptimality re-indexes the budget axis rather than perturbing any
statement.  We verified the identity of Lemma~\ref{lem:bp_shift} here
directly: at five score-spread targets of the first facing-a-bet state,
three budgets $\rok\in\{0.1,0.25,0.4\}$, and two weakened blueprints
($50$ and $300$ CFR iterations, $\varepsilon_{\mathrm{bp}}$ known exactly
by DP), the independently computed sides
$\kappa_{\rok}(I;\vref)$ and $\kappa_{\rok+\varepsilon_{\mathrm{bp}}}(I;v^\star)$
agree to $6.5\times10^{-6}$ in the worst case over all $30$ frontier
points.

\subsection{Certified Solving Past the LP Wall}
\label{app:fulldeck_solving}

The robust response $\max_{x\in\Sset}\min_{y\in C}x^\top\Amat y$ is solved
by an oracle decomposition: a cutting-plane master over a small set of
opponent responses, each cut generated by one inner minimization over $C$,
with the payoff matrix touched only through matrix-vector oracles
($\Amat^\top x$ and $\Amat y$ against the range engine, never materialized).
Iterative solving changes \emph{how} a plan is found, not \emph{what} is
guaranteed: feasibility is restored exactly, and certificates remain sound
at every iterate.

\begin{proposition}[Verify-and-repair composition]
\label{prop:compose}
Let $\hat x\in\Xpoly$ be any candidate plan, $L(\hat x)\le\min_{y\in C}\hat
x^\top\Amat y$ a valid robust lower bound, and $v_t=\vref-\rok$ the floor
target.  If $\Wval(\hat x)<v_t$, set
$\alpha=(v_t-\Wval(\hat x))/(\Wval(x_{\mathrm{bp}})-\Wval(\hat x))\in(0,1]$
and $\tilde x=(1-\alpha)\hat x+\alpha x_{\mathrm{bp}}$; otherwise
$\tilde x=\hat x$, $\alpha=0$.  Then:
\begin{enumerate}
\item \textbf{(floor, exact)} $\Wval(\tilde x)\ge v_t$, by concavity of
      $\Wval$;
\item \textbf{(certificate, computable loss)}
      $\min_{y\in C}\tilde x^\top\Amat y\;\ge\;
       L(\hat x)-\alpha\big(L(\hat x)-\ell_{\mathrm{bp}}\big)$,
      where $\ell_{\mathrm{bp}}=\min_{y\in C}x_{\mathrm{bp}}^\top\Amat y$,
      since $x\mapsto\min_{y\in C}x^\top\Amat y$ is concave;
\item \textbf{(composition)} if additionally the solver reports an upper
      bound $U\ge\max_{x\in\Sset}\min_{y\in C}x^\top\Amat y$, the deployed
      plan $\tilde x$ carries the sound bracket
      $[\,L(\hat x)-\alpha(L(\hat x)-\ell_{\mathrm{bp}}),\,U\,]$, and in
      Theorem~\ref{thm:value_recovery} the total certificate error is at most
      $\eta+\delta_s+\delta_r$ with $\delta_s=(U-L(\hat x))_+$ and
      repair loss $\delta_r=\alpha(L(\hat x)-\ell_{\mathrm{bp}})_+$.
\end{enumerate}
The floor guarantee is unconditional and exact at every scale; only the
exploitation certificate pays $\delta_s+\delta_r$, and both terms are
reported per solve.
\end{proposition}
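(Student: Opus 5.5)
The plan is to prove the three items in order, relying only on concavity of the two functionals $\Wval$ and $\phi_C(x):=\min_{y\in C}x^\top\Amat y$, each a pointwise minimum of linear functions of $x$. We also use that $\Wval(x_{\mathrm{bp}})=\vref\ge v_t$.

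\textbf{Step 1: the floor.} If $\Wval(\hat x)\ge v_t$, nothing needs proving. Otherwise $\Wval(\hat x)<v_t\le\vref=\Wval(x_{\mathrm{bp}})$, so the denominator in $\alpha$ is positive and $\alpha\in(0,1]$. Concavity of $\Wval$ then gives
\[
\Wval(\tilde x)\ge(1-\alpha)\Wval(\hat x)+\alpha\Wval(x_{\mathrm{bp}})
=\Wval(\hat x)+\alpha\bigl(\Wval(x_{\mathrm{bp}})-\Wval(\hat x)\bigr)=v_t .
\]
Since $\Xpoly$ is convex, $\tilde x\in\Xpoly$, so $\tilde x\in\Sset$. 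This holds exactly, with no dependence on solver accuracy.

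\textbf{Step 2: the certificate.} Concavity of $\phi_C$, combined with $\phi_C(\hat x)\ge L(\hat x)$, gives
\[
\phi_C(\tilde x)\ge(1-\alpha)\phi_C(\hat x)+\alpha\ell_{\mathrm{bp}}\ge(1-\alpha)L(\hat x)+\alpha\ell_{\mathrm{bp}} ,
\]
which is the stated bound $L(\hat x)-\alpha(L(\hat x)-\ell_{\mathrm{bp}})$. When $\alpha=0$ it reduces to $\phi_C(\hat x)\ge L(\hat x)$.

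\textbf{Step 3: the bracket and error budget.} The lower end of the bracket is Step 2. For the upper end, $\tilde x\in\Sset$ by Step 1, so $\phi_C(\tilde x)\le\max_{x\in\Sset}\phi_C(x)\le U$.

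For the error budget, apply the argument of Theorem~\ref{thm:value_recovery} with $C=\Cid(N_{\rm tot})$. That theorem bounds the exact optimum by $V_N\ge V_\infty-\eta$. The deployed certificate satisfies
\[
L(\hat x)-\delta_r\ \ge\ U-\delta_s-\delta_r\ \ge\ V_N-\delta_s-\delta_r ,
\]
using $\delta_s\ge U-L(\hat x)$, $\delta_r\ge\alpha(L(\hat x)-\ell_{\mathrm{bp}})$, and $U\ge V_N$. The deployed certificate is therefore at least $V_\infty-\eta-\delta_s-\delta_r$. It also stays at most $V_\infty$ on the nesting event, since it lower-bounds $\phi_C(\tilde x)\le V_N\le V_\infty$. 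The true-opponent bound transfers on coverage exactly as in that proof.

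The main obstacle is bookkeeping rather than depth. Taking positive parts in $\delta_s,\delta_r$ keeps the bound valid when $L(\hat x)<\ell_{\mathrm{bp}}$, which is the case where repair raises the certificate. Care is also needed to use $\vref\ge v_t$ (that is, $\rok\ge0$) so that $\alpha\le1$.
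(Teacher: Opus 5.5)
Your proof is correct and follows the paper's own route. The exact floor and the repaired certificate both come from concavity of $\Wval$ and of $x\mapsto\min_{y\in C}x^\top\Amat y$ along the segment to $x_{\mathrm{bp}}$, where $\Wval(x_{\mathrm{bp}})=\vref\ge v_t$ gives $\alpha\in(0,1]$. The error budget is the chain $L(\hat x)-\delta_r\ge U-\delta_s-\delta_r\ge V_N-\delta_s-\delta_r\ge V_\infty-\eta-\delta_s-\delta_r$ through Theorem~\ref{thm:value_recovery}, and your handling of the positive parts fills in bookkeeping that the paper leaves implicit.
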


\begin{proposition}[Anytime certificates]
\label{prop:anytime}
Fix a safe candidate $\hat x\in\Sset$ and a confidence set
$C=\{y:Gy\le h\}\cap\Ypoly$.
\begin{enumerate}
\item \textbf{(lower, weak duality)} For every $\lambda\ge0$,
\[
\min_{y\in\Ypoly}\big[\hat x^\top\Amat y+\lambda^\top(Gy-h)\big]
\;\le\;\min_{y\in C}\hat x^\top\Amat y ,
\]
and the left side is one treeplex best-response DP with
$\lambda$-shifted per-sequence losses.
\item \textbf{(upper, relaxation)} For every finite set of opponents
$y_1,\dots,y_K\in C$,
$\max_{x\in\Sset'}\min_{k}x^\top\Amat y_k
 \;\ge\;\max_{x\in\Sset}\min_{y\in C}x^\top\Amat y$
whenever $\Sset'\supseteq\Sset$ is any outer relaxation of the safe set
(e.g.\ finitely many floor cuts); in particular the master bound is valid
with \emph{any} subset of retained cuts, so bounded-memory cut windows
preserve soundness.
\end{enumerate}
Hence every safe iterate carries a sound bracket.  An arbitrary
$\hat x\in\Xpoly$ first uses Proposition~\ref{prop:compose}; its repaired
lower bound paired with the relaxation upper bound is sound.  Stopping at
any wall-clock or iteration budget therefore yields a conservative
certificate after this repair.
\end{proposition}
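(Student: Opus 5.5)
The plan is to prove the two parts separately, using only weak duality for the lower bound and a relaxation argument for the upper bound; neither part needs any optimality property of $\hat x$. This is what makes the brackets \emph{anytime}.

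For part~1, I fix $\lambda\ge0$ and compare the two minimizations. The left side minimizes over all of $\Ypoly$, which contains $C=\{y:Gy\le h\}\cap\Ypoly$. For every $y\in C$ we have $\lambda^\top(Gy-h)\le0$, so the Lagrangian is at most $\hat x^\top\Amat y$ at each such $y$. Taking the minimum over the larger set $\Ypoly$ can only lower the value further, which gives the displayed inequality.

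For the computational claim, I rewrite the Lagrangian as $(\Amat^\top\hat x+G^\top\lambda)^\top y-\lambda^\top h$. This is a linear objective over the opponent treeplex plus a constant. Its minimum is therefore a best response to the shifted per-sequence loss vector $\Amat^\top\hat x+G^\top\lambda$. That best response is computed by the same bottom-up DP used for $\Wval$ in Appendix~\ref{app:certaudit}.

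For part~2, I fix any finite $y_1,\dots,y_K\in C$ and any $x\in\Sset$. Because each $y_k$ lies in $C$, we have $\min_{y\in C}x^\top\Amat y\le\min_k x^\top\Amat y_k$. Maximizing both sides over $\Sset$ gives the relaxed master over $\Sset$ as an upper bound on the robust value. Enlarging the feasible set to any outer relaxation $\Sset'\supseteq\Sset$ can only increase the maximum, which proves the stated inequality. Dropping retained cuts is covered by the same two monotonicity facts: a smaller set $\{y_k\}$ enlarges the inner minimum, and removing floor cuts enlarges $\Sset'$. Bounded-memory windows therefore stay sound.

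For the concluding sentences, I argue by cases on the candidate. If $\hat x\in\Sset$, I pair the part-1 lower bound with the part-2 master upper bound to get a sound bracket. If $\hat x\in\Xpoly$ is arbitrary, I first apply Proposition~\ref{prop:compose}: the repaired plan $\tilde x$ lies in $\Sset$ and carries the lower bound $L(\hat x)-\alpha(L(\hat x)-\ell_{\mathrm{bp}})$, where $L(\hat x)$ is itself any part-1 bound. Since the part-2 upper bound does not depend on the iterate, the pair is again sound.

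The only delicate point is making sure the Lagrangian DP uses $\lambda$-shifted losses consistently with the sign of the constant $-\lambda^\top h$, and that $\ell_{\mathrm{bp}}$ in the repair step is replaced by a valid lower bound (again via part~1) when it is not computed exactly. Bracket soundness then holds at every stopping time.
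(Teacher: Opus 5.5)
Your proposal is correct and takes essentially the paper's route, which the statement itself only sketches through its ``weak duality'' and ``relaxation'' labels: the penalty $\lambda^\top(Gy-h)$ is nonpositive on $C\subseteq\Ypoly$, the Lagrangian is a best response to the shifted loss $\Amat^\top\hat x+G^\top\lambda$, the master bound follows from $y_k\in C$ and $\Sset'\supseteq\Sset$, and the unsafe case is handled by Proposition~\ref{prop:compose}. Your extra observation that an inexactly computed $\ell_{\mathrm{bp}}$ may be replaced by any valid lower bound (because $\alpha\in[0,1]$) is a correct refinement that the paper leaves implicit.
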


Table~\ref{tab:g1_solver} calibrates the decomposition against the
monolithic robust LP on the bucketed endgames, where the LP is still
available: dense confidence shapes reach $10^{-4}$ parity, and on the
sparse shape where the monolithic LP times out the decomposition still
certifies.  Wall-capped solves return certified brackets, exactly as
Proposition~\ref{prop:anytime} prescribes.
\begin{table}[t]
\centering\small
\setlength{\tabcolsep}{3pt}

\begin{tabular}{@{}lrllr@{}}
\toprule
Instance & rows & monolithic LP & decomposition & gap \\
\midrule
b2/dense & $9900$ & $0.9877$ ($122$\,s) & $0.9877$ ($666$\,s) & $6\times 10^{-5}$ \\
b2/sparse & $1260$ & timeout ($600$\,s) & $0.0647$ ($421$\,s) & $3\times 10^{-2}$ \\
b4/dense & $18360$ & $0.8241$ ($373$\,s) & $0.8240$ ($1452$\,s) & $2\times 10^{-4}$ \\
b4/sparse & $2368$ & $0.0606$ ($190$\,s) & $0.0388$ ($857$\,s) & $3\times 10^{-2}$ \\
\bottomrule
\end{tabular}
\caption{Oracle decomposition vs.\ the monolithic robust LP on the bucketed
endgames ($200$-iteration cap, $600$\,s LP budget). \emph{Gap} is the
decomposition's own certificate (upper minus lower bound). Dense shapes reach
$10^{-4}$ parity; on the sparse b2 shape the monolithic LP times out while
the decomposition certifies; on sparse b4 the capped decomposition
remains conservative (its bracket contains the LP value).}
\label{tab:g1_solver}
\end{table}

\subsection{The Public Twin}
\label{app:twin}

\begin{figure}[t]
\centering
\includegraphics[width=\columnwidth]{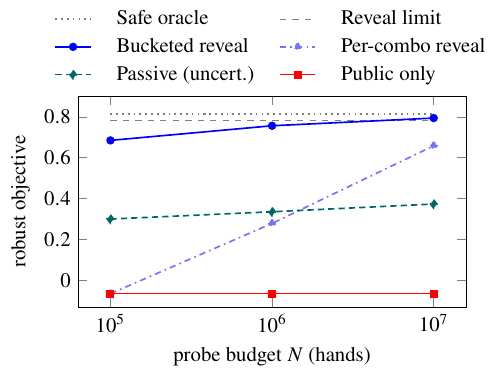}
\caption{Public twin on a fixed-board, unbucketed river.
The public lower bound stays at $\vref$ while reveal certificates approach the
population-limit band.  Bars show two-sided 95\% Student-$t$ CIs over ten seeds
($\rho=0.5$); the passive curve is an uncertified robust objective.}
\label{fig:g1_twin}
\end{figure}

The sharpest unbucketed-river instance of the channel dichotomy is a deployment
opponent constructed so that the public channel is blind to it \emph{by
identity}, not merely at finite samples.  Split the opponent's combos at
the median showdown strength.  At every first opponent decision whose call
closes the betting (call $\to$ showdown, fold $\to$ terminal payoff), weak
combos move a fraction of their fold mass to call (calling with losers)
while strong combos move call mass to fold (folding winners), the two
transfers balanced under the agent's deployment reach weights so the
reach-weighted fold and call frequencies at that node are exactly
preserved.  Raise mass is untouched, so every aggregate downstream of a
raise is unchanged; the transferred mass reaches only terminals, so it has
no downstream aggregate; and the balance holds the node's own aggregate.
Hence every population public row coincides with the base's.  Under blueprint
collection, the empirical public confidence sets for the twin and base therefore
have the same sampling law, although matched finite-sample draws need not be
identical.  The
balance is anchored to the deployment plan's reach weights, which is
exactly the fiber's definition (\S\ref{sec:prelim}: the observation fiber
is taken \emph{at} the data-collecting plan)---the twin is a member of the
deployed plan's fiber with exploitable value; an agent that re-weights its
public channel by deviating is already probing.  Confining the transfer to closing calls at first
decisions is what the fiber demands: composition shifts elsewhere propagate
through reach into downstream public aggregates and re-emerge.  Both
transfers are mistakes for the combos making them, and they are worth
$V=0.815$ to a safe responder against $\vref=-0.065$ (the unconstrained
best response is worth $1.64$; holding the floor costs half).

\paragraph{Reveal-forcing estimation.}
The reveal pins that identify the twin are sound only under two
disciplines, both instances of the paper's identification machinery.
(i)~\emph{Realization space, not behavior space}: showdown-conditioned
action frequencies are MNAR-biased exactly as in
Theorem~\ref{thm:passive_mnar} (folds never reach a showdown), so pins must
bound the non-fold sequence values $w_I\,y[\sigma(I)a]$---whose probe-batch
counts are unbiased---and leave every fold mass to flow conservation
(Lemma~\ref{lem:flow}).
(ii)~\emph{Reveal-forcing probes}: if a probe re-opens the betting after
the target's response, a later opponent fold censors the earlier record and
the counts under-estimate the reach mass---the multi-decision form of the
same bias, and precisely the property excluded by the reveal certificate of
Definition~\ref{def:class}.  Probes that close the action (call-down,
bet-then-call, shove-then-call) are reveal-forcing on their lines; lines no
probe forces remain outside the certified class, and their pins
are simply absent.  Family-wise coverage over all pinned rows uses the
spatial union bound of Appendix~\ref{app:pubconf}.  Because per-combo pins
on the unbucketed river spread $N$ over $1081$ combinations, we also intersect
\emph{bucketed} reveal pins---the same rows aggregated over $K{=}4$
strength quantiles---which trade fiber resolution for a $1081/K$ rate gain.

\paragraph{What each design can certify.}
Table~\ref{tab:g1_residual} prices the twin's fiber at the population
limit (Proposition~\ref{thm:fiber_minimax}).  Public rows certify $\vref$
exactly---nothing, at any sample size.  Bucketed reveal pins under the
three-probe portfolio leave a residual width of $\Delta^-_M=0.032$, i.e.\
they recover $96\%$ of the safe-exploitable gap $V-\vref$
(Proposition~\ref{thm:residual_recovery});
per-combo pins can only be tighter.  The probe \emph{portfolio} matters,
not just the probe count: with only the first two probes, no line forces
the twin's all-in transfers, and the same bucketed design strands
$\Delta^-_M=0.44$---more than half of $V-\vref$---restored to $0.032$ by adding
the shove-then-call probe.  Pins certify only the lines some probe forces;
identifying a multi-line opponent needs a portfolio covering its lines,
which is the \emph{where}-routing problem of \S\ref{sec:scope} in
miniature.
\begin{table}[t]
\centering\small

\setlength{\tabcolsep}{4pt}
\begin{tabular}{@{}lrrr@{}}
\toprule
Evidence design & rows & $R_M$ & $\Delta^-_M$ \\
\midrule
public rows only & $32$ & $-0.0646$ & $0.8794$ \\
$+$ bucketed reveal pins ($K{=}4$) & $88$ & $0.7827$ & $0.0322$ \\
$+$ per-combo reveal pins & --- & $\ge 0.7827$ & $\le 0.0322$ \\
\bottomrule
\end{tabular}
\caption{Residual width of each evidence design against the public twin at
the population limit ($V=0.8149$, $\vref=-0.065$): robust-over-fiber value
$R_M$ (Prop.~\ref{thm:fiber_minimax}) and one-sided width
$\Delta^-_M=V-R_M$. Public rows identify nothing ($R_M=\vref$); bucketed
reveal pins recover all but $4\%$ of the safe-exploitable gap $V-\vref$
(Prop.~\ref{thm:residual_recovery}).}
\label{tab:g1_residual}
\end{table}

\subsection{Discovery without a Public Anomaly}
\label{app:public_null_audit}

We evaluate Corollary~\ref{cor:public_null_audit} on the same unbucketed-river
public twin, using transfer magnitude $0.6$.  The acquisition policy reserves
$20\%$ of the charged hands for a
blueprint public pilot and divides the remainder among a fixed call-down,
bet-then-call, and shove-then-call audit.  Each pure probe is selected by a
private root tag and paired with enough charged blueprint filler to meet
$\rho=0.5$; the opponent is a stationary fixed behavior that cannot condition
on the tag.  An exact treeplex best response verifies every repaired probe
component.  The integer-rounded aggregate remains floor-safe by convexity
because rounding only reduces each pure-probe share; this is an expected
root-mixture guarantee rather than a realized-payoff guarantee.  Within each
cell, public and active confidence families each receive error budget $0.025$,
and both test exclusion of the same $2000$-iteration CFR$^+$ reference.

The reported evaluation uses $100$ seeds disjoint from the development runs.
Table~\ref{tab:public_null_audit}
shows that the public channel detects neither the twin nor the computed CFR
control at any budget.  The active audit detects the twin on every seed by
$N=3000$, while producing no control false positive.  All $800$ cells report
component floor audits above the requirement.  Thus a positive public anomaly
is sufficient for targeted SAD routing but is not necessary for discovery:
this fixed universal three-probe audit detects the constructed deviation,
which is public-null under the blueprint collection policy.
\begin{table}[t]
\centering\small

\begin{tabular*}{\columnwidth}{@{\extracolsep{\fill}}rccc@{}}
\toprule
Charged $N$
& Active twin
& Active ctl.
& Public, each \\
\midrule
$1{,}000$  & $0\ [0,3.7]$       & $0\ [0,3.7]$ & $0\ [0,3.7]$ \\
$2{,}500$  & $0\ [0,3.7]$       & $0\ [0,3.7]$ & $0\ [0,3.7]$ \\
$3{,}000$  & $100\ [96.3,100]$  & $0\ [0,3.7]$ & $0\ [0,3.7]$ \\
$10{,}000$ & $100\ [96.3,100]$  & $0\ [0,3.7]$ & $0\ [0,3.7]$ \\
\bottomrule
\end{tabular*}
\caption{Discovery of the blueprint-public twin by the fixed three-probe
audit on the unbucketed river.  Entries are percentages among $100$ evaluation
seeds with marginal two-sided 95\% Wilson intervals for each budget and
opponent.  Within each cell, public and active interval families split a
$0.05$ error budget.}
\label{tab:public_null_audit}
\end{table}

\subsection{Audit--Refit--Deploy Loop}
\label{app:audit_refit}

We next compose discovery with independent estimation and robust deployment at
$N=10^6$.  Public-only collection uses all $N$ hands under the blueprint.
Fixed reveal reserves $0.2N$ public hands and assigns $0.8N$ to the
prespecified three-probe portfolio.  SAD instead assigns $0.2N$ to public
collection, $0.4N$ to the fixed universal audit, and, only after exclusion of
the reference, $0.4N$ to an independent refit under the same portfolio.
The audit data never enter the response confidence set.  When the audit does
not fire, the remaining budget is charged to blueprint filler and SAD returns
the blueprint.

Table~\ref{tab:sad_audit_refit} reports $30$ matched seeds.  The audit fires on
all fixed-twin seeds (95\% Wilson interval $[88.6,100]\%$) and no control seed
($[0,11.4]\%$).  SAD's twin certificate is $0.655\;(.008)$, and
its population value is $0.760\;(.009)$; on every control seed it returns the
blueprint.  The fixed-reveal comparator uses twice the confirmatory reveal
budget because its route is prespecified; its absolute certificate is
$0.667\;(.006)$, a gain of $0.732\;(.006)$ over $\vref$.  Thus the composed loop establishes
public-null discovery followed by independent refitting and deployment, while
the fixed comparator measures collection after the portfolio is supplied.

All $60$ records pass strict budget, stream-disjointness, coverage, and seed
audits.  Every pure acquisition component clears the required floor by at
least $2{\times}10^{-6}$ in value units, and every response clears it by at
least $1.55{\times}10^{-4}$.  The wall-capped robust solves return feasible
lower certificates; their remaining gaps do not enter the reported certified
values.
\begin{table*}[t]
\centering
\small

\setlength{\tabcolsep}{4pt}
\begin{tabular}{@{}llrrr@{}}
\toprule
Opponent & Method & Certified & Population & Audit fires \\
\midrule
Public twin
& Public only & $-0.065\;(<.000001)$ & $-0.023\;(0)$ & --- \\
& Fixed reveal & $0.667\;(.006)$ & $0.750\;(.006)$ & --- \\
& SAD audit--refit & $0.655\;(.008)$ & $0.760\;(.009)$ & $30/30$ \\
\midrule
Null control
& Public only & $-0.065\;(<.000001)$ & $-0.065\;(0)$ & --- \\
& Fixed reveal & $-0.065\;(<.000001)$ & $-0.065\;(0)$ & --- \\
& SAD audit--refit & $-0.065\;(0)$ & $-0.065\;(0)$ & $0/30$ \\
\bottomrule
\end{tabular}
\caption{Fixed-board public-twin audit--refit--deploy loop at $N=10^6$.
Entries are absolute subgame values; parentheses are two-sided 95\% Student-$t$
CI half-widths over $30$ matched seeds.  SAD uses disjoint
$0.2N/0.4N/0.4N$ public/audit/refit streams.  Fixed reveal assigns $0.2N$
to public data and $0.8N$ to the same prespecified three-probe portfolio
without discovery.  All acquisition and response floor checks pass.}
\label{tab:sad_audit_refit}
\end{table*}

\subsection{The Deployment Grid}
\label{app:fulldeck_grid}

The deployment grid re-runs the body's E-D protocol on the unbucketed river:
five opponent families $\times$ three probe budgets $\times$ ten seeds
($150$ cells), each cell building its blueprint, simulating deploy and
probe batches, and solving the $\Cpub$, per-combo-$\Cid$, and
bucketed-$\Cid$ robust arms with certified brackets and exact floor
checks.  Cells are independent 2-core jobs (${\sim}150$ core-hours total);
laptop reruns of the twin column reproduce the cluster values within seed
95\% Student-$t$ CIs.  Table~\ref{tab:ed_g1_pf} reports every arm.
Figure~\ref{fig:g1_twin} exhibits three properties: the twin's $\Cpub$ arm is flat at $\vref$
with zero variance across all budgets and seeds (the fiber identity,
observed); its bucketed-$\Cid$ arm (which intersects the per-combo pins)
climbs $0.68\to0.79$, entering the band Table~\ref{tab:g1_residual}
predicts---above the bucketed-design fiber value $R_M=0.783$, below the
floor-safe response value $V=0.815$ that brackets the per-combo fiber; and its per-combo
arm needs roughly $100\times$ the budget of the bucketed arm for the same
certificate because the same reveal budget is divided over roughly
$100\times$ more coordinates.
Floors pass the independent audit in all $150$ cells; the control family certifies
$\vref$ and nothing more.  On the over-fold family the passive robust objective
exceeds the probed objectives ($0.203$ vs.\ $0.135$ at $10^7$), but the passive set
has no coverage guarantee and this number is not a statistical certificate even
though it lies below the ex post population value.  That family's portfolio is
the single call-down probe, which never bets and so never elicits an
over-fold: natural blueprint betting is the better reveal there---the
portfolio lesson of App.~\ref{app:twin}, in reverse.
\begin{table*}[t]
\centering\small
\setlength{\tabcolsep}{5pt}

\begin{tabular}{@{}llrrrrr@{}}
\toprule
Opponent & $N$ & $\Cpub$ & passive obj. & $\Cid$ per-combo & $\Cid$ bucketed & Pop. \\
\midrule
Equilibrium (control) & $10^{5}$ & $-0.065\,\pm 0.000$ & $-0.065\,\pm 0.000$ & $-0.065\,\pm 0.000$ & $-0.065\,\pm 0.000$ & $-0.065$ \\
 & $10^{6}$ & $-0.065\,\pm 0.000$ & $-0.065\,\pm 0.000$ & $-0.065\,\pm 0.000$ & $-0.065\,\pm 0.000$ & $-0.065$ \\
 & $10^{7}$ & $-0.065\,\pm 0.000$ & $-0.065\,\pm 0.000$ & $-0.065\,\pm 0.000$ & $-0.065\,\pm 0.000$ & $-0.065$ \\
\addlinespace[1.5pt]
River over-fold & $10^{5}$ & $0.105\,\pm 0.004$ & $0.110\,\pm 0.004$ & $0.105\,\pm 0.004$ & $0.106\,\pm 0.004$ & $0.182$ \\
 & $10^{6}$ & $0.124\,\pm 0.002$ & $0.138\,\pm 0.001$ & $0.126\,\pm 0.001$ & $0.128\,\pm 0.001$ & $0.188$ \\
 & $10^{7}$ & $0.132\,\pm 0.001$ & $0.203\,\pm 0.002$ & $0.132\,\pm 0.002$ & $0.135\,\pm 0.001$ & $0.188$ \\
\addlinespace[1.5pt]
Revealed call & $10^{5}$ & $-0.028\,\pm 0.002$ & $-0.027\,\pm 0.002$ & $-0.028\,\pm 0.002$ & $-0.027\,\pm 0.002$ & $0.015$ \\
 & $10^{6}$ & $-0.005\,\pm 0.001$ & $0.003\,\pm 0.001$ & $-0.004\,\pm 0.001$ & $-0.003\,\pm 0.001$ & $0.018$ \\
 & $10^{7}$ & $0.003\,\pm 0.001$ & $0.012\,\pm 0.001$ & $0.004\,\pm 0.001$ & $0.006\,\pm 0.001$ & $0.019$ \\
\addlinespace[1.5pt]
Sampled (hash-seeded) & $10^{5}$ & $0.060\,\pm 0.001$ & $0.063\,\pm 0.001$ & $0.060\,\pm 0.001$ & $0.061\,\pm 0.001$ & $0.103$ \\
 & $10^{6}$ & $0.070\,\pm 0.001$ & $0.081\,\pm 0.001$ & $0.073\,\pm 0.001$ & $0.074\,\pm 0.001$ & $0.116$ \\
 & $10^{7}$ & $0.077\,\pm 0.001$ & $0.099\,\pm 0.001$ & $0.083\,\pm 0.001$ & $0.086\,\pm 0.000$ & $0.114$ \\
\addlinespace[1.5pt]
\textbf{Public twin} & $10^{5}$ & $-0.065\,\pm 0.000$ & $0.300\,\pm 0.004$ & $-0.065\,\pm 0.000$ & $0.684\,\pm 0.007$ & $0.793$ \\
 & $10^{6}$ & $-0.065\,\pm 0.000$ & $0.336\,\pm 0.001$ & $0.279\,\pm 0.001$ & $0.756\,\pm 0.004$ & $0.798$ \\
 & $10^{7}$ & $-0.065\,\pm 0.000$ & $0.373\,\pm 0.001$ & $0.658\,\pm 0.000$ & $0.794\,\pm 0.001$ & $0.809$ \\
\bottomrule
\end{tabular}
\caption{Full E-D grid at the unbucketed river: robust objective of each
evidence arm (mean $\pm$ two-sided 95\% Student-$t$ CI half-width, $10$
seeds) and population value of the routed arm.  Public and active objectives
are statistical certificates on coverage; the passive column is an
uncertified diagnostic because its showdown-conditioned set need not cover.
All $150$ cells pass the independent floor audit.
On the public twin, $\Cpub$ and per-combo $\Cid$ are numerically identical
until the pins gain statistical weight; bucketed pins certify from
$N{=}10^5$.}
\label{tab:ed_g1_pf}
\end{table*}

\subsection{A Population of Sampled Opponents}
\label{app:zoo}

This subsection measures public and reveal certification across a broader opponent
population.  Two classes are swept: \emph{sampled} opponents (hash-seeded spot-level
fold/call tendencies with per-hand texture, at three strengths, $16$
members each) and $8$ \emph{undertrained} CFR$^+$ opponents (stopped at
$1$--$50$ iterations; past ${\sim}20$ iterations the exploitable gap is
already below $0.05$---undertrained self-play hardens quickly at this
game)---$56$ cells in all, of which $54$ clear the value threshold and
solved-fiber filter and enter Table~\ref{tab:g1_zoo}.  For each opponent one cell computes the exact unsafe ceiling, the
safe oracle $V$, and the population fiber values $R_{\mathrm{pub}}$ and
$R_{\mathrm{grp}}$ (certified solves, as in App.~\ref{app:twin}).
Table~\ref{tab:g1_zoo} summarizes: a typical sampled opponent's safe-exploitable gap
is about three-quarters certifiable from public aggregates alone, and
${\sim}90\%$ with bucketed reveal pins, consistent with the b2-scale
genericity sweep (App.~\ref{app:lowerbound}).  The twin ($0\%\to96\%$)
provides the low-public endpoint.  Floors verify exactly in every cell.
\begin{table}[t]
\centering\small
\setlength{\tabcolsep}{2.5pt}

\begin{tabular}{@{}lrrrr@{}}
\toprule
Class & $n$ & $V{-}\vref$ med.\ [range] & public & reveal \\
\midrule
sampled, strength $0.3$ & $16$ & $0.08$ [0.06, 0.11] & $0.72$ & $0.91$ \\
sampled, strength $0.5$ & $16$ & $0.15$ [0.10, 0.23] & $0.82$ & $0.95$ \\
sampled, strength $0.8$ & $15$ & $0.23$ [0.17, 0.42] & $0.85$ & $0.92$ \\
undertrained CFR$^+$ & $7$ & $0.48$ [0.06, 1.96] & $0.69$ & $0.87$ \\
\midrule
\textbf{all} & $54$ & $0.17$ [0.06, 1.96] & $0.73$ & $0.91$ \\
\bottomrule
\end{tabular}
\caption{A population of non-constructed opponents at the full unbucketed
river: hash-seeded spot-level fold/call perturbations at three strengths and
undertrained CFR$^+$ opponents.  Per class: opponents with safe-exploitable
value $V-\vref>0.02$, its median [min, max], and the median \emph{share} of
that value certified by the population public fiber ($R_{\mathrm{pub}}$) and
by $3$-probe bucketed reveal pins ($R_{\mathrm{grp}}$).  A typical sampled
opponent is about three-quarters public-certifiable and ${\sim}90\%$
reveal-certifiable (means $0.73$/$0.91$, matching the medians); the public
twin ($0\%$ public, $96\%$ reveal) is the constructed corner of this
distribution, not its typical member.  Three degenerate grouped fibers
enter as monotone brackets and are excluded from the reveal statistics.}
\label{tab:g1_zoo}
\end{table}

\subsection{Diagnostic under Drift}
\label{app:drift}

Stationarity (A1) is required for the paper's pooled statistical certificates; this
subsection is an empirical drift evaluation rather than an extension of those guarantees.  The
floor is unaffected because Theorem~\ref{thm:safety} quantifies over every opponent
plan on every hand.

In Table~\ref{tab:g1_drift}, the twin abandons its leak at the stream's midpoint.
The pooled robust objective continues to rise while its value against the current
post-switch opponent falls: their gap grows from $0.38$ to $0.51$ as $N$ increases.
A trailing-window fit has gap approximately zero at each tested budget, whereas
pooled $\Cpub$ cannot detect the switch because both phases are public-null.  These
values illustrate the cost of violating (A1); they are not confidence guarantees for
the current opponent.  The independent floor audit passes in all $90$ cells.
\begin{table}[t]
\centering\small
\setlength{\tabcolsep}{3.5pt}

\begin{tabular}{@{}llrrrr@{}}
\toprule
$N$ & evidence & objective & Current & gap & viol \\
\midrule
$10^{5}$ & pooled & $0.083$ & $-0.296$ & $+0.379$ & $0.080$ \\
 & windowed & $-0.065$ & $-0.065$ & $-0.000$ & $-0.001$ \\
 & $\Cpub$ pooled & $-0.065$ & $-0.065$ & $-0.000$ & $-0.000$ \\
\addlinespace[1.5pt]
$10^{6}$ & pooled & $0.134$ & $-0.328$ & $+0.462$ & $0.246$ \\
 & windowed & $-0.065$ & $-0.065$ & $-0.000$ & $-0.000$ \\
 & $\Cpub$ pooled & $-0.065$ & $-0.065$ & $-0.000$ & $-0.000$ \\
\addlinespace[1.5pt]
$10^{7}$ & pooled & $0.176$ & $-0.338$ & $+0.514$ & $0.454$ \\
 & windowed & $-0.065$ & $-0.064$ & $-0.000$ & $-0.000$ \\
 & $\Cpub$ pooled & $-0.065$ & $-0.065$ & $-0.000$ & $-0.000$ \\
\bottomrule
\end{tabular}
\caption{Drift evaluation: the public twin reverts to equilibrium at the
stream's midpoint ($10$ seeds; 95\% Student-$t$ CI half-widths for displayed
robust-objective and current-value means are at most $0.018$).  \emph{Current}\ is
realized value against the \emph{current} (post-switch) opponent;
\emph{gap} $=$ robust objective $-$ current value; \emph{viol} is the
current opponent's constraint violation in the arm's set.  The pooled
gap \emph{grows} with budget while the floor holds in
every cell; the trailing-window gap remains ${\approx}0$;
pooled $\Cpub$ never detects the switch (the twin's defining property).
When the leak instead \emph{grows} mid-stream (over-fold $0.15\to0.5$,
$N{=}10^7$), pooling errs the other way---objective $0.180$ vs current value
$0.393$---and the window is again tight ($0.401$ vs $0.438$).}
\label{tab:g1_drift}
\end{table}

\section{Game Instances and Experimental Setup}
\label{app:setup}

\subsection{Turn--River Abstraction}
The turn--river hold'em endgame is evaluated at two card-abstraction granularities, which
the body refers to as \emph{coarse} and \emph{fine}.  The \emph{coarse} instance
(two public river buckets; $5{,}221$ sequences / $2{,}430$
opponent information sets; $\vref=-0.043$) is small enough for repeated population and
finite-sample sweeps.  The \emph{fine} instance (a canonical turn
board, four public river buckets; $9{,}721$ sequences / $4{,}590$ opponent information
sets; $\vref=-0.053$) serves as a higher-granularity confirmation.
All payoffs are in chips with the ante normalized to $1$ (effective stack $10$), so
every reported value is a multiple of the ante; this convention also covers the
unbucketed river of Appendix~\ref{app:fulldeck}.
In both instances the blueprint $x_{\mathrm{bp}}$ is a sequence-form Nash
equilibrium computed numerically to tolerance by a single linear program (HiGHS) over
the realization polytopes; its security value is the $\vref$ quoted above.  All
experiments use these blueprints, and the population ladder pins confidence sets to
solver values, so the
coarse/fine results carry no sampling error.  The exact (unbucketed) river and the
eight-bucket abstraction inflate the sequence-form LP well beyond what the repeated
population and finite-sample sweeps allow, which is why we cap the granularity at the
four-bucket fine instance.

\subsection{Opponent Families}
The opponents are exact-Nash player~2 strategies perturbed by a single structural leak,
spanning the identifiability spectrum from purely revealed to deeply censored.  Writing
the betting line so that a turn round precedes the river (marked by a `/' separator):
\begin{itemize}
  \item \textbf{Equilibrium (control).} The unperturbed Nash strategy; a safe response
        must neither lose to it nor infer a spurious gain.
  \item \textbf{River over-fold (uniform).} On river fold-lines, shift weight $w$ onto
        folding for all hands (graded $w\in\{0.05,\dots,0.8\}$); a revealed-adjacent leak
        observable in the public line.
  \item \textbf{River over-fold (strong).} The same river leak but only on strong hands
        (top rank $\ge 9$), weight $0.6$.
  \item \textbf{Revealed call (strong).} On river lines where both fold and call are
        legal, strong hands shift weight $0.6$ onto calling: a purely \emph{revealed}
        leak with no censoring, the high-value/high-reveal-cost corner of the frontier.
  \item \textbf{Turn over-fold (deep).} On turn fold-lines (before the river), shift
        weight $w$ onto folding (graded $w\in\{0.3,0.5,0.7\}$): the deeply
        \emph{censored} leak, since the fold ends the hand before the type is revealed.
\end{itemize}
Different experiments use different graded members, so oracle values are comparable
within a table but not across tables: the population ladders
(Tables~\ref{tab:ladder},~\ref{tab:ladder_fine}) use the uniform river over-fold at
$w=0.5$ and the deep turn over-fold at $w=0.5$, while the deployment and baseline suites
(Tables~\ref{tab:deploy},~\ref{tab:deploy_pf},~\ref{tab:baselines}) use the stronger
members $w=0.8$ (river) and $w=0.7$ (turn) so that finite-sample effects act on leaks
worth exploiting; the strong river over-fold and revealed-call members ($0.6$, strong
hands) are shared.  The remaining graded variants populate the diagnostic suite used for
the signal panel (Appendix~\ref{app:gate_signals}).

\section{Reproducibility}
\label{app:repro}

All bucketed-endgame experiments run on a single workstation (Apple~M3~Max, 14 cores
[10P\,+\,4E], 36~GB~RAM; macOS~15.7).  Orchestration is in Python~3.14 (compatible with
$\ge 3.11$); the heavy computation---exact Nash blueprints, the robust safe-response LPs
over $\Cpub$ and $\Cid$, and the LP-free floor verifier---runs in a Rust core
(edition~2021, \texttt{rustc}~$\ge 1.83$) built on HiGHS~2.2.0 and \texttt{ndarray}~0.17
and exposed to Python through PyO3~0.29, with \texttt{numpy}~$\ge 2.1$,
\texttt{scipy}~$\ge 1.14$, and \texttt{pebble}~$\ge 5$ for analysis and per-cell
parallelism.  The unbucketed-river experiments of App.~\ref{app:fulldeck} use the same Rust core as a
standalone cell binary, run as independent 2-core jobs on an AMD EPYC (Rome) cluster
($446$ cells $=150$ deployment $+150$ passive $+56$ zoo $+90$ drift; ${\sim}400$
core-hours; one JSON checkpoint per cell); the remaining unbucketed-river benches are
single-core workstation runs, and workstation reruns of the twin column reproduce the
cluster values within their two-sided 95\% Student-$t$ CIs. Finite-sample runs use
consecutive integer seeds from a base of $2026$, matched across the compared methods.
The primary $N=10^6$ grid uses seeds $2026$--$2035$ and contains $164$ cells:
$4$ opponents $\times$ [$4$ empirical arms $\times$ $10$ seeds $+$ one population
oracle].  The complementary routing grid contains $488$ cells across three reveal budgets;
the two lower-budget matched-acquisition runs add $240$ empirical cells, for $404$
unique cells across the primary and budget-sweep grids;
the reported-protocol capacity-only ablation adds $30$ cells;
the public/reveal allocation sensitivity adds $80$ reported-protocol cells;
the fixed non-fold acquisition comparator adds $40$ matched-seed cells;
the tighter-floor controller comparison adds $60$ matched-seed cells;
the replicated method table uses seeds $2026$--$2050$, and the unbucketed-river grid
uses seeds $1$--$10$.  The public-null audit fixes budgets
$N\in\{1000,2500,3000,10000\}$ and uses evaluation seeds $1001$--$1100$,
matched between the blueprint-public twin and computed CFR control, for $800$
cells.  Its detection and false-positive rates use two-sided 95\% Wilson
intervals separately at each budget and opponent.  The four-target Leduc
joint-routing benchmark is deterministic; its reported interval is the
binary64 Frank--Wolfe optimization bracket under the stated $10^{-7}$
feasibility tolerance, with outward-rounded, dual-feasible weighted-reach
upper bounds for each linear subproblem; it is not a sampling interval.  The population ladder, the necessity
construction, and the population-fiber residuals are deterministic exact computations and
carry no seed.  Finite-seed means use two-sided 95\% Student-$t$ CIs; exact-zero
controls have zero-width intervals.  Paired method comparisons use exact Wilcoxon
signed-rank tests with Holm correction across each stated comparison family.

The reveal-capacity audit fixes the $60$ largest non-fold mass deviations of
the deep-turn over-fold opponent and solves all target--budget cells at
$\rho\in\{0.1,0.5\}$.  Twelve targets require a constrained revealing suffix,
giving $24$ additional fixed-action solves.  Across the $120$ target--budget
cells, the largest primal--certificate width is $6.33{\times}10^{-7}$ and the
largest absolute negative floor margin is $2.10{\times}10^{-8}$, both within
the stated $10^{-6}$ audit tolerance.  Its calibration uses
$N\in\{10^5,3{\times}10^5,10^6\}$ and seeds $9101$--$9105$, for $360$
cells and $168$ million simulated hands.  The audit--refit--deploy loop uses
seeds $7001$--$7030$ for the public twin and null control, producing $60$
matched records at $N=10^6$.  It assigns fractions $0.2/0.4/0.4$ to
public/audit/refit data, gives the audit and response sets separate
$\delta=0.05$ budgets, caps each robust solve at $60$ seconds, and requires
each acquisition component to clear the floor by $10^{-6}$ in value units.

\codeavailabilityblock
The empirical-audit supplement contains $610$ sampled-controller records---$570$ with
floor-safe collection and $40$ for the deliberately unconstrained fixed
non-fold comparator---all $800$ public-null audit records, all $446$
unbucketed-river checkpoints, $60$ audit--refit--deploy records, the $120$-cell
capacity audit, the $360$-cell hand-level calibration, and four deterministic
population-oracle records, with configuration fingerprints,
audit flags, and a standard-library script that regenerates
the reported tables, figure source, uncertainty summaries, and paired tests.
No external dataset is used.  Each deployment cell records both sampling batches and
seeds, the frozen route, certified and realized value, confidence-set feasibility, and
the independent floor-audit result.  Public screening uses simultaneous Hoeffding
bounds; response constraints use empirical-Bernstein intervals under the shared
$\delta=0.1$ budget.

\paragraph{Hyperparameters.}
The safety budget $\rho\in\{0,0.1,0.5\}$ is swept in the capacity studies.  The
primary controller uses total budget $N=10^6$; its budget study uses
$N\in\{10^5,3\!\times\!10^5,10^6\}$, while the complementary reveal-only diagnostic uses
$N\in\{10^3,10^4,10^5\}$.  The public fraction $0.2$, confidence level $\delta=0.1$,
screening level $\delta_r=0.1$, target threshold $10^{-6}$, and probe tie-break
weight $B=10^6$ are fixed across opponents and budgets.  The latter two are numerical
safeguards rather than tuned quantities; every reported comparison uses the same
confidence allocation.  The allocation sensitivity additionally evaluates public
fractions $0.5$ and $0.8$ at $N=10^6$.
Settings follow their statistical or computational roles rather than reported outcomes:
safety and hand budgets are evaluation axes, while the primary $0.2$ public fraction
preserves an independent screen and assigns most hands to reveal acquisition.  The
public-null audit uses a $0.2/0.8$ public/audit split, three equally weighted fixed probes,
and error budget $0.025$ per cell for each public and active confidence family.  The
public twin uses transfer $0.6$ on the fixed board and small-bet river rules, with a
$2000$-iteration CFR$^+$ reference.  The audit--refit loop uses the same fixed
call-down, bet-then-call, and shove-then-call portfolio, with no audit sample
reused for response fitting.  Joint routing uses a $0.2$ audit share, uniform target
weights and continuation rates, and one-thread HiGHS simplex with presolve.

\end{document}